\theoremstyle{plain}
\newtheorem{theorem}{Theorem}[section]
\newtheorem{lemma}[theorem]{Lemma}
\newtheorem{proposition}[theorem]{Proposition}
\theoremstyle{definition}
\newtheorem{example}[theorem]{Example}
\newtheorem{definition}[theorem]{Definition}
\newtheorem{remark}[theorem]{Remark}
\newtheorem{claim}[theorem]{Claim}
\newcommand{\added}[1]{#1}
\renewcommand{\restriction}{\mathord{\upharpoonright}}
\DeclareMathOperator*{\argmin}{arg\,min}
\newcommand{\set}[1]{\{#1\}}
\newcommand{\pr}{\mathbb{P}}
\newcommand{\st}{\ :\ }
\newcommand{\src}{\sigma}
\newcommand{\tgt}{\tau}
\newcommand{\source}[1]{\mathrm{source}(#1)}
\newcommand{\target}[1]{\mathrm{target}(#1)}
\newcommand{\inter}{\mathrm{Int}}
\newcommand{\nat}{\mathbb{N}}
\newcommand{\two}{\mathbf{2}}
\let\oldfrac\frac
\renewcommand{\frac}[2]{%
  \mathchoice
    {\oldfrac{#1}{#2}}
    {#1/#2}
    {#1/#2}
    {#1/#2}
}
\newcommand{\ie}{{\em i.e.}\xspace}
\newcommand{\eg}{{\em e.g.}\xspace}
\newcommand{\ap}{{\em a~priori}\xspace}
\newcommand{\mso}{{\sc mso}\xspace}
\newcommand{\msonab}{{\sc mso}+$\nabla$\xspace}
\newcommand{\msou}{{\sc mso+u}\xspace}
\newcommand{\bld}[1]{\ifmmode \mathbf{#1} \else \textbf{#1}\fi}
\newcommand{\io}{\bld{io}\xspace}
\newcommand{\fo}{\bld{fo}\xspace}
\newcommand{\itv}[1]{\mathcal{#1}}
\newcommand{\ifi}{\itv{I}}
\newcommand{\ifj}{\itv{K}}
\newcommand{\ifk}{\itv{J}}
\newcommand{\vs}[1]{\mathbf{#1}\xspace}
\newcommand{\vsf}{\vs{f}}
\newcommand{\vsg}{\vs{g}}
\newcommand{\vsh}{\vs{h}}
\newcommand{\suc}{\mathrm{Suc}}
\newcommand{\pre}{\mathrm{Pre}}
\newcommand{\length}{\mathrm{Len}}
\newcommand{\eqdef}{\stackrel{\text{def}}=}
\newcommand{\comment}[1]{}
\newcommand{\fun}[3]{\ensuremath{#1\colon #2 \to #3}}
\theoremstyle{plain}
\newtheorem*{rep@theorem}{\rep@title}
\newcommand{\newreptheorem}[2]{%
\newenvironment{rep#1}[1]{%
\def\rep@title{#2 \ref{##1}}%
\begin{rep@theorem}}%
{\end{rep@theorem}}}
\title{\msonab is undecidable}
\author{
  \IEEEauthorblockN{Miko{\l}aj Boja{\'n}czyk\qquad Edon Kelmendi\qquad Micha{\l} Skrzypczak}
  
  \IEEEauthorblockA{University of Warsaw}
}
\begin{document}
 \IEEEoverridecommandlockouts
  \IEEEpubid{\makebox[\columnwidth]{978-1-7281-3608-0/19/\$31.00~
  \copyright2019 IEEE \hfill} \hspace{\columnsep}\makebox[\columnwidth]{ }}
\maketitle

\begin{abstract}
This paper is about an~extension of monadic second\=/order logic over the full binary tree, which has a~quantifier saying ``almost surely a branch $\pi \in \set{0,1}^\omega$ satisfies a~formula $\varphi(\pi)$''.  This logic was introduced by Michalewski and Mio; we call it 
\msonab following notation of  Shelah and Lehmann. The logic \msonab subsumes many qualitative probabilistic formalisms, including qualitative probabilistic {\sc ctl}, probabilistic {\sc ltl}, or parity tree automata with probabilistic acceptance conditions. 
We show that it   is undecidable to check if  a given~sentence of \msonab is true in the full binary tree\footnote{Independently and in parallel another proof of this result was given employing different techniques in \cite{berthon19_monad_secon_order_logic_with}.}.

\end{abstract}

\section{Introduction}
\label{sec:introduction}

Probability has been present in the theory of verification since the very beginning. An early example~\cite{vardi1985automatic,vardi1986automata} is the following question: given an {\sc ltl}  formula and a~Markov chain,  decide if almost all (in the sense of measure) runs of the system satisfy the formula. Another early example~\cite{hart1986probabilistic} is: given a~formula of probabilistic {\sc ctl}, decide if there is some Markov chain where the formula is true (the complexity of the problem is settled in~\cite{brazdil2008satisfiability}). The same question for the more general logic {\sc ctl}$^\ast$ is answered in \cite[Theorem~1 and~2, and Section~15]{lehman_time_and_chance}. Other variants of these logics have been considered in \cite{hansson1994logic,baier1998model}. More recent work tries to  synthesize controllers for probabilistic systems, see e.g.~\cite[Theorem 15]{berthon2017threshold}.


Is there a~master theorem, which unifies all decidability results about probabilistic logics? An inspiration for such a~master theorem would be Rabin's famous result~\cite{rabin_s2s} about decidability of monadic second\=/order logic over infinite trees.  Rabin's theorem  immediately gives most  decidability results (if not  the optimal complexities)  about temporal logics, including satisfiability questions for (non\=/probabilistic) logics like {\sc ltl}, {\sc ctl}$^*$ and the modal $\mu$\=/calculus.  Maybe there is a~probabilistic extension of Rabin's theorem, which  does the same for probabilistic logics?

Quite surprisingly, the  question about a~probabilistic version of Rabin's theorem has only been asked recently, by Michalewski and Mio~\cite{michalewski2016measure}. It is rather easy to see that any decidable version of {\sc mso} must be  qualitative rather than quantitative (\ie~probabilities can be compared to $0$ and $1$, but not to other numbers), since otherwise one could express  problems like ``does a~given probabilistic automaton accept some word with probability at least $0.5$'', which are known to be undecidable~\cite{paz1971introduction}, see also~\cite{gimbert2010probabilistic}. Even when probabilities are qualitative, one has to be careful to avoid undecidability. For example, the following problem is undecidable~\cite[Theorem~7.2]{baier2012probabilistic}: given a~B\"uchi automaton, decide if there is some $\omega$\=/word that is accepted with a~non\=/zero probability (assuming that runs of the automaton are chosen at random, flipping a~coin for each transition). This immediately implies~\cite[Theorem~1]{michalewski2016measure} undecidability for a~natural probabilistic extension of {\sc mso}, which has a~quantifier of the form ``there is a~non\=/zero probability of picking a~set $X$ of positions that satisfies $\varphi(X)$'', both for infinite words and infinite trees. 

Michalewski and Mio propose a~different probabilistic extension of {\sc mso}, which does not admit any straightforward reductions from known undecidable problems, like the ones for probabilistic B\"uchi automata mentioned above. Their idea---which only makes sense for trees and not words---is to extend {\sc mso} over the infinite binary tree by a~quantifier which says that a~property $\varphi(\pi)$ of branches is true almost surely, assuming the coin-flipping measure on infinite branches in the complete binary tree. The logic proposed by Michalewski and Mio is obtained from Rabin's {\sc mso} by adding the probabilistic quantifier for branches. We write \msonab for this logic\footnote{In~\cite{michalewski2016measure} the quantifier is denoted by $\forall^{=1}_\pi$, but in this paper we denote it by $\nabla$, following the notation used by Shelah and Lehmann in~\cite{lehman_time_and_chance}.}. As explained in~\cite{michalewski2016measure},  \msonab directly expresses qualitative problems like: model checking Markov chains for {\sc ltl} objectives, their generalisations such as~2{\small $\oldfrac{1}{2}$}
  player games with $\omega$\=/regular objectives, or emptiness for various automata models with probability  including the qualitative tree languages from~\cite{carayol2014randomization}. These results naturally lead to the question~\cite[Problem~1]{michalewski2016measure}: is the logic \msonab decidable?

A~positive result about \msonab was proved in~\cite{bojanczyk2016thin,bojanczyk2017emptiness}: the weak fragment of \msonab is decidable. In the weak fragment, the set quantifiers $\forall X$ and $\exists X$ of {\sc mso} range only over finite sets\footnote{Actually, the papers prove decidability for a~stronger logic, where set quantifiers range over ``thin'' sets, which are a~common generalisation of finite sets and infinite branches.}. The decidability proof uses automata: for every formula of the weak fragment there is an~equivalent automaton of a~suitable kind~\cite[Theorem~8]{bojanczyk2016thin}, and emptiness for these automata is decidable~\cite[Theorem~3]{bojanczyk2017emptiness}.  Combining these results, one obtains decidable satisfiability\footnote{For weak logics the satisfiability problem ``is a~given formula true in some infinite labelled binary tree'' is in general more difficult than the model checking problem ``is a~given formula true in the unlabelled binary tree''. For general {\sc mso}, this difference disappears, as set quantification can be used to guess labellings.} for the weak fragment of \msonab. The weak fragment of \msonab is still powerful enough to subsume problems like satisfiability for qualitative probabilistic {\sc ctl}${^*}$. Nevertheless, the decidability of the full logic \msonab remained open.

This paper proves that the full logic \msonab is undecidable, \ie~it is undecidable if a~sentence of the logic is true in the full binary tree, thus answering~\cite[Problem~1]{michalewski2016measure}. Independently and in parallel another proof of this result is given in \cite{berthon19_monad_secon_order_logic_with}, by proving that the emptiness problem of qualitative universal parity tree automata is undecidable. 




\section{The logic}
\label{sec:the-logic}
In this section we describe the logic \msonab. 

Our logic is an extension of Rabin's \mso over the full binary tree, so we begin by describing that. 
We write $\two$ for the set $\set{0,1}$. The full binary tree is the tree where nodes are identified with $\two^\ast$, finite words over the alphabet $\set{0,1}$. The ancestor order $\leq$ is the prefix relation. We write $|x| \in \nat$ for the length of a bit sequence $x \in \two^*$.  To express properties of the  full binary tree, we use monadic second-order logic (\mso). This logic which has two types of variables
\begin{align*}
  \underbrace{X,Y,Z,\ldots}_{\text{sets of nodes}} \qquad \underbrace{x,y,z,\ldots}_{\text{nodes}}
\end{align*}
which can be quantified  existentially and universally. To compare nodes and sets of nodes we use predicates
\begin{align*}
  x \in X \qquad \underbrace{x \le y}_{\text{ancestor}} \qquad \underbrace{x=y0}_{\text{left child}} \qquad \underbrace{x=y1}_{\text{right child}}.
\end{align*}
By 
Rabin's Theorem,  there is an algorithm which inputs a sentence of \mso, and says if the sentence is true in the full binary tree, see~\cite{thomas_languages} for a survey of the topic. 

The idea behind \msonab is to extend \mso with probabilistic quantification over branches\footnote{There is an alternative way of  adding probability to \mso, namely by having a quantifier which says that $\varphi(X)$ is true almost surely, assuming that  the set of nodes $X$ is chosen uniformly at random. This logic is already known to be undecidable~\cite[Theorem 1]{michalewski2016measure}, even for $\omega$-words, thanks to a straightforward reduction from emptiness for probabilistic B\"uchi automata with an almost sure acceptance condition~\cite{baier2012probabilistic}.}. A \emph{branch} is defined to be an element of $\two^\omega$.  Probability for sets of branches is measured using the \emph{coin\=/tossing} measure on $\two^{\omega}$,  which is the unique complete probabilistic  measure $\pr$ that satisfies
\[
  \pr\big[x\cdot\two^{\omega}\big]=2^{-|x|},
\]
for all $x\in\two^\ast$. The logic \msonab extends \mso by adding a new type of variable
\begin{align*}
  \underbrace{\pi,\sigma,\tau,\ldots}_{\text{branches}} 
\end{align*}
along with a membership test $x \in \pi$ (for membership tests, a branch is identified with the set of nodes that are its finite prefixes). To bind branches, the logic \msonab has a probabilistic quantifier
\begin{align*}
  \nabla \pi.\ \phi(\pi),
\end{align*}
which says that  there exists a set $R\subseteq \two^{\omega}$, such that $R$ has defined measure equal to 1, and every branch in $R$ satisfies $\phi$.  Intuitively, it means that $\phi(\pi)$ holds for a~\emph{randomly chosen} branch. This completes the definition of \msonab. 

We now give some examples that  illustrate the expressive power of \msonab.

\begin{example}\label{ex:fat-cantor}  This example is from~\cite[Section 3]{bojanczyk2016thin}.
    Consider the formula
\begin{align*}
    \exists X\ \begin{cases}
        \underbrace{\forall x.\ \exists y.\ (y \ge x \land y \in X)}_{\text{every node has a descendant in $X$}}\\
        \underbrace{\neg \nabla \pi.\ (\exists x. \ x \in \pi \wedge x \in X)}_{\text{with positive probability, $\pi$ avoids $X$}}
    \end{cases}
\end{align*}
This sentence is true. To see why, consider 
\begin{align*}
    X = \bigcup_{n \ge 2} X_n \qquad \text{where }X_n = \set{x0^{n} :  x \in \two^n}.
\end{align*}
Every node $x$ in the full binary tree has a descendant in $X$, namely $x0^{|x|}$. The probability of a branch visiting $X_n$ is   $1/2^n$, and therefore the probability of visiting $X$ is at most 
\begin{align*}
    \frac 1 2 = \sum_{n \ge 2} \frac 1 {2^n}.
\end{align*}
(In fact, the probability of visiting $X$ is smaller, because the events of visiting $X_n$ are not independent.)
It follows that the probability of avoiding $X$ is positive, and therefore $X$ makes the formula true.  One can show that  there is no set $X$ which makes the formula true and which is regular when seen as a language $X \subseteq \two^*$. This implies that  the family of  sets $X$ which make the formula true cannot be defined in \mso. 
\end{example}

The above example shows that  for formulas with free set  variables -- which can be seen as describing languages of labelled trees -- the logic \msonab is strictly more expressive than \mso.

\begin{example}\label{ex:carayol} Following~\cite{carayol2014randomization}, consider a nondeterministic parity automaton on infinite trees, where a run is considered accepting if the parity condition is satisfied almost surely. The existence of an accepting run can be easily expressed in the \msonab, by guessing a labelling of the tree with states and then checking the acceptance condition using the quantifier $\nabla$. The same idea works for more general acceptance conditions, e.g.~a conjunction of two acceptance conditions: an almost surely parity condition, and a usual (all paths) parity condition. Such automata are considered in~\cite{bojanczyk2016thin,bojanczyk2017emptiness,berthon2017threshold}.
\end{example}

\begin{example}\label{ex:pctl}
    Consider the following variant of qualitative probabilistic {\sc ctl}. This logic is used to define properties of labelled trees $t : \two^* \to \Sigma$. The atomic formulas check the label of the root, and Boolean combinations are allowed. There is a probabilistic version of the until operator: if $\varphi_1,\varphi_2$ are already defined formulas then also
    \begin{align*}
         \nabla (\varphi_1 \mathsf U \varphi_2),
    \end{align*}
    is a formula, which is true in a tree if almost surely a branch $\pi$ has the property that for some $y \in \pi$, the subtree of $y$ satisfies $\varphi_2$,  and for all $x < y$, the subtree of $x$   satisfies $\varphi_1$. For every formula $\varphi$ of this logic, one can easily write a sentence of \msonab that is true if and only if $\varphi$ is true in some labelled tree. The same kind of translation would work for many generalisations of the logic, e.g.~one could add an operator that checks if all (not almost all) paths satisfy a given property, or parity counting, etc.
\end{example}

The formulas in Examples~\ref{ex:fat-cantor}, \ref{ex:carayol} and~\ref{ex:pctl} are all of the form 
\begin{align}\label{eq:sigma1}
    \exists X_1 \ldots \exists X_n.\  \varphi(X_1,\ldots,X_n)
\end{align}
where $\varphi$ uses only $\nabla$, quantification over finite sets of nodes, and  (non-probabilistic) quantification over branches. By~\cite{bojanczyk2016thin,bojanczyk2017emptiness}, the truth of such sentences is  decidable\footnote{In the paper~\cite{berthon19_monad_secon_order_logic_with}, which gives an alternative proof of the main result in this paper, it is shown that universality is undecidable for the tree automata described in Example~\ref{ex:carayol}. It follows that the theory of \msonab is undecidable even after prepending universal set quantifiers in~\eqref{eq:sigma1}. Our undecidability proof uses formulas with a more complex quantifier structure.}. The purpose of this paper is to prove that, if we allow formulas that are more complicated  than~\eqref{eq:sigma1}, then the  logic becomes undecidable.


\section{Undecidability}
\label{sec:undecidability}
The main result of this paper is undecidability of the logic \msonab, as stated in the following theorem.

\begin{theorem}
  \label{thm:main} There is no algorithm which decides whether or not a given sentence of \msonab is true in the full binary tree.
\end{theorem}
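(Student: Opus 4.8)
The plan is to encode an undecidable problem — most naturally, the halting problem for Turing machines, or equivalently non-emptiness of the intersection of a suitable family of constraints — into the truth of an \msonab sentence in the full binary tree. The general strategy for such results is: (1) identify a combinatorial object that \msonab can "quantify into existence" via set quantifiers, (2) show that \msonab can express, using the $\nabla$ quantifier together with ordinary \mso, a correctness condition that is satisfiable only when the encoded computation halts (or only when some Diophantine/tiling/machine problem has a solution), and (3) conclude that deciding truth of \msonab sentences would decide the undecidable problem. Since Example~\ref{ex:fat-cantor} already shows that $\nabla$ lets us talk about sets $X$ whose "branch-measure profile" is non-regular, the key new ingredient must be a way to use $\nabla$ to *force* a set $X$ to carry genuinely unbounded arithmetic information — something \mso alone provably cannot do on the binary tree.

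\textbf{Key steps, in order.} First I would set up an encoding in which a guessed set (or tuple of sets) $X \subseteq \two^*$ represents the run of a counter machine, or a tiling of a quadrant, laid out along the tree: the natural idea is to use the measure-one quantifier to compare the *sizes* of regions, since probabilities of the form $2^{-|x|}$ turn "being at depth $n$" into the quantity $2^{-n}$, and summing such quantities over an antichain lets \msonab reason about whether a series of powers of two sums to something $<1$, $=1$, or $>1$. Concretely, I expect the proof to build a formula asserting the existence of sets that mark, level by level, the configurations of a machine, where the crucial arithmetic constraint (a counter value being incremented, decremented, or tested for zero) is enforced by a $\nabla$-statement comparing the measure of the set of branches that "see" one marked region against the measure of those that see another. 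Second, I would verify the two directions of correctness: if the machine halts, one exhibits the explicit sets (as in the $X = \bigcup_n X_n$ construction of Example~\ref{ex:fat-cantor}) and checks the $\nabla$-conditions hold with exact measures; conversely, if some set makes the sentence true, one extracts from the measure constraints that the marked regions must encode a legitimate halting computation — this direction uses that a $\nabla$-condition on a set of branches, being a statement about an honest measure, cannot be "cheated" by irregular sets. Third, I would assemble everything into a single closed \msonab sentence $\psi_M$, computable from the machine $M$, with $\psi_M$ true in the full binary tree iff $M$ halts, and invoke undecidability of halting.

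\textbf{The main obstacle} I anticipate is the heart of the matter: getting $\nabla$ to express *exact* arithmetic relations rather than merely inequalities. A single $\nabla$-statement only gives a one-sided measure-one assertion ("almost every branch satisfies $\phi$"), and its negation gives "a positive-measure set of branches satisfies $\neg\phi$" — neither directly pins a measure to a specific value. So the delicate design work is to combine several such assertions (and ordinary \mso constraints forcing the sets $X$ to have a rigid level-by-level shape, e.g. each level an antichain of a controlled form) so that the only way all of them hold simultaneously is for certain dyadic sums to hit targets exactly, thereby simulating the equality tests of a two-counter machine. I expect this is where the more complex quantifier structure promised after~\eqref{eq:sigma1} becomes essential — alternations between $\exists X$, $\nabla\pi$, and $\forall$ over nodes/branches — and where most of the technical length of the real proof lives. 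A secondary obstacle is ensuring the guessed computation is *infinite-but-eventually-periodic-free* in the right way, i.e. ruling out spurious "runs" that exploit the tree's self-similarity; this is handled by the \mso part of the formula, but must be meshed carefully with the probabilistic part so that the measure bookkeeping still goes through.
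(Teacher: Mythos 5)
Your overall reduction shape---guess sets encoding a run of a counter machine, enforce the arithmetic consistency conditions with the help of $\nabla$, and reduce from the halting problem---does match the paper's final step (Section~\ref{sec:two-counter}). But the heart of the matter, which you yourself flag as ``the main obstacle,'' is left unresolved, and the mechanism you propose for it is the wrong one. You suggest enforcing counter increments and zero tests by making $\nabla$ compare measures of branch-sets, i.e.\ by arranging that certain dyadic sums $\sum 2^{-|x|}$ hit exact targets. The quantifier $\nabla$ is purely qualitative: each use asserts only ``probability $1$'' and its negation asserts only ``positive probability'' of a definable branch property; there is no way, within a single formula or any Boolean combination you sketch, to state that the measure of one event equals the measure of another, or equals a specific dyadic value. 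You acknowledge you do not know how to ``pin a measure to a specific value'' with such one-sided assertions, and indeed no combination of them does so in the way your encoding needs; so the proposal as written does not yield a working sentence $\psi_M$.

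The paper's solution uses an entirely different phenomenon: it is asymptotic, not metric. One encodes numbers as lengths of pairwise disjoint \emph{intervals} (finite paths) and exploits the dichotomy that if the intervals met along branches have bounded length then almost every branch hitting their sources infinitely often also hits their targets infinitely often, whereas if the lengths grow then targets are avoided with positive probability (Lemma~\ref{lem:liminf}, via the ``record breaker'' argument). This lets \msonab express boundedness-type properties, and then---importing the vector-sequence/asymptotic-mix machinery from the \msou undecidability proof (Lemma~\ref{lem:msoulemma}, Proposition~\ref{pro:constant})---one shows the key technical result, Theorem~\ref{thm:const-def}: \msonab can say that the interval lengths along a random branch are \emph{eventually constant} almost surely. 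Equality and successor tests on counters are then done by shifting interval targets by one node and asking that the merged family be eventually constant (Lemma~\ref{lem:tsts}); no exact measure comparison ever occurs. So the gap in your proposal is precisely the missing idea that makes the proof go: replacing exact measure arithmetic (unavailable to a qualitative quantifier) by boundedness/eventual-constancy of interval lengths, which $\nabla$ \emph{can} detect.
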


The main ingredient in the undecidability proof is showing that \msonab can express a certain asymptotic counting property. Once the counting property has been defined, a routine encoding of Minsky machines can be used to establish undecidability. We now describe  this asymptotic counting property.

Define an~\emph{interval} to be a~finite path in the complete binary tree, \ie~a~set of the form
\begin{align*}
 \set{z : x \le z \le y} \qquad \text{for some $x,y \in \two^*, x < y$}.
\end{align*}
The nodes $x$ and $y$ are called the \emph{source} and \emph{target} of the interval, respectively. The \emph{interior}
of the interval $[x,y]$ is the set
\begin{align*}
  \inter([x,y])\eqdef\set{z : x < z < y}. 
\end{align*}
The \emph{length} of an interval is the cardinality of its interior. 

Let $\ifi$ be a family of intervals. If all intervals in $\ifi$ are  pairwise disjoint, then the family  is uniquely determined by the sets 
\begin{align*}
    \source \ifi, \target \ifi \subseteq \two^*
\end{align*}
of its sources and targets. We  only consider families of intervals that are pairwise disjoint, and therefore from now on, when we say \emph{family of intervals}, we mean a family of pairwise disjoint intervals. We write $\ifi, \ifj, \ifk$ for such families.


For a family of intervals $\ifi$ and a node $x$ that is the source of some interval $\ifi$, we write~$\ifi(x)$
for the length of the corresponding interval (which is unique by assumption that all intervals are pairwise disjoint). If $\pi$ is a branch, then we write $\ifi(\pi)$ for the sequence 
\begin{align*}
    \ifi(x_1),\ifi(x_2),\ldots
\end{align*}
where $x_1,x_2,\ldots$ are all of the sources of $\ifi$ that appear in $\pi$, ordered by increasing depth. See Figure~\ref{fig:seq}.
\begin{figure}[h]
  \includegraphics[width=0.48\textwidth]{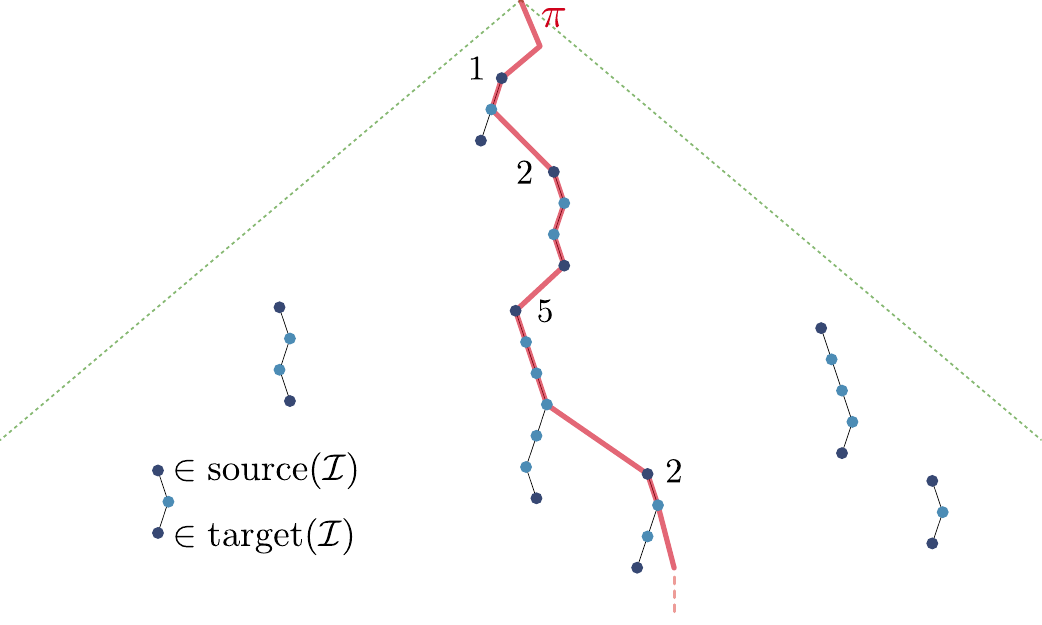}
  \caption{ Here we have the sequence $\ifi(\pi)=1,2,5,2,\ldots$. }
  \label{fig:seq}
\end{figure}

The sequence $\ifi(\pi)$ is a sequence of natural numbers, whose length may be  finite or infinite.
  We say that a sequence of natural numbers is \emph{eventually constant} if it has infinite length, and it has the same number on all but finitely many positions. Here is an example:
  \begin{align*}
      1,2,5,2,1,\overbrace{2,2,2,2,2,\ldots}^{\text{only $2$ }}.
  \end{align*}
  If $\ifi$ is a family of pairwise disjoint intervals, then we write
  \[\pr\big[\text{$\ifi$ is eventually constant}\big]\]
  for the probability of choosing a branch $\pi$ such that $\ifi(\pi)$ is eventually constant. The main technical result of this paper is that \msonab can express that this probability is 1. The family $\ifi$ is represented by its sources and targets.

\newcommand{\thmConstDef}{
    There is a formula $\varphi(X,Y)$ of \msonab which is true if and only if
    \[\pr\big[\text{$\ifi$ is eventually constant}\big]=1\]
for some\footnote{The family of intervals $\ifi$ is unique, if it exists.} family of intervals $\ifi$ where 
    \begin{align*}
        X = \source \ifi \qquad Y = \target \ifi.
    \end{align*}
    
}

\begin{theorem}
\label{thm:const-def}
\thmConstDef
\end{theorem}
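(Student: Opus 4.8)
The plan is to write $\varphi(X,Y)$ as a conjunction $\varphi_{\mathrm{str}}(X,Y)\wedge\varphi_{\mathrm{ec}}(X,Y)$. The conjunct $\varphi_{\mathrm{str}}$ is a routine \mso formula asserting that $X$ and $Y$ are the sources and targets of a family of pairwise disjoint intervals: it declares $X\cap Y=\emptyset$, defines for $x\in X$ the candidate target $m(x)$ as the unique $y\in Y$ with $x<y$ and no node of $X\cup Y$ strictly between $x$ and $y$, and then states that $m$ is a bijection from $X$ onto $Y$ all of whose intervals $[x,m(x)]$ are pairwise disjoint; each of these is expressible in \mso. When $\varphi_{\mathrm{str}}$ holds, $\ifi:=\set{[x,m(x)]:x\in X}$ is the unique family of intervals with $X$ and $Y$ as its sets of sources and targets, and $\ifi(x)=|m(x)|-|x|-1$ for $x\in X$.

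For $\varphi_{\mathrm{ec}}$, note that a sequence of natural numbers is eventually constant exactly when it is infinite and only finitely many of its consecutive pairs of entries differ. So, writing $x^{+}$ for the next source of $\ifi$ on a branch $\pi$ below $x$ (an \mso formula with $\pi$ as a parameter, defined whenever $\pi$ contains a source strictly below $x$), and assuming $\varphi_{\mathrm{str}}$ so that $\ifi$ is defined, we have $\pr\big[\text{$\ifi$ eventually constant}\big]=1$ if and only if $\varphi_{\mathrm{ec}}(X,Y)$ holds, where $\varphi_{\mathrm{ec}}$ is $\nabla\pi$ applied to the conjunction of ``$\pi$ contains infinitely many sources of $\ifi$'' and ``there is a node $z$ on $\pi$ such that every source $x\in X$ on $\pi$ with $x\ge z$ satisfies $\mathrm{SameLen}(x,x^{+})$''. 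Here $\mathrm{SameLen}(u,v)$ must express that the intervals of $u$ and of $v$ have equal length; the event inside $\nabla\pi$ is Borel in $\pi$, so it has $\nabla$\=/truth iff probability $1$. Everything here is standard \mso bookkeeping except for the subformula $\mathrm{SameLen}$, and it is only ever applied to pairs $u,v\in X$ whose intervals are \emph{stacked} on a common branch, i.e.\ $u<m(u)<v<m(v)$ with these four nodes pairwise comparable.

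The crux is to express $\mathrm{SameLen}$ in \msonab. Plain \mso over the binary tree cannot compare the cardinalities of two sets of nodes, so $\nabla$ is indispensable, and the entry point is that interval lengths are seen by the coin\=/tossing measure: a random branch through a node $u$ reaches $m(u)$ with conditional probability $2^{-e}$, where $e:=|m(u)|-|u|$, which is exponentially small in the length of the interval of $u$. The goal is to build an \mso formula $\mathrm{DiffTest}(\rho;u,v)$, with $\rho$ a branch variable, such that
\[
  \pr_{\rho}\big[\mathrm{DiffTest}(\rho;u,v)\big]>0
  \quad\Longleftrightarrow\quad
  e\neq e',
\]
where $e=|m(u)|-|u|$ and $e'=|m(v)|-|v|$; then $\mathrm{SameLen}(u,v):=\nabla\rho.\,\neg\mathrm{DiffTest}(\rho;u,v)$, which is sound because $\mathrm{DiffTest}$ is a Borel property of $\rho$, so $\nabla\rho.\,\neg\mathrm{DiffTest}$ holds exactly when $\pr_{\rho}[\mathrm{DiffTest}]=0$. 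If one random branch does not suffice, several independent ones are available, since for Borel $\mathrm{Test}$ the statement $\pr_{\rho_{1},\rho_{2}}[\mathrm{Test}]>0$ is equivalent to $\neg\,\nabla\rho_{1}.\,\nabla\rho_{2}.\,\neg\mathrm{Test}(\rho_{1},\rho_{2})$ by Fubini.

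The construction of $\mathrm{DiffTest}$ is where I expect the real difficulty to lie. Because $\nabla$ only certifies probability $1$, it does not directly compare $2^{-e}$ with $2^{-e'}$; moreover the obvious manipulations — conditioning, Boolean combinations over one or two random branches, recording the depth at which a random branch escapes an interval — either combine the two lengths multiplicatively (one sees $2^{-(e+e')}$, never a quantity vanishing exactly when $e=e'$) or merely trade ``$e=e'$'' for another comparison of two depth differences (e.g.\ ``which random walk stays on $\pi$ longer'' is itself such a comparison, between $|p_{1}|-|u|$ and $|p_{2}|-|v|$ for the two escape nodes $p_{1},p_{2}$), so that the reduction loops back on itself. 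What is needed is a genuinely non\=/circular probabilistic gadget that grounds one length comparison out into an event checkable in \mso: for instance an event, built from one or more random branches walking through the two intervals and their surroundings, that can occur only if the interval of $u$ is strictly longer than that of $v$, together with its mirror image. Once $\mathrm{SameLen}$ is available, assembling $\varphi_{\mathrm{str}}$ and $\varphi_{\mathrm{ec}}$ and verifying the stated equivalence is routine — as is the Minsky machine encoding that then yields Theorem~\ref{thm:main}.
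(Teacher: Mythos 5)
There is a genuine gap, and it sits exactly where you locate it: the subformula $\mathrm{SameLen}$ (equivalently $\mathrm{DiffTest}$) is never constructed, and it carries the entire mathematical content of the statement. Your reduction of the theorem to $\mathrm{SameLen}$ is routine bookkeeping — the structural formula for $X,Y$, the reformulation of ``eventually constant'' as ``infinitely many sources and all but finitely many consecutive pairs of equal length'', and the use of $\nabla$ on a Borel event are all fine — so the proposal amounts to: ``Theorem~\ref{thm:const-def} holds provided one can express in \msonab that two \emph{given} stacked intervals have equal length.'' No such gadget is given, and your own analysis of why the obvious attempts fail (everything either multiplies the two survival probabilities $2^{-e}$ and $2^{-e'}$ or trades the comparison for another comparison of two depths) is precisely the obstruction: deciding $e=e'$ from a single pair of intervals is a \emph{quantitative} comparison of two probabilities, and the quantifier $\nabla$ only sees the qualitative distinction ``probability $0$ versus positive''. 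For a fixed pair $u,v$ there is only a finite amount of relevant randomness near the two intervals, every outcome of which has positive probability, so one cannot manufacture an event whose probability vanishes exactly when $e=e'$; there is no indication in the proposal of how to escape this, and the paper's heavy machinery is strong circumstantial evidence that no such pointwise comparison exists.

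The paper's route is structurally different and is designed to avoid ever comparing two individual lengths. It first shows that \emph{boundedness}-type properties of a whole family along random branches are expressible (Lemma~\ref{lem:liminf} and the characteristics of Lemma~\ref{lem:char}), by converting ``intervals of bounded length met infinitely often'' into an almost-sure hitting statement — infinitely many independent trials, each succeeding with probability at least $2^{-(n+1)}$, so the comparison becomes a zero/one dichotomy rather than an equality of two numbers. Then ``eventually constant'' is reduced to boundedness via the asymptotic-mix machinery imported from the \msou undecidability proof: Lemma~\ref{lem:msoulemma} supplies a vector sequence of dimension $\ell_1$ that is no asymptotic mix of any sequence of dimension $\ell_2<\ell_1$, Lemma~\ref{lem:d1d2} extracts from a non-eventually-constant $\ifj$ two tail-preceding subfamilies of two distinct constant lengths, and Proposition~\ref{pro:constant} (with wrappings, extractions, the separation Lemma~\ref{lem:separates}, and Lemma~\ref{lem:c1pre} for asymptotic inequivalence) expresses that the two lengths differ — but only ``infinitely often, with positive probability'', never for a designated pair of nodes. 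If you want to salvage your plan, you would have to replace $\mathrm{SameLen}$ by some such asymptotic surrogate, at which point you are essentially reconstructing the paper's argument.
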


Once we have proved the above lemma, undecidability of the logic follows by a routine reduction from the halting problem for Minksy machines. The general idea is to write the computation of the Minsky machine, repeated infinitely often, on each branch of the tree, and to use  eventually constant sequences to check if the counter values in consecutive configurations are consistent. This reduction is discussed in Section~\ref{sec:two-counter}.  The remaining part of the paper is devoted to proving Theorem~\ref{thm:const-def}.

Note how the property 
\[\pr\big[\text{$\ifi$ is eventually constant}\big]=1\]
is asymptotic in two ways: (a)~it allows sequences that are not eventually constant on a set of branches with zero probability, and (b)~on each branch there can be a finite delay before the constant tail starts.


\section{Boundedness properties}
The proof of Theorem~\ref{thm:const-def} builds on  ideas developed in the undecidability proofs from~\cite{bojanczyk_msou_final,u1u2} for the logic \msou, which is  quantitative extension of {\sc mso} that talks about boundedness. In this section, we   establish a connection with \msou, by showing that  \msonab can express various  boundedness properties for families of intervals. In the next section, we build on this connection, and known results about \msou, to express the language of eventually constant sequences in Theorem~\ref{thm:const-def}.

For a family $\ifi$ of pairwise disjoint intervals, let
\begin{align}\label{eq:liminf}
  \pr\big[\liminf \ifi < \infty \big]
\end{align}
be the probability of choosing a branch $\pi$ such that 
\begin{align*}
    \liminf \ifi(\pi) < \infty.
\end{align*}
The measured event is that the $\liminf$ is both defined (i.e.~$\ifi$ is visited infinitely often) and finite. 
In other words~\eqref{eq:liminf}  is the probability of choosing a branch such that $\ifi(\pi)$ contains some natural number infinitely often. The following lemma shows that \msonab can express positive probability of~\eqref{eq:liminf}. When we say that a formula of \msonab expresses a property of a family of intervals $\ifi$, we assume that $\ifi$ is given by two sets, representing its sources and targets,  as in Theorem~\ref{thm:const-def}.

Roughly, the main observation is as follows. If the intervals get progressively longer, then the probability of a branch visiting targets infinitely often drops to zero. Otherwise, if the intervals have bounded length, then almost every branch that visits sources infinitely often must also visit targets infinitely often. This phenomenon can be observed in Example~\ref{ex:fat-cantor}. In this example every node has a descendant in $X$, but in order to avoid these descendants with positive probability, they need to be progressively more and more distant. 

\begin{lemma}
    \label{lem:liminf}  \msonab can express
    \[\pr\big[\liminf \ifi < \infty \big] > 0.\]
  \end{lemma}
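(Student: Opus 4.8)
The plan is to reduce ``$\pr[\liminf\ifi<\infty]>0$'' to a statement about \emph{subfamilies} of $\ifi$ and then translate that statement into \msonab, guessing subfamilies with set quantifiers and measuring events with $\nabla$. By continuity of measure along the increasing family of events ``some value $\le v$ occurs infinitely often in $\ifi(\pi)$'' ($v\in\nat$), and since that event is exactly ``$\pi$ meets infinitely many sources of the subfamily $\ifj\subseteq\ifi$ of intervals of length at most $v$'', we get: $\pr[\liminf\ifi<\infty]>0$ if and only if some \emph{length-bounded} subfamily $\ifj\subseteq\ifi$ satisfies $\pr[\pi\text{ meets infinitely many sources of }\ifj]>0$. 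The obstacle is that the clause ``$\ifj$ is length-bounded'' is not \mso-definable: \mso cannot compare the lengths of two intervals of the tree. So boundedness must be detected \emph{probabilistically}.

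The probabilistic handle is that, conditioned on a random branch reaching the source $x$ of an interval $[x,y]\in\ifj$, the conditional probability that it continues all the way to $y$ is $2^{-(|y|-|x|)}$. Combined with the conditional Borel--Cantelli lemma (Lévy's extension), this gives a dichotomy for a subfamily $\ifj$: if all intervals of $\ifj$ have length at most $v$, then the successive ``traverse this interval'' events have conditional probability $\ge 2^{-(v+1)}$, so almost surely a branch meeting infinitely many sources of $\ifj$ also meets infinitely many targets of $\ifj$; whereas if the lengths $\ell_1,\ell_2,\dots$ along $\ifj$ grow fast enough that $\sum_j 2^{-\ell_j}$ converges along every branch meeting infinitely many intervals of $\ifj$, then almost surely only finitely many targets are met. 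Accordingly, the formula $\varphi(X,Y)$ I would write --- with $X=\source{\ifi}$, $Y=\target{\ifi}$, and after checking by an \mso formula that $(X,Y)$ codes a family of pairwise disjoint intervals --- says: there is a subfamily $\ifj\subseteq\ifi$ with $\pr[\pi\text{ meets infinitely many sources of }\ifj]>0$ such that \emph{every} subfamily $\ifk\subseteq\ifj$ with $\pr[\pi\text{ meets infinitely many sources of }\ifk]>0$ also has $\pr[\pi\text{ meets infinitely many targets of }\ifk]>0$. A subfamily is coded by its source and target sets; ``$\pi$ meets only finitely many sources (resp.\ targets) of a subfamily'' is a first-order property of $\pi$ and those sets; ``$\pr[\cdots]=1$'' of such a property is what $\nabla$ expresses; and ``$\pr[\text{that property holds infinitely often}]>0$'' is the negation of such a $\nabla$-statement. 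So $\varphi$ is an \msonab formula.

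The implication ``$\pr[\liminf\ifi<\infty]>0\Rightarrow\varphi$'' is immediate from the first case of the dichotomy, applied with $\ifj=$ the intervals of $\ifi$ of length $\le v$ for a $v$ witnessing positive probability, and to each of its (also $v$-bounded) subfamilies. The converse is the technical core and the step I expect to be the main obstacle. Assume $\varphi$ holds via $\ifj$ but, for contradiction, $\pr[\liminf\ifi<\infty]=0$; then (by continuity of measure, and since $\ifj(\pi)$ is a subsequence of $\ifi(\pi)$) there is a positive-probability event on which a random branch meets infinitely many sources of $\ifj$ while the corresponding interval lengths tend to infinity. From this I would construct a subfamily $\ifk\subseteq\ifj$ violating the universal clause of $\varphi$: order the intervals of $\ifj$ into a forest by the ancestor order on their sources, and by well-founded recursion on the depth of the source put an interval into $\ifk$ precisely when its length is at least $2$ raised to the number of its already-chosen $\ifk$-ancestors. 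Along any branch the $j$-th interval of $\ifk$ met then has length $\ge 2^{j-1}$, so $\sum_j 2^{-\ell_j}$ converges along every branch meeting infinitely many intervals of $\ifk$, whence (second case of the dichotomy) $\pr[\pi\text{ meets infinitely many targets of }\ifk]=0$; yet any branch in the positive-probability event above must meet infinitely many sources of $\ifk$, since otherwise the $\ifj$-interval lengths along it would stay bounded --- the required contradiction.

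The two delicate points I anticipate are that the witnessing subfamily must be a single, branch-\emph{independent} set of nodes, which is exactly what the ancestor-forest recursion secures, and that every ``almost surely'' and every ``probability zero/positive'' claim above has to be pushed through the conditional Borel--Cantelli lemma, since the relevant events are far from independent. The rest is routine: checking that $\subseteq$ and ``codes a family of pairwise disjoint intervals'' are \mso, and that the measure-continuity steps and the subsequence comparison $\liminf\ifi(\pi)\le\liminf\ifj(\pi)$ go through.
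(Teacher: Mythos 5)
Your proposal is correct and takes essentially the same route as the paper: the same characterization of $\pr[\liminf\ifi<\infty]>0$ by an existentially quantified subfamily, with positive probability of visiting its sources infinitely often, all of whose subfamilies satisfy a ``sources infinitely often $\Rightarrow$ targets infinitely often'' closure condition (stated in the paper as an almost-sure implication, in your version as positive-to-positive probability), with the forward direction handled by a length-bounded subfamily and the geometric/conditional Borel--Cantelli estimate. Your extraction of intervals whose lengths grow like $2^{\#\text{already-chosen ancestors}}$ plays exactly the role of the paper's record breakers in the converse direction, so the two arguments differ only in minor bookkeeping.
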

  \begin{proof}
    We show that the property in the statement of the lemma is equivalent to the following property, which is
    definable in \msonab (see Appendix~\ref{app:definability}).
      \begin{itemize}
          \item[$(\ast)$] there exists $\ifi'\subseteq \ifi$ such that 
          \begin{align*}
            \pr\big[\underbrace{ {\ifi'}\ \io}_{\substack{\text{a branch visits}\\ \text{sources of $\ifi'$}\\ \text{infinitely often}}} \big]>0
          \end{align*}
          and  all $\ifj\subseteq \ifi'$ satisfy
          \[\pr\big[ \ifj\ \io\ \Rightarrow\ \target \ifj\ \io\big]=1.\] 
      \end{itemize}
    
    \noindent{($\Rightarrow$)}
    We first show that the property in the statement of the lemma implies $(\ast)$. For $n \in \nat$, define $\ifi_n$ to be the intervals in $\ifi$ that have length  exactly $n$.  The event in the statement of the lemma says that with positive probability, there is some $n$ such that a branch passes through $\ifi_n$ infinitely often. By countable additivity of measures,  for some $n$ there is positive probability of seeing sources from $\ifi_n$ infinitely often. Define $\ifi' = \ifi_n$. To establish $(\ast)$, we prove the  following claim.
    \begin{claim}
      \label{lem:ob1}
      If all intervals in $\ifj$ have length $n$ then
      \[\pr\big[ \ifj\ \io\ \Rightarrow\ \target \ifj\ \io\big]=1.\]
    \end{claim}
    \begin{proof}

      Consider the complement of the event in the claim, that is:
      \begin{align*}
        \big[ \ifj\ \io\ \wedge \underbrace{\target \ifj\ \fo}_{\substack{\text{a branch visits}\\ \text{targets of $\ifj$}\\ \text{only finitely often}}}\big].
      \end{align*}
      It is equal to:
      \begin{align*}
        \bigcup_{x\in\two^{\ast}}\big[\ifj\ \io\  \wedge\ A(x)\big], 
      \end{align*}
      where by $A(x)$ we denote the event of a branch passing through $x$ and not visiting \emph{any} target of $\ifj$ after $x$. If $x_0$ is a source of $\ifj$, conditional on visiting $x_0$, the probability of the event $A(x_0)$ is at most $1-\frac{1}{2^{n+1}}$; to avoid every target below $x_0$, we have to avoid first the target corresponding to $x_0$ which is at distance $n+1$ since every interval in $\ifj$ has length $n$. In other words, when going down the tree from $x_0$ whenever we visit a source, the relative probability of further avoiding targets is at most $1-\frac{1}{2^{n+1}}$, which means that
      \begin{align*}
        \pr\big[\ifj\ \io\ \wedge\ A(x_0)\big]\le \lim_{k\to\infty}(1-\frac{1}{2^{n+1}})^k=0.
      \end{align*}
      This proves that the complement of the event in the claim has probability zero\footnote{A more direct (but abstract) proof of this claim can be given using L\'evy's zero-one law.}.
    \end{proof}
    
    \noindent{($\Leftarrow$)} We now show that $(\ast)$ implies the property in the statement of the lemma. Let then $\ifi'$ be as $(\ast)$. Since the property in the statement of the lemma is closed under adding intervals to a family, it is enough to show 
    \[\pr\big[\liminf \ifi' < \infty \big] > 0.\]
    We will show a stronger property, namely
    \begin{align}\label{eq:finite-limsup}
      \pr\big[\limsup \ifi' < \infty \big] > 0.
    \end{align}
    An interval in $I \in \ifi'$ is called a \emph{record breaker} if it is strictly longer than  all intervals in $\ifi'$ with sources that are ancestors of the source of~$I$. 
    \begin{claim}
    Almost surely, the sources of record breakers are visited finitely often.  
    \end{claim}
    \begin{proof}
      Define $A_n$ to be the branches which see the target of some record breaker after having already seen at least $n$ sources of record breakers. By definition, on each branch, the $n$-th record breaker  has length at least $n$, and therefore the probability of seeing its target  is at most $\frac{1}{2^{n+1}}$. It follows that the probability of $A_n$ is at most 
      \begin{align*}
        \frac 1 {2^{n}} = \frac 1 {2^{n+1}} + \frac 1 {2^{n+2}} + \cdots.
      \end{align*}   Branches that visit infinitely many targets of record breakers belong to all sets $A_n$, and therefore they have probability zero.

      We have thus established that almost surely targets of record breakers are seen finitely often. If we set $\ifj$ to be the record breakers, then we know by $(\ast)$ that almost surely sources of record breakers are seen finitely  often, thus establishing the claim.  
    \end{proof}
     A branch $\pi$ sees record breakers infinitely often  if and only if the sequence $\ifi'(\pi)$ is has infinite $\limsup$. Therefore, it follows from the claim that almost surely the sequence $\ifi'(\pi)$ has finite length, or it is infinite but has finite $\limsup$.  Since there is positive probability of visiting $\ifi'$ infinitely often, we get~\eqref{eq:finite-limsup}.
        \end{proof}

Building on the above lemma, we now show how \msonab can characterise branches $\pi$ where $\ifi(\pi)$ is unbounded. 
\begin{definition}
\label{def:unbounded}
A set of nodes $X$ is called a \emph{characteristic} for  a family of intervals $\ifi$  if
\begin{equation}
  \label{eq:char-def}
  \pr\big[X\ \io \iff (\limsup \ifi = \infty)\big]=1.
\end{equation}
\end{definition}
Recall the notion of record breakers that was used in the proof of Lemma~\ref{lem:liminf}. It is not hard to see that the record breakers are a characteristic, and therefore every family of intervals admits at least one characteristic.  The following lemma shows that being a characteristic can be described in \msonab (we assume, as usual, that a family of intervals is given by its sources and targets). 

\begin{lemma}
  \label{lem:char} There is a formula of 
  \msonab which says that $X$ is a characteristic of $\ifi$. 
\end{lemma}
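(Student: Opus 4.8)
The plan is to split ``$X$ is a characteristic of $\ifi$'' into two one-sided events and express each with the help of Lemma~\ref{lem:liminf}, replacing the inaccessible event ``$\limsup\ifi=\infty$'' by a quantifier over subfamilies of $\ifi$ that plays the role of the record breakers. Unwinding Definition~\ref{def:unbounded}, $X$ is a characteristic of $\ifi$ if and only if
\begin{align*}
  \text{(A)}\quad \pr\big[\,\limsup\ifi=\infty \ \wedge\ X\ \fo\,\big]=0
  \qquad\text{and}\qquad
  \text{(B)}\quad \pr\big[\,\limsup\ifi<\infty \ \wedge\ X\ \io\,\big]=0 ,
\end{align*}
so it suffices to produce an \msonab formula for each of (A) and (B) and take their conjunction.

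The obstacle is that the event ``$\limsup\ifi=\infty$'' is not directly at our disposal: comparing the lengths of two intervals is not even an \mso-expressible property, and asymptotic unboundedness is precisely the quantitative phenomenon these constructions must build up by hand. The device that breaks this, already implicit in the proof of Lemma~\ref{lem:liminf}, is the family $\ifi^{\mathrm{rb}}$ of record breakers of $\ifi$: along every branch the lengths of its intervals strictly increase, and $\limsup\ifi(\pi)=\infty$ holds exactly when $\pi$ meets $\ifi^{\mathrm{rb}}$ infinitely often; that is, the events $\{\limsup\ifi=\infty\}$ and $\{\ifi^{\mathrm{rb}}\ \io\}$ coincide. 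We cannot name $\ifi^{\mathrm{rb}}$ inside a formula, but we can pin it down by a subfamily quantifier subject to the \msonab-definable constraint $G(\ifj)$ stating that $\pr[\liminf\ifj<\infty]=0$; this is simply the negation of the formula provided by Lemma~\ref{lem:liminf}. Semantically $G(\ifj)$ says that almost surely a branch meeting $\ifj$ infinitely often sees the lengths of its $\ifj$-intervals tend to infinity. Two facts then drive everything: (i)~$\ifi^{\mathrm{rb}}$ satisfies $G$, because its lengths strictly increase along every branch; and (ii)~every $\ifj\subseteq\ifi$ with $G(\ifj)$ has $\{\ifj\ \io\}\subseteq\{\limsup\ifi=\infty\}$ up to a null set.

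Using this, (A) is expressed by $\forall\,\ifj\subseteq\ifi:\ G(\ifj)\Rightarrow\nabla\pi.\,(\ifj\ \io\Rightarrow X\ \io)$, where the $\nabla$-statement is literally $\pr[\ifj\ \io\ \wedge\ X\ \fo]=0$ (both ``$\ifj\ \io$'' and ``$X\ \io$'' being ordinary \mso conditions on the bound branch). Indeed, if (A) holds then fact (ii) gives $\pr[\ifj\ \io\ \wedge\ X\ \fo]=0$ for every $\ifj$ with $G(\ifj)$; conversely, instantiating $\ifj:=\ifi^{\mathrm{rb}}$ --- legitimate by fact (i) --- recovers (A) verbatim, since $\{\ifi^{\mathrm{rb}}\ \io\}=\{\limsup\ifi=\infty\}$. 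Condition (B) needs one more layer:
\[
  \exists\,\ifj\subseteq\ifi:\ G(\ifj)\ \wedge\ \Big(\forall\,\ifk\subseteq\ifi:\ G(\ifk)\Rightarrow\nabla\pi.\,(\ifk\ \io\Rightarrow\ifj\ \io)\Big)\ \wedge\ \nabla\pi.\,(X\ \io\Rightarrow\ifj\ \io).
\]
The middle clause says $\{\limsup\ifi=\infty\}\subseteq\{\ifj\ \io\}$ almost surely --- again witnessed for the $\ifk$-quantifier by $\ifi^{\mathrm{rb}}$ and fact (ii) --- and together with $G(\ifj)$ it forces $\{\ifj\ \io\}$ to equal $\{\limsup\ifi=\infty\}$ up to a null set; the last clause then reads $\pr[X\ \io\ \wedge\ \limsup\ifi<\infty]=0$, which is exactly (B), while the witness $\ifj:=\ifi^{\mathrm{rb}}$ again yields the other direction. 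Since ``$\ifj\subseteq\ifi$'' stands for the routine \mso description of a subfamily through its sources and targets, the conjunction of these two formulas is an \msonab formula, and it holds precisely when $X$ is a characteristic of $\ifi$.

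The step I expect to cost the most is the interaction between the one-sided conditions and the record-breaker surrogate: verifying fact (ii), that $G(\ifj)$ genuinely traps $\{\ifj\ \io\}$ inside $\{\limsup\ifi=\infty\}$, and that the middle clause of (B) together with $G(\ifj)$ pins $\{\ifj\ \io\}$ down to $\{\limsup\ifi=\infty\}$ almost surely. Both follow from the strictly increasing lengths of record breakers already established in the proof of Lemma~\ref{lem:liminf}; the remaining ingredients --- the \mso-definability of the subfamily relation and of ``visited infinitely often'', and closure of \msonab under negation --- are routine.
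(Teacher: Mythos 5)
Your proof is correct and takes essentially the same route as the paper's: both capture the event $\{\limsup\ifi=\infty\}$ by quantifying over subfamilies of $\ifi$ whose lengths almost surely tend to infinity (definable as the negation of Lemma~\ref{lem:liminf}), with the record breakers as the canonical witness and a maximality clause to pin the event down up to a null set. The paper merely packages the two one-sided conditions into one formula via ``semi-characteristics''/``unbounded'' subfamilies, which differs only cosmetically from your (A)/(B) split (and the middle clause of your (B)-formula is in fact redundant, since $G(\ifj)$ alone already forces $\{\ifj\ \io\}\subseteq\{\limsup\ifi=\infty\}$ up to a null set).
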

\begin{proof}
  We say that   $Y$ is  a \emph{semi-characteristic} of $\ifi$ if
  \begin{align}\label{eq:semi-char}
  \pr\big[Y \ \io \ \Rightarrow \ (\limsup \ifi = \infty)\big] = 1.
  \end{align}
  It is not hard to see that $X$ is a characteristic of $\ifi$ if and only if every semi-characteristic satisfies 
  \begin{align*}
    \pr\big[Y \ \io \ \Rightarrow \ X \ \io\big] = 1.
  \end{align*}
  Therefore, to prove the lemma, it is enough to define semi-characteristics in \msonab.

  We claim that~\eqref{eq:semi-char} is equivalent to 
       \begin{itemize}
       \item[$(\ast)$] There exists $\ifj\subseteq \ifi$ which is unbounded and
         \begin{align*}
           \pr\big[ Y\ \io\ \Rightarrow\ \ifj\ \io  \big]=1.
         \end{align*}
       \end{itemize}
  We say that a family $\ifj$ is \emph{unbounded} if
  \begin{align*}
    \pr\big[\ifj\ \io \ \Rightarrow\ (\limsup\ifj=\infty) \big]=1.
  \end{align*}
  Being unbounded is equivalent to saying:
  \begin{equation}
    \label{eq:unbounded formula}
    \begin{aligned}
      &\text{there exists $\ifj'\subseteq\ifj$ such that}\\
      &\qquad\pr\big[ \ifj\ \io\ \iff\ \ifj'\ \io\big]=1\text{ and}\\
      &\qquad\pr\big[\ifj'\ \io\ \Rightarrow\ (\liminf\ifj'=\infty)\big]=1.
    \end{aligned}
  \end{equation}
  To see this take $\ifj'$ to be the record breakers for the forward implication; the converse is immediate. Further, the condition \eqref{eq:unbounded formula} (and therefore also $(\ast)$) is definable in \msonab, since the second conjunct is the complement of the property from Lemma~\ref{lem:liminf}.

  The implication~$(\ast) \Rightarrow $\eqref{eq:semi-char} is trivial. For the converse implication, we take $\ifj$ to be the record breakers. Then the family $\ifj$ is unbounded and we have $(\ast)$ because for every branch $\pi$, $\limsup\ifi(\pi)=\infty$ if and only if $\ifj$ appears infinitely often in $\pi$.

\end{proof}


\section{Eventually constant intervals}
\label{sec:econst}

We call elements of $\nat^\omega$ \emph{number sequences}. They are denoted by $f,g,h$. In the previous section, we have essentially encoded number sequences on branches using intervals, and demonstrated that the probabilistic quantifier can be used to say that the encoded number sequences are bounded (they have finite $\limsup$) almost surely. How is boundedness useful for expressing the eventually constant language in Theorem~\ref{thm:const-def}? To answer this question, we first need to define asymptotic mixes. The ideas are borrowed from the proof of undecidability of \msou in~\cite{bojanczyk_msou_final}.

If $X=\{x_0<x_1<\ldots\}\subseteq\nat$ then by $f\restriction_X$ we denote the subsequence of $f$ taking only positions from $X$, \ie~$f\restriction_X=(f(x_0), f(x_1), \ldots)\in\nat^\ast\cup \nat^\omega$.

\begin{definition}[Asymptotic equivalence]
  Given $f,g\in\nat^\omega$, we say that $f$ is \emph{asymptotically equivalent} to $g$, denoted $f\sim g$, if $f$ and
  $g$ are bounded on the same sets of positions, \ie~for all $X\subseteq \nat$, either both $f\restriction_X$ and
  $g\restriction_X$ are bounded or both are unbounded.
  If $f$ is not asymptotically equivalent to $g$ we write~$f\not\sim\ g$. 
\end{definition}

A~\emph{vector sequence} is an element of $(\nat^{{+}})^\omega$, \eg:
\begin{align*}
  (4,7,6)\ (2,3)\ (10)\ (1,1,1) \cdots.
\end{align*}
We denote vector sequences by~$\vsf,\vsg,\vsh$. We say that a~number sequence $f\in\nat^\omega$ is an~\emph{extraction}
of $\vsf$ (denoted $f\in\vsf$) if for each $n\in\nat$ the number $f(n)$ is a~component of $\vsf(n)$ (written simply
$f(n)\in\vsf(n)$).

\begin{definition}[Asymptotic mix]
\label{def:assmix}
  Given two vector sequences $\vsf$, $\vsg$ we say that $\vsf$ is an~\emph{asymptotic mix} of $\vsg$ if for all $f\in\vsf$
  there exists $g\in\vsg$ such that $f\sim g$.
\end{definition}

A~vector sequence $\vsf$ has \emph{dimension} $d$ if every vector in it has dimension $d$. Notice that each vector of a~vector sequence must be non\=/empty and therefore, $d\geq 1$ always.
The following lemma (that we
state without a~proof) makes a~crucial connection between the dimension and asymptotic mixes, the latter being
a~property of boundedness of the components of vector sequences.

\begin{lemma}[\cite{bojanczyk_msou_final} Lemma~2.1]
  \label{lem:msoulemma}
  Let $d\in \nat$, $d>0$. There exists a~vector sequence of dimension $d$ which is not an~asymptotic mix of any vector sequence of dimension $d-1$ (nor any smaller dimension).
\end{lemma}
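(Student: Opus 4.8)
The plan is to prove the statement by induction on $d$. The base case $d = 1$ is vacuous: there is no vector sequence of dimension $0$ (vectors are nonempty), so every vector sequence of dimension $1$ trivially has the required property. For the inductive step, assume $\vsg$ is a vector sequence of dimension $d-1$ that is not an asymptotic mix of any vector sequence of dimension $d-2$, and assume moreover---an invariant we maintain along the induction---that one coordinate of $\vsg$ is constantly $1$. From $\vsg$ we build a vector sequence $\vsf$ of dimension $d$ by adjoining one fresh ``block'' coordinate. Fix a partition $\nat = B_1 \sqcup B_2 \sqcup \cdots$ of the positions into infinitely many infinite blocks and bijections $\phi_k \colon B_k \to \nat$, and set, for $n \in B_k$, $\vsf(n) = (\vsg(\phi_k(n)), k)$; thus inside each block the first $d-1$ coordinates of $\vsf$ form a re-indexed copy of $\vsg$, while the new $d$-th coordinate records the index of the block. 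Its essential feature is that the $d$-th coordinate is bounded on a set $X$ exactly when $X$ meets only finitely many blocks, a behaviour that the remaining $d-1$ coordinates cannot imitate. An extraction of $\vsf$ thus amounts to a choice of a set $T$ of positions on which the block coordinate is read, together with a blockwise extraction of $\vsg$ on $\nat \setminus T$.

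Dimension $d$ clearly suffices, since $\vsf$ is an asymptotic mix of itself. For the converse, suppose $\vsf$ is an asymptotic mix of $\vsh = (h_1, \dots, h_{d-1})$ of dimension $d-1$. Three families of extractions of $\vsf$ carry the argument. First, the extraction always reading the constant coordinate is bounded on all of $\nat$, so some extraction of $\vsh$ is bounded, which forces $\min_i h_i$ to be bounded, say by $M_0$. Second, for each block index $k$ take the extraction of $\vsf$ that reads the block coordinate precisely on $B_k$; its boundedness class is ``$X$ is bounded if and only if $X \cap B_k$ is finite'', so a match $h^{(k)} \in \vsh$ of it is bounded off $B_k$ and tends to infinity along $B_k$; writing $h^{(k)}(n) = h_{\iota_k(n)}(n)$, the index $\iota_k(n)$ points to a component exceeding $M_0$ for large $n \in B_k$ and to a component bounded by some $M_k$ for $n \notin B_k$. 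Third, for each block $B_j$ the extractions of $\vsf$ reading only the first $d-1$ coordinates inside $B_j$ (and the constant coordinate elsewhere) are exactly the extractions of $\vsg$ transported by $\phi_j$, so $\vsg$ is an asymptotic mix of the dimension-$(d-1)$ vector sequence obtained from $\vsh$ by restricting to $B_j$ and transporting. Now colour each position $n$ by the set $P(n) \subseteq \{1, \dots, d-1\}$ of components of $\vsh$ that are large at $n$. An infinite Ramsey-type argument---applied to the positions of a well-chosen block, and combined with the behaviour of the matches $h^{(k)}$ as $k$ varies and of the matches of the extractions reading the block coordinate on unions of blocks---isolates one component $h_{i_0}$ that is only ever needed to serve the extractions involving the block coordinate. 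Deleting $h_{i_0}$ yields a vector sequence of dimension $d-2$ of which $\vsg$ is still an asymptotic mix, because every extraction of $\vsg$ appears (transported, off a suitable $T$) inside an extraction of $\vsf$ whose $\vsh$-match can be rerouted around $h_{i_0}$. This contradicts the induction hypothesis, completing the step.

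The step I expect to be the main obstacle is this isolation of $h_{i_0}$. The component of $\vsh$ that realises a given boundedness class of $\vsf$ is not fixed but may vary with the position, so the rerouting must be carried out uniformly; this is exactly what the Ramsey argument on the colours $P(n)$ provides, the crucial input being that the block coordinate is genuinely unbounded on unboundedly many blocks, so that no single component of $\vsh$ can absorb it on more than a controlled set of blocks without being large there---which then frees that component for everything else. A secondary subtlety, already visible for small $d$, is that the construction must be genuinely hierarchical rather than flat: a naive candidate such as the sequence $(1, n, n^2, \dots, n^{d-1})$ collapses, since each of its extractions has a boundedness class of the shape ``$X$ is bounded if and only if $X \cap S$ is finite'', and this entire family is already an asymptotic mix of a two-dimensional vector sequence; the block structure above is precisely what rules out such a collapse.
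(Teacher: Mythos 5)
You are trying to prove a statement that this paper deliberately does not prove: Lemma~\ref{lem:msoulemma} is stated ``without a~proof'' and imported from \cite{bojanczyk_msou_final} (Lemma~2.1 there), so there is no in-paper argument to compare against; what matters is whether your argument stands on its own. It does not, for two reasons.

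First, a concrete false step. In your construction the $d$-th coordinate of $\vsf$ equals $k$ at \emph{every} position of the block $B_k$, so the extraction that ``reads the block coordinate precisely on $B_k$'' (and the constant coordinate elsewhere) is bounded on all of $\nat$; its boundedness class is the family of all subsets of $\nat$, not ``$X$ is bounded iff $X\cap B_k$ is finite''. In fact no extraction of your $\vsf$ has that class: by induction your $\vsg$ (hence $\vsf$) admits an infinite set of positions on which \emph{all} of its coordinates are bounded, and transporting such a set into $B_k$ gives an infinite subset of $B_k$ on which every extraction is bounded. Consequently the matches $h^{(k)}$ with the properties you ascribe to them (bounded off $B_k$, tending to infinity along $B_k$, the indices $\iota_k(n)$ pointing to components above $M_0$) need not exist, and the colouring $P(n)$ and everything built on it is unsupported. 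There is a structural tension here: you want the new coordinate both to be bounded exactly on unions of finitely many blocks and to yield, per block, an extraction with class ``$X\cap B_k$ finite''; no single coordinate can do both, so this is not a repairable slip of notation. Second, and independently, the decisive step---the ``Ramsey-type argument'' isolating a component $h_{i_0}$ whose deletion leaves a $(d-2)$-dimensional sequence of which $\vsg$ is still an asymptotic mix---is only gestured at, and you yourself flag it as the expected obstacle. Note that within a single block the block coordinate is constant, so it is far from clear that it ``uses up'' any component of $\vsh$ there; making that intuition precise is essentially the whole content of the lemma, and it is missing. (Two parts of your write-up are fine and worth keeping: the observation that flat candidates such as $(1,n,\dots,n^{d-1})$ collapse to dimension two, and the verification that $\vsg$ is an asymptotic mix of $\vsh\restriction_{B_j}$ transported by $\phi_j$. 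But as written the inductive step does not go through.)
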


We will encode vector sequences with two families of intervals $\ifj$ and $\ifi$, by \emph{wrapping} the former over the latter. The lengths of $\ifj$ will encode the dimensions, and those of $\ifi$ will encode the components. We want to express that $\ifj$ is eventually constant. The rough idea is as follows. If $\ifj$ is not eventually constant then it must alternate between two lengths (we can ask for it to be bounded), say $5$ and $3$. We then check whether this is the case by employing Lemma~\ref{lem:msoulemma}.

But it is not yet clear how we are to express asymptotic equivalence and mixes in \msonab, so we do this first in the next two technical subsections. 

\subsection{Asymptotic equivalence}
Consider \mso on infinite words for a moment. Suppose that we encode two number sequences with families of intervals
$\ifi_1$, $\ifi_2$. {\em A~priori} it is not possible to express $\ifi_1\sim\ifi_2$ in the logic\footnote{Even if we are
  allowed to speak about boundedness.}, unless we impose some restriction, such that there is some \mso definable function
that given the $n$th interval of $\ifi_1$ outputs the position of the $n$th interval of $\ifi_2$. The simplest way of
having this is to require that the intervals in $\ifi_1$ and $\ifi_2$ are alternating:
\vspace{-12pt}
\begin{center}
  \includegraphics[width=0.5\textwidth]{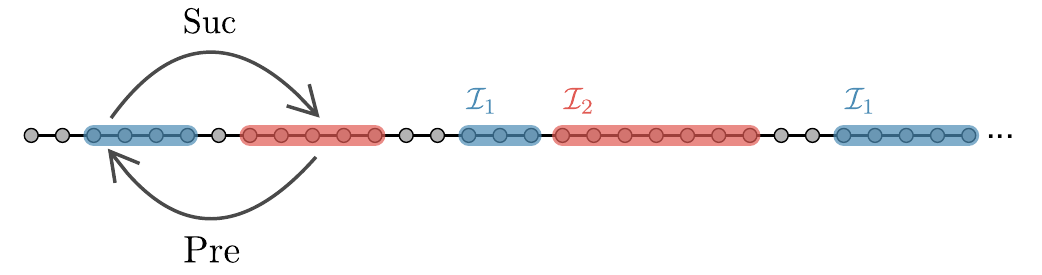}
\end{center}
\vspace{-8pt}
If $\ifi_1$, $\ifi_2$ are arranged in such a way, the functions $\pre$ and $\suc$ are \mso definable (the first neighbour to the left, or right respectively) and hence we are able to quantify over subsequences which enables us to express asymptotic equivalence in the logic.

For trees we have the following definitions. 

We call two families of intervals $\ifi_1$, $\ifi_2$ \emph{isolated} if $\bigcup\ifi_1\cap \bigcup\ifi_2=\emptyset$, \ie~there is no node that belongs both to an interval in $\ifi_1$ and an interval in $\ifi_2$
\begin{definition}[Precedes]
  \label{def:precedes}
  Let $\ifi_1$, $\ifi_2$ be isolated families of intervals. We say that $\ifi_1$ \emph{precedes}~$\ifi_2$ if for all
  $x'\in\source{\ifi_2}$ there exists $x\in\source{\ifi_1}$ such that $x<x'$ and there is no node strictly between $x$ and $x'$
  that is a source of $\ifi_1$ or $\ifi_2$.
\end{definition}

The fact that $\ifi_1$ precedes $\ifi_2$ induces a~function $\fun{\pre}{\source{\ifi_2}}{\source{\ifi_1}}$ that maps $x'\mapsto x$ as in the definition above. Additionally, for a~family $\ifi\subseteq\ifi_1$, we define:
\[
  \suc(\ifi)\eqdef\big\{[x',y']\in\ifi_2\st\pre(x')\in \source{\ifi}\big\}\subseteq \ifi_2,
\]
and dually, for $\ifi\subseteq\ifi_2$ we put
\[
  \pre(\ifi)\eqdef\big\{[x,y]\in\ifi_1\st \exists x'\in\source{\ifi}.\ \pre(x')=x\big\}.
\]
For the sake of readability we will use the functions $\pre$ and $\suc$ without additional parameters, assuming that the familiesv $\ifi_1$ and $\ifi_2$ are known from the context.
The picture on trees looks as follows:
\vspace{-12pt}
\begin{center}
\includegraphics[width=0.5\textwidth]{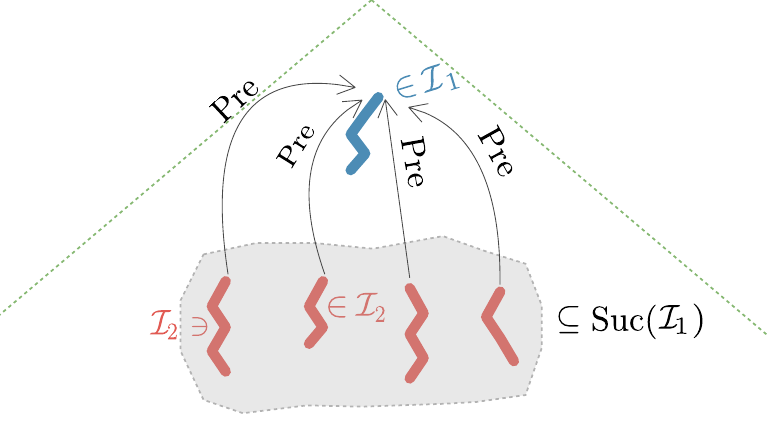}
\end{center}
\vspace{-10pt}
In a branch $\pi$, it might be the case that between consecutive intervals in $\ifi_2$, there are many sources of intervals from
$\ifi_1$, so the encoding of the two sequences is not alternating, hence the following definition.

\begin{definition}[Preceding subsequence]
  \label{def:precsubseq}
  Let $\ifi_1$, $\ifi_2$ be isolated families of intervals such that $\ifi_1$ precedes~$\ifi_2$. Assume that $\pi$ is a~branch where
  $\ifi_2$ appears infinitely often. By $\ifi_1^\pre(\pi)$ we denote the subsequence of $\ifi_1(\pi)$ that we get by applying $\ifi_1$
  only to the nodes $x$ for which there exists $x'\in\pi\cap\source{\ifi_2}$ such that~$\pre(x')=x$. 
\end{definition}

Notice that in the above definition we require $x'$ to belong to $\pi$, \ap we might have $\pre(x')=x$ for some $x'\in\source{\ifi_2}$ outside $\pi$ but for no such node in $\pi$ (in that case $\ifi_1(x)$ is not taken into $\ifi_1^\pre(\pi)$).
Observe additionally that if $\ifi_1$ precedes $\ifi_2$ and $\ifi_2$ appears infinitely often in a~branch $\pi$ then $\ifi_1^\pre(\pi)$ is a~number sequence (\ie~it is infinite). However, we are not claiming that $\ifi_1^\pre$ is a~family of intervals. 

Typically, on a branch $\pi$ where $\ifi_2$ appears infinitely often we have: a few intervals of $\ifi_1$ then one interval in $\ifi_2$ and so on. The sequence $\ifi_1^\pre(\pi)$ is taking into account only the intervals that immediately precede those of~$\ifi_2$. It looks as follows:

\begin{center}
  \includegraphics[width=0.5\textwidth]{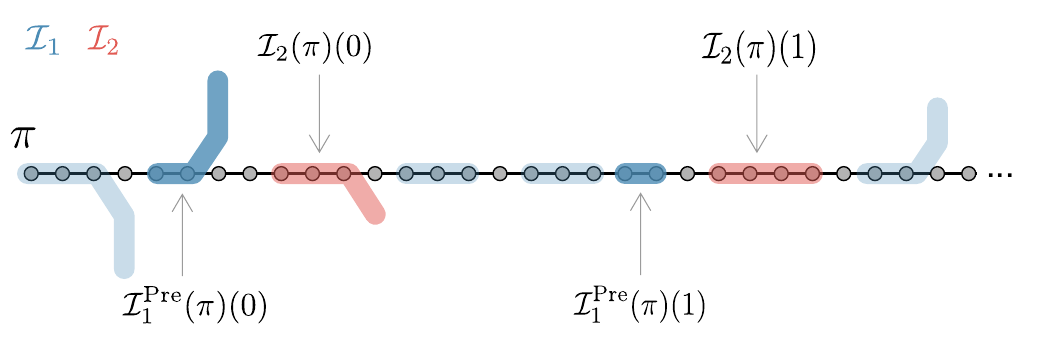}
\end{center}
\vspace{-12pt}
\begin{remark}
\label{rem:suc-correspondence}
Consider $\ifi_1$, $\ifi_2$ two isolated families of intervals such that $\ifi_1$ precedes $\ifi_2$. Let $\pi$ be a~branch on which $\ifi_2$ appears infinitely often. In that case the two sequences $\ifi_1^\pre(\pi)$ and $\ifi_2(\pi)$ are both defined. Let $x_k'\in\pi\cap\source{\ifi_2}$ be the $k$th source of an~interval in $\ifi_2$ on $\pi$ (it has $k-1$ strict ancestors in $\source{\ifi_2}$). Then, by the definitions of the respective sequences:
\begin{align*}
\ifi_2(\pi)(k)&= \ifi_2(x_k'),\\
\ifi_1^\pre(\pi)(k) &=\pre(\ifi_2)\big(\pre(x_k')\big).
\end{align*}
This means that the two number sequences are in a~sense \emph{synchronised} and the function $\pre$ maps between the corresponding sources.

In other words, number sequence encodings $\ifi_2$ and~$\ifi_1^\pre$ are alternating as in the case of infinite words, which facilitates quantifying over
their subsequences.
\end{remark}

As a consequence it is easier to express asymptotic equivalence between $\ifi_2(\pi)$ and $\ifi_1^\pre(\pi)$.
\newcommand{\lemcpre}[1]{
  Let $\ifi_1$, $\ifi_2$ be isolated families of intervals, such that $\ifi_1$ precedes $\ifi_2$. Then we can express in \msonab that:
  \begin{align*}
    \pr\big[\ifi_2\ \io\wedge \ifi_1^{\pre}\not\sim\ifi_2\big]>0.
  \end{align*}
}

\begin{lemma}
  \label{lem:c1pre}
  \lemcpre{}
\end{lemma}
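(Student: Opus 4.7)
The plan is to characterize the non-equivalence $\ifi_1^{\pre}(\pi)\not\sim\ifi_2(\pi)$ combinatorially via ``uniform-length'' subfamilies of $\ifi_2$, then use countable additivity to move an existential over $\nat$ outside the probability, and finally assemble the formula from the tools of the previous sections.

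For each $n\in\nat$, put
\[
\ifk_n=\{I\in\ifi_2 : \pre(I) \text{ has length } n\}, \quad \ifj_n=\{I\in\ifi_2 : I \text{ has length } n\}.
\]
I first prove a pigeonhole characterization: on a branch $\pi$ with $\ifi_2\ \io$, the condition $\ifi_1^{\pre}(\pi)\not\sim\ifi_2(\pi)$ holds iff there exists $n\in\nat$ for which either $\ifk_n\ \io$ on $\pi$ and $\limsup\ifk_n(\pi)=\infty$, or $\ifj_n\ \io$ on $\pi$ and $\limsup\pre(\ifj_n)(\pi)=\infty$. The reverse direction is immediate: the positions of (say) $\ifk_n$-sources on $\pi$ form an infinite set $X$ on which $\ifi_1^{\pre}(\pi)$ is constantly $n$ while $\ifi_2(\pi)$ is unbounded. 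For the forward direction, any $X$ witnessing non-equivalence has one of the two sequences bounded on $X$, hence of finite range there; partitioning $X$ by that range, some cell still carries an unbounded image under the other sequence, yielding the required $n$.

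Consequently, the event $\{\pi : \ifi_2\ \io \wedge \ifi_1^{\pre}\not\sim\ifi_2\}$ equals the countable union $\bigcup_n(A_n\cup C_n)$ of the two asymptotic conditions above, so by countable additivity it has positive measure iff some $A_n$ or some $C_n$ does. To eliminate the quantifier over $n$ at the formula level, I rephrase this as: either there exists a subfamily $\ifk\subseteq\ifi_2$ with $\pre(\ifk)$ of uniform length (all its intervals sharing a common length) such that $\pr\bigl[\ifk\ \io\wedge\limsup\ifk=\infty\bigr]>0$, or there exists a uniform-length $\ifj\subseteq\ifi_2$ such that $\pr\bigl[\ifj\ \io\wedge\limsup\pre(\ifj)=\infty\bigr]>0$. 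The $\Rightarrow$ direction takes $\ifk=\ifk_n$ (resp.\ $\ifj=\ifj_n$); the $\Leftarrow$ direction uses that a witness $\ifk$ of uniform $\pre$-length $n$ lies inside $\ifk_n$, and the probability transfers upward since $\ifk\ \io$ implies $\ifk_n\ \io$ and $\limsup\ifk\leq\limsup\ifk_n$.

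Finally, each ingredient is \msonab-definable: $\pre$ is \mso-definable from Definition~\ref{def:precedes}, uniform length of a family of intervals is a standard \mso property of the binary tree, and each clause $\pr\bigl[\ifk\ \io\wedge\limsup\ifk=\infty\bigr]>0$ is expressible as the existence of a characteristic $X$ of $\ifk$ (in the sense of Lemma~\ref{lem:char}) satisfying $\lnot\nabla\pi.\,\lnot(\ifk\ \io\text{ on }\pi\wedge X\ \io\text{ on }\pi)$. The main obstacle is the pigeonhole characterization; the countable-additivity step mirrors the one in the proof of Lemma~\ref{lem:liminf}, and the final syntactic assembly is routine.
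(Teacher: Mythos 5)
Your overall strategy is the same as the paper's: characterise $\ifi_1^{\pre}(\pi)\not\sim\ifi_2(\pi)$ by pigeonholing the bounded side down to a single value $n$, use countable additivity to pull the quantifier over $n$ out of the probability, and then express the two symmetric cases using the \mso-definability of $\pre$/$\suc$ and the boundedness machinery (characteristics from Lemma~\ref{lem:char}, positive probability via $\lnot\nabla\pi.\lnot(\cdot)$). That part of your argument is sound and matches the paper's proof in Appendix~\ref{app:lemc1pre} almost clause for clause: your two disjuncts are mirror images of the paper's $(\dagger_1)$ and $(\dagger_2)$.

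The genuine gap is in the final definability step: you eliminate the quantifier over $n$ by asking for a subfamily whose intervals (or whose $\pre$-intervals) are ``of uniform length'', and you assert that uniform length ``is a standard \mso property of the binary tree''. It is not. Already for a family of two disjoint intervals, ``equal length'' is not \mso-definable: if it were, the set of nodes $0^m1^k$ with $m=k$ (up to an additive constant) would be a regular subset of $\two^*$, which it is not; morally, the ability to say that infinitely many scattered intervals all share one unspecified length is exactly the kind of counting that the whole paper (Theorem~\ref{thm:const-def}) is struggling to approximate, so it cannot be waved in as a primitive. The paper avoids this precisely by never putting uniformity into the definable condition: its $(\dagger_1)$/$(\dagger_2)$ only require, \emph{inside} the probability, that the chosen subfamily is bounded along the branch ($\limsup<\infty$), which is expressible through a characteristic, and boundedness versus unboundedness on a common set of positions already witnesses $\not\sim$ --- the exact value $n$ (your uniform length) is needed only when \emph{constructing} the witness subfamily in the forward direction, i.e., outside the formula. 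So your proof is repaired by replacing ``$\pre(\ifk)$ of uniform length'' with ``$\limsup$ of the corresponding sequence is finite along the branch'' (via Lemma~\ref{lem:char}, as in the paper's $\phi_{\mathrm{bnd}}$), after which your pigeonhole and upward-transfer arguments go through essentially unchanged.
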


The formula used to express the property in the lemma above utilizes the fact that $\pre$ and $\suc$ are \mso-definable and quantifies over subsets of $\pre(\ifi_2)$ and $\ifi_2$. The proof is in Appendix~\ref{app:lemc1pre}. 

\subsection{A characterization of asymptotic mixes}

Having built tools to express asymptotic equivalence, we now move on to asymptotic mixes. In this section we give the definition of \emph{separation} which is equivalent to asymptotic mixes.
\begin{remark}
  \label{rem:whyseparation}
  The reason why we give this equivalent definition of asymptotic mixes is that it will allow us in the sequel
  to partition certain sets of branches into countably many subsets (one for each bound $b$), for the purpose of then
  using the $\aleph_0$-additivity of the measure. Thereby allowing us to \emph{pull out} one existential quantifier. 
\end{remark}

For a~vector sequence $\vsf$ denote by $\min(\vsf)\in\vsf$ (respectively $\max(\vsf)\in\vsf$) the number sequences that pick the
minimal (respectively maximal) component of every vector. For a~number sequence $f\in\nat^\omega$ and $b\in\nat$ we write $f\leq b$ if
for all $n\in\nat$ we have $f(n)\leq b$.

\begin{definition}[Separation]
\label{def:separation}
  Let $\vsf$, $\vsg$ be two vector sequences and $b\in\nat$. We say that $b$ \emph{separates} $\vsf$ from~$\vsg$ if one of
  the following holds:
  \begin{itemize}
  \item $\exists X.\ \min(\vsf\restriction_X)\leq b$ and $\min(\vsg\restriction_X)$ is unbounded,
  \item $\exists X.\ \max(\vsg\restriction_X)\leq b$ and $\max(\vsf\restriction_X)$ is unbounded. 
  \end{itemize}
\end{definition}

\begin{lemma}
  \label{lem:separates}
  Let $\vsf$, $\vsg$ be two vector sequences. Then~$\vsf$ is not an asymptotic mix of $\vsg$ if and only if there exists
  $b\in\nat$ that separates $\vsf$ from $\vsg$. 
\end{lemma}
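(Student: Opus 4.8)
The statement is an ``if and only if'' equivalence between failure of the asymptotic-mix relation $\vsf \to \vsg$ and existence of a separating bound $b$. I would prove the two directions separately, and the bulk of the work lies in the hard direction.

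\emph{The easy direction ($b$ separates $\Rightarrow$ not a mix).} Suppose $b$ separates $\vsf$ from $\vsg$ via the first clause, so there is $X$ with $\min(\vsf\restriction_X)\leq b$ and $\min(\vsg\restriction_X)$ unbounded. I want to produce an extraction $f\in\vsf$ with no asymptotically equivalent $g\in\vsg$. Take $f$ to be the extraction that on positions in $X$ picks a minimal component of $\vsf$ (value $\leq b$) and off $X$ picks arbitrarily; then $f\restriction_X$ is bounded. For any candidate $g\in\vsg$ we have $g(n)\geq\min(\vsg(n))$ for all $n$, so $g\restriction_X \geq \min(\vsg\restriction_X)$ is unbounded, hence $g\restriction_X$ is unbounded while $f\restriction_X$ is bounded, so $f\not\sim g$. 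The second clause is symmetric: take $f$ picking the maximal component on $X$ (so $f\restriction_X$ is unbounded) and note every $g\in\vsg$ satisfies $g\restriction_X\leq\max(\vsg\restriction_X)\leq b$ on $X$, so $g\restriction_X$ is bounded, again $f\not\sim g$. So no extraction of $\vsf$ is matched.

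\emph{The hard direction (not a mix $\Rightarrow$ some $b$ separates).} Assume $\vsf$ is not an asymptotic mix of $\vsg$, i.e.\ there is a fixed extraction $f\in\vsf$ such that for every $g\in\vsg$ we have $f\not\sim g$, witnessed by some set $X_g$ on which exactly one of $f\restriction_{X_g}$, $g\restriction_{X_g}$ is bounded. The goal is to extract a single finite bound $b$ and a single witness set doing the separation at the level of $\min$ and $\max$. The natural move is a compactness / König's-lemma style argument on the (finitely many, since dimensions are fixed) choices of components: think of building $g\in\vsg$ greedily position by position trying to ``track'' $f$, and argue that if no bound $b$ separates then one can actually assemble such a $g$ with $g\sim f$, contradicting the hypothesis. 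Concretely, I would argue the contrapositive: suppose \emph{no} $b$ separates $\vsf$ from $\vsg$; I will construct $g\in\vsg$ with $g\sim f$. For this one wants: for every $X$, $f\restriction_X$ bounded $\iff$ $g\restriction_X$ bounded. The failure of the first separation clause for every $b$ says: whenever $\min(\vsf\restriction_X)$ is bounded then $\min(\vsg\restriction_X)$ is bounded; failure of the second says: whenever $\max(\vsg\restriction_X)$ is bounded (for the $g$ we are building — here one has to be careful, $\max$ refers to the whole vector sequence $\vsg$) then $\max(\vsf\restriction_X)$ is bounded. The cleanest route is probably to define $g$ directly from $f$: on position $n$ let $g(n)$ be the component of $\vsg(n)$ closest to $f(n)$ (say in the sense that minimizes $|g(n)-f(n)|$, or more robustly: the largest component of $\vsg(n)$ that is $\leq f(n)$ if one exists, else the smallest component). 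Then one has to check, using the non-separation hypotheses applied to suitable sets $X$, that $f$ and this $g$ are bounded on the same sets. I expect the main obstacle to be exactly this verification — turning the two $\exists X$-style non-separation conditions (which quantify over $\min$ and $\max$ of the vector sequences, not over the chosen extraction $g$) into the pointwise control needed to prove $g\sim f$, and in particular handling the subtlety that $\max(\vsg\restriction_X)$ can be much larger than $g\restriction_X$. One likely needs to split, for each $X$, into the sub-set of positions where $f(n)$ lies below the minimum component of $\vsg(n)$, above the maximum, or in between, and apply the relevant non-separation clause to each piece; the ``in between'' case is where $g$ can genuinely track $f$, and the two extreme cases are precisely what the two clauses of Definition~\ref{def:separation} were designed to rule out. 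I would carry out that case analysis carefully, as it is the crux of the lemma.
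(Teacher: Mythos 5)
Your first direction (a separating $b$ implies $\vsf$ is not a mix of $\vsg$) is correct and is essentially the paper's argument: extract the minimal components on $X$ (resp.\ maximal) and compare against $\min(\vsg\restriction_X)$ (resp.\ $\max(\vsg\restriction_X)$). No issues there.

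The hard direction has a genuine gap, and it sits exactly at the step you defer as ``the crux''. You pick the same key object as the paper --- the pointwise closest-component (``best response'') extraction $g_f$ --- but you run the argument the other way: assuming no $b$ separates, you try to verify $g_f\sim f$. That verification is not merely delicate; it is false for the $g$ you propose. Take $\vsf(n)=(n+1)$ (dimension $1$) and $\vsg(n)=(0,\,2n+3)$ (dimension $2$). No $b$ separates: any $X$ with $\min(\vsf\restriction_X)\leq b$ or with $\max(\vsg\restriction_X)\leq b$ is finite, so neither clause of Definition~\ref{def:separation} can hold. Moreover $\vsf$ \emph{is} an asymptotic mix of $\vsg$, witnessed by $g(n)=2n+3\sim f(n)=n+1$. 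Yet both of your recipes --- minimizing $|g(n)-f(n)|$, or ``largest component $\leq f(n)$, else smallest'' --- select $g\equiv 0$, which is not asymptotically equivalent to $f$. So the planned case analysis (``below the min / in between / above the max'') cannot close the argument for that $g$: the matching extraction may have to be pointwise far from $f$ (here at distance about $n$), which is precisely what a locally-closest choice never produces. The paper sidesteps this by arguing directly rather than contrapositively: assuming $\vsf$ is not a mix, it fixes the bad $f$, uses only the single fact $f\not\sim g_f$, takes one witness set $X$ on which exactly one of $f\restriction_X$, $g_f\restriction_X$ is bounded, and reads a separating $b$ off the two cases (e.g.\ if $f\restriction_X$ is bounded and $g_f\restriction_X$ is not, then $\min(\vsg\restriction_X)$ must be unbounded while $\min(\vsf\restriction_X)$ is bounded). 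That direction never needs $g_f\sim f$, only an inequivalence witness, which is why it goes through. To repair your route you would either have to switch to this direct argument, or construct the matching $g$ non-locally, using the no-separation hypothesis to allow $g(n)$ to be a large component far above $f(n)$ wherever $f$ is large.
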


\begin{proof}
  We start with the forward implication. Given a~number sequence $f$ we define the \emph{best response} $g_f\in\vsg$ for $n\in\nat$ as
  \[
    g_f(n)=\argmin_{x\in \vsg(n)}|f(n)-x|. 
  \]
  So $g_f$ is the choice of components in $\vsg$ that minimize the distance to $f$.

  Since $\vsf$ is not an~asymptotic mix of $\vsg$, there exists $f\in\vsf$ such that for all $g\in\vsg$, $f\not\sim g$;
  in particular we have $f\not\sim g_f$. This means that there exists $X\subseteq \nat$ such that one of the following
  holds:
  \begin{itemize}
  \item $f\restriction_X$ is bounded and $g_f\restriction_X$ is unbounded,
  \item $g_f\restriction_X$ is bounded and $f\restriction_X$ is unbounded.
  \end{itemize}
  By the definition of $g_f$, in the first case $\min(\vsg\restriction_X)$ is unbounded while $\min(\vsf\restriction_X)$ is clearly bounded (by some $b\in\nat$). In the second case we have
  $\max(\vsg\restriction_X)\leq b$ for some $b$ while $\max(\vsf\restriction_X)$ is unbounded. Therefore, there exists $b\in\nat$ that separates $\vsf$
  from~$\vsg$. 
  
  For the backward implication, assume that $b$ separates $\vsf$ from $\vsg$. In the first
  case of Definition~\ref{def:separation} it suffices to construct $f\in\vsf$ by picking a~component smaller than $b$ if it exists, and an~arbitrary
  component otherwise. In the second case, we pick the maximal component.
\end{proof}

\subsection{Wrappings}
\label{sec:wrappings}

Let us now explain in more detail how vector sequences are encoded using families of intervals. Recall the definition of $\inter$ from page~\pageref{thm:main}.

\begin{definition}[Wrappings]
\label{def:wrapping}
Let $\ifi$, $\ifj$ be families of intervals. We say that $\ifj$ \emph{wraps} $\ifi$ if $\inter(\ifj)=\source\ifi$ and for each interval $[x,y]\in\ifj$ we have $\length([x,y])\geq 1$.
\end{definition}

Let $\ifi$, $\ifj$ be families of intervals such that $\ifj$ wraps $\ifi$ and take $[x,y]\in\ifj$. Then
$\inter([x,y])=\set{x_1,x_2,\ldots,x_{\ifj(x)}}$ such that $x<x_1<\cdots < x_{\ifj(x)}<y$ and $\ifj(x)\geq 1$. All the $x_i$s are sources
of some intervals in $\ifi$. Define:
\[
  \vec{\ifj}(\ifi,x)=\big(\ifi(x_1),\ifi(x_2),\ldots,\ifi(x_{\ifj(x)})\big). 
\]

Extend this definition to branches $\pi$ in such a~way that if $\ifj$ appears infinitely often in $\pi$ then $\vec{\ifj}(\ifi,\pi)$ is a~vector sequence:
if $\pi\cap\source\ifj=\{x_0<x_1<\ldots\}$ then $\vec{\ifj}(\ifi,\pi)(k)$ equals $\vec{\ifj}(\ifi,x_k)$.

In this way we can encode vector sequences using two families of intervals $\ifi$, $\ifj$. The lengths of intervals in the
outer layer $\ifj$ are the dimensions of the vectors, while the lengths of the intervals in $\ifi$ are the
components. We illustrate this in Figure~\ref{fig:wrapping}.
\begin{figure}[h]
  \includegraphics[width=0.5\textwidth]{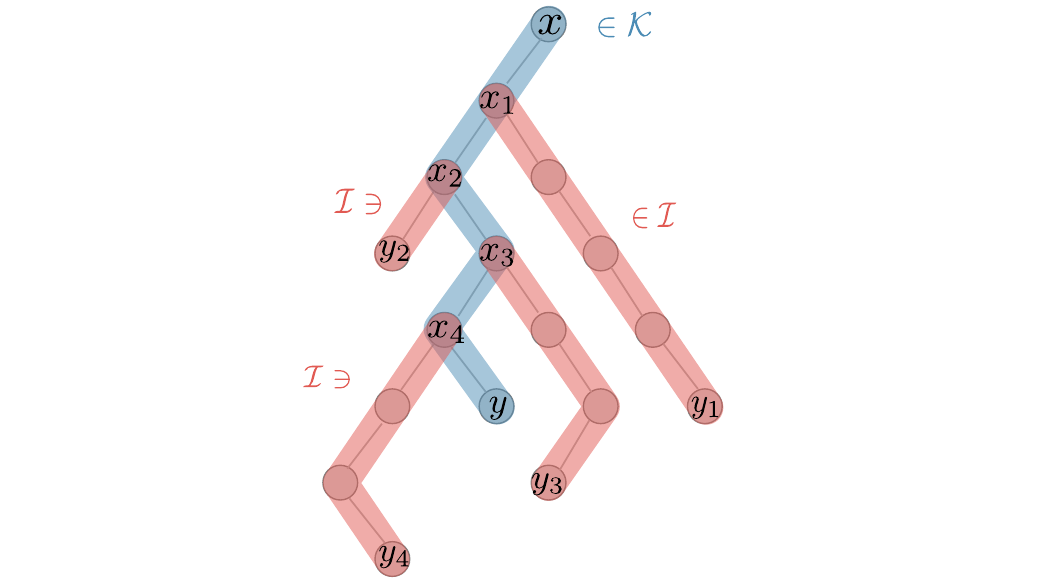}
  \caption{ In this partial tree the set $[x,y]$ is an~interval in~$\ifj$, and $[x_i,y_i]$ are intervals in $\ifi$,
    $1\leq i\leq 4$. We have $\ifj(x)=4$, $\ifi(x_1)=3$, $\ifi(x_2)=0$, $\ifi(x_3)=2$, and $\ifi(x_4)=2$. The vector
    that is encoded in $x$ is $\vec{\ifj}(\ifi, x)=(3,0,2,2)$.}
  \label{fig:wrapping}
\end{figure}


Just as in the case of asymptotic equivalence above, it is not possible to speak of whether a vector sequence that is encoded using families of intervals as above is an asymptotic mix of another, without imposing some structure. For this purpose, for number sequences we had the notion of a family preceding another in Definition~\ref{def:precedes}, for vector sequences we use a stronger condition. 

\begin{definition}[Tail-precedes]
\label{def:tail-prec}
Let $\ifj_1$, $\ifj_2$ be isolated families of intervals. We say that $\ifj_1$ \emph{tail\=/precedes} $\ifj_2$ if for all $x'\in\source{\ifj_2}$ there exists $y\in\target{\ifj_1}$ such that $y<x'$ and there is no node strictly between $y$ and $x'$ that is a source of $\ifj_1$ or $\ifj_2$.
\end{definition}

Note that tail\=/preceding is a~stronger property than preceding given in Definition~\ref{def:precedes}, therefore if
$\ifj_1$ tail\=/precedes $\ifj_2$, and $\ifj_2$ appears infinitely often in some branch $\pi$ then the sequences $\ifj_1^\pre(\pi)$ and
$\vec{\ifj}_1^\pre(\pi)$ are well\=/defined. This enables us to talk about asymptotic mixes and to apply Lemma~\ref{lem:c1pre}.

However, what guarantees that the relevant families of intervals are structured in such a way, \ie one tail-preceding the other. This is the subject of the next lemma, which essentially says that if $\ifj$ is bounded but not eventually constant then it is possible to find two subfamilies each of certain lengths such that one tail-precedes the other. 

\begin{lemma}
  \label{lem:d1d2}
  Let $\ifj$ be a~family of intervals such that 
  \begin{align*}
    \pr\Bigg[\wedge\begin{cases}
      \ifj\ \io\\
      \limsup\ifj <\infty\\
      \text{$\ifj$ is not eventually constant}
    \end{cases}\Bigg]>0.
  \end{align*}
  Then there exist two numbers $\ell_1>\ell_2\in\nat$ and isolated $\ifj_1, \ifj_2\subseteq \ifj$ such that:
  \begin{itemize}
  \item every interval in $\ifj_1$ has length $\ell_1$,
  \item every interval in $\ifj_2$ has length $\ell_2$,
  \item $\ifj_1$ tail-precedes $\ifj_2$, and 
  \item $\pr\big[\ifj_2\ \io\big]>0$.
  \end{itemize} 
\end{lemma}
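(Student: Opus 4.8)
We are given a family $\ifj$ that, with positive probability, is visited infinitely often, has bounded $\limsup$, and is not eventually constant. The goal is to carve out two subfamilies $\ifj_1, \ifj_2 \subseteq \ifj$ of fixed (distinct) interval lengths $\ell_1 > \ell_2$, with $\ifj_1$ tail-preceding $\ifj_2$ and $\ifj_2$ visited infinitely often with positive probability. The first step is to exploit $\aleph_0$-additivity: since $\limsup \ifj < \infty$ almost surely on the relevant event, there is a single bound $b$ such that with positive probability the event ``$\ifj\ \io$, $\ifj(\pi) \le b$, and $\ifj$ is not eventually constant'' holds. Now for each pair $(\ell_1, \ell_2)$ with $b \ge \ell_1 > \ell_2 \ge $ (the minimal length, which we may as well take to be at least $1$ since wrappings require length $\ge 1$; in any case finitely many pairs), let $\ifj^{(\ell)}$ denote the intervals of $\ifj$ of length exactly $\ell$. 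Being ``not eventually constant'' while bounded by $b$ means that on the branch, infinitely many sources have a length that differs from the length of the next source on the branch; since there are only finitely many lengths $\le b$, by pigeonhole there is some ordered pair $(\ell_1, \ell_2)$, $\ell_1 \ne \ell_2$, such that infinitely often a source of length $\ell_1$ is the nearest preceding $\ifj$-source (on the branch) of a source of length $\ell_2$. WLOG (swapping names) we may assume $\ell_1 > \ell_2$. Applying $\aleph_0$-additivity once more over the finitely many pairs, there is a single pair $(\ell_1, \ell_2)$ for which the event ``$\ifj^{(\ell_2)}$ is visited infinitely often, and infinitely often its source is immediately $\ifj$-preceded by a source of $\ifj^{(\ell_1)}$'' has positive probability; in particular $\pr[\ifj^{(\ell_2)}\ \io] > 0$.

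Next I would perform a pruning to get genuine tail-precedence. Set $\ifj_2 = \ifj^{(\ell_2)}$. A priori $\ifj^{(\ell_1)}$ does not tail-precede $\ifj_2$: the definition of tail-precedes requires that \emph{every} source $x'$ of $\ifj_2$ has some target $y \in \target{\ifj_1}$ with $y < x'$ and no source of $\ifj_1 \cup \ifj_2$ strictly between. So I would define $\ifj_1 \subseteq \ifj^{(\ell_1)}$ to consist of exactly those length-$\ell_1$ intervals $[x,y]$ such that the nearest $\ifj$-source strictly below $y$ that is a source of $\ifj^{(\ell_1)}$ or of $\ifj_2$ is in fact a source of $\ifj_2$ — i.e. $[x,y]$ is the ``last $\ell_1$-interval before an $\ell_2$-interval, with nothing of those two families in between''. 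Symmetrically one should also throw away from $\ifj_2$ the sources that are \emph{not} immediately $\{\ifj_1,\ifj_2\}$-preceded by a length-$\ell_1$ interval — but one has to be careful, since ``immediately preceded'' is defined relative to $\source{\ifj_1} \cup \source{\ifj_2}$, and these two prunings are intertwined. The clean way is to define $\ifj_1$ and $\ifj_2$ by mutual recursion down the tree, or equivalently: keep a source $x'$ of $\ifj^{(\ell_2)}$ in $\ifj_2$ iff the nearest strict ancestor of $x'$ that is a source in $\ifj^{(\ell_1)} \cup \ifj^{(\ell_2)}$ exists and lies in $\ifj^{(\ell_1)}$ — and then let $\ifj_1$ be precisely the $\ifj^{(\ell_1)}$-intervals that are this nearest ancestor for some surviving $x' \in \source{\ifj_2}$. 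One checks $\ifj_1, \ifj_2$ are isolated (they are subfamilies of disjoint families, and $\ell_1 \ne \ell_2$ forces the underlying intervals disjoint since they are subsets of the pairwise-disjoint $\ifj$), every interval of $\ifj_i$ has length $\ell_i$, and $\ifj_1$ tail-precedes $\ifj_2$ essentially by construction — the ``no node strictly between'' clause in Definition~\ref{def:tail-prec} is where one must verify that removing the non-surviving $\ell_2$-sources did not create a new $\ell_2$-source strictly between $\target{\ifj_1}$ and $x'$, which it cannot since we only removed sources.

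Finally, the positive-probability condition $\pr[\ifj_2\ \io] > 0$: this is exactly the event isolated by the double application of $\aleph_0$-additivity, since on that event branch $\pi$ has infinitely many length-$\ell_2$ sources each immediately $\ifj$-preceded by a length-$\ell_1$ source, and such sources are precisely the ones that survive into $\ifj_2$ (being immediately $\ifj$-preceded by a length-$\ell_1$ source is a stronger statement than being immediately $\{\ifj^{(\ell_1)},\ifj^{(\ell_2)}\}$-preceded, so these survive) — hence $\ifj_2$ is visited infinitely often on that positive-measure event.

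The main obstacle is the bookkeeping around the interlocking definitions of $\ifj_1$ and $\ifj_2$ and verifying the ``no node strictly between'' clause of tail-precedence survives the pruning; the probabilistic content is light and reduces entirely to two routine uses of countable additivity (finitely many $b$, then finitely many pairs $(\ell_1, \ell_2)$). One should also double-check the edge case where the two recurring lengths differ but the smaller is the length that precedes — the WLOG swap of $\ell_1 \leftrightarrow \ell_2$ is harmless because tail-precedence and the $\io$ requirement only care about which family plays which role, and we are free to take the larger as $\ifj_1$.
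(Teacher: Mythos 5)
There is a genuine gap, and it sits exactly where the probabilistic content of the lemma lives. Tail-precedence (Definition~\ref{def:tail-prec}) demands, for every $x'\in\source{\ifj_2}$, a \emph{target} $y\in\target{\ifj_1}$ with $y<x'$ and no source of $\ifj_1$ or $\ifj_2$ strictly between $y$ and $x'$. Your pruning keeps an $\ell_2$-source $x'$ when its nearest strict ancestor among the sources of the two fixed-length subfamilies is an $\ell_1$-source $x$; but in a tree the interval $[x,y]$ rooted at $x$ may branch off the path from $x$ to $x'$, so its target $y$ need not be an ancestor of $x'$ at all. In that case no admissible target exists: any other $\ifj_1$-interval $[x'',y'']$ with $y''<x'$ has $y''<x<x'$ (comparability along the path plus disjointness of intervals forces this), so the $\ifj_1$-source $x$ lies strictly between $y''$ and $x'$ and the clause fails. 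Hence your $\ifj_1,\ifj_2$ need not tail-precede. Repairing this forces you to ensure that \emph{targets} of the length-$\ell_1$ intervals are met infinitely often along the branch, and that is not bookkeeping but a probabilistic fact: it is precisely Claim~\ref{lem:ob1} (for a family whose intervals all have one fixed length, almost surely every branch seeing sources infinitely often also sees targets infinitely often), which the paper uses to pass from $\pr[\source{\ifi_1}\ \io\wedge\source{\ifi_2}\ \io]>0$ to $\pr[\target{\ifi_1}\ \io\wedge\source{\ifi_2}\ \io]>0$, and only then interleaves targets of the long family with sources of the short one via Claim~\ref{claim:alternatingxy}. Your closing remark that ``the probabilistic content is light'' misses exactly this step.

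A secondary flaw: the pigeonhole on immediately adjacent length-pairs followed by ``WLOG swap $\ell_1\leftrightarrow\ell_2$'' is not sound, because the adjacency event is order-dependent; from ``a length-$3$ source is immediately $\ifj$-followed by a length-$5$ source infinitely often'' you cannot conclude the same for the swapped pair (on a branch with length pattern $1,2,3,1,2,3,\ldots$ the adjacent pair $(2,1)$ never occurs even though $(1,2)$ occurs infinitely often). This is fixable, either by observing that a bounded, not eventually constant sequence has infinitely many descents, or, more simply and as the paper does, by dropping immediate adjacency altogether: countable additivity already gives two distinct values each occurring infinitely often with positive probability, and one just names the larger one $\ell_1$; your own nearest-ancestor argument then yields infinitely many surviving $\ell_2$-sources on such branches without any adjacency hypothesis. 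But even with that repair, the missing source-to-target step above remains the essential gap.
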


\begin{proof}
  We have assumed that there is a~non\=/zero probability of picking a~branch $\pi$ such that $\ifj(\pi)$ is a~sequence that
  is infinite, bounded, and not eventually constant. This means that with a~positive probability there are two numbers that both appear infinitely often in the sequence $\ifj(\pi)$, \ie the set
  \begin{align*}
    \left\{\pi\st \begin{aligned}
      &\exists \ell_1>\ell_2\in\nat.\\
      &\text{$\ifj(\pi)$ contains infinitely often $\ell_1$ and $\ell_2$}
      \end{aligned}\right\},
   \end{align*}
  has non\=/zero probability
  Consequently, as there are countably many choices of $\ell_1>\ell_2\in\nat$, there exist two numbers $\ell_1>\ell_2\in\nat$ such that:
  \[
    \pr\big[\text{$\ifj$ contains infinitely often $\ell_1$ and $\ell_2$}\big]>0.
  \]
  Let $\ifi_1\subseteq\ifj$ (respectively $\ifi_2\subseteq\ifj$) be the intervals in $\ifj$ whose length is
  $\ell_1$ (respectively $\ell_2$). The probability that both $\ifi_1$ and $\ifi_2$ appear infinitely often is non\=/zero. This means that:
  \begin{align*}
    \pr\big[\source{\ifi_1}\ \io\wedge\source{\ifi_2}\ \io\big]>0. 
  \end{align*}
  From Claim~\ref{lem:ob1} we have:
  \begin{align*}
    \pr\big[\target{\ifi_1}\ \io \wedge \source{\ifi_2}\ \io\big]>0. 
  \end{align*}
  
  Now we prove that we can find subsets of $\ifi_1,\ifi_2$ for which the last two bullet points in the statement of the lemma hold. 
  \begin{claim}
  \label{claim:alternatingxy}
  Let $X$, $Y\subseteq \two^\ast$ be such that $\pr\big[X\ \io\wedge Y\ \io\big]>0$. Then there exist $X'\subseteq X$ and $Y'\subseteq Y$
  such that between any two nodes $x<y$ in $Y'$ there exists a~node $u\in\inter([x,y])$ that belongs to $X'$ and moreover $\pr[Y'\ \io\wedge X'\ \io]>0$. 
\end{claim}

\begin{proof}
  We construct for all $n>0$, sets $X_n\subseteq X$, $Y_n\subseteq Y$ and put $X'=\bigcup_{n>0} X_n$, $Y'=\bigcup_{n>0} Y_n$. For
  any node $y$ we say that $x\in X$ is an~$X$\=/successor of $y$ if $x>y$ and there is no node strictly between $x$ and
  $y$ that is in $X$. Similarly we define $Y$\=/successors.

  Let $Y_0=\set{\epsilon}$ where $\epsilon$ is the root node and define for all $n>0$:
  \begin{align*}
    X_{n}\eqdef&\bigcup_{y\in Y_{n-1}}\set{x\in X\st \text{$x$ is an~$X$\=/successor of $y$}},\\
    Y_{n}\eqdef&\bigcup_{x\in X_{n}}\ \;\set{y\in Y\st \text{$y$ is a~$Y$\=/successor of $x$}}. 
  \end{align*}

  We can easily observe that for $X'$, $Y'$ constructed this way we have that between every two nodes in $Y'$ there is always a~node in $X'$ (in fact, also symmetrically, the nodes in $X'$ are separated by nodes in $Y'$). Let $\pi$ be a~branch where both $X$ and $Y$ appear infinitely often. Then the first non\=/root node in this branch that belongs to $X$
  belongs to $X_1$, after which the first node that belongs to $Y$ belongs to $Y_1$, and so on. Consequently both $X'$
  and $Y'$ also appear infinitely often in $\pi$. Therefore, $\pr\big[Y'\ \io\wedge X'\ \io\big]>0$.
\end{proof}

  Set $X=\target{\ifi_1}$, $Y=\source{\ifi_2}$ and apply Claim~\ref{claim:alternatingxy} resulting in $X'\subseteq X$ and
  $Y'\subseteq Y$. We set $\ifj_1$ (respectively $\ifj_2$) to be the intervals whose targets are in $X'$ (respectively
  sources in $Y'$). The statement of the lemma now can be deduced from the properties of $X'$ and $Y'$.
\end{proof}

Given families of intervals $\ifi,\ifj$ such that the latter wraps the former, we are now able to sketch how to express in \msonab that $\ifj$ is eventually constant.



\newcommand{\proconstant}[1]{
  Let $\ifi$, $\ifj$ be two families of intervals such that $\ifj$ wraps $\ifi$ and we have:
  \begin{align*}
    \pr\Bigg[\begin{cases}
      \ifj\ \io &\Rightarrow \limsup\ifj <\infty, \text{ and}\\
      \ifi\ \io &\Rightarrow \liminf\ifi = \infty
      \end{cases}\Bigg]=1.
  \end{align*}
  Then the following property is definable in \msonab:
  \begin{align*}
    \pr\big[\ifj\ \io\ \wedge \text{$\ifj$ is not eventually constant}\big]>0.
  \end{align*}
}

\begin{proposition}
  \label{pro:constant}
  \proconstant{}
\end{proposition}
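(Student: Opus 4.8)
\textbf{Proof plan for Proposition~\ref{pro:constant}.}
The plan is to express ``$\ifj$ is not eventually constant with positive probability'' by combining Lemma~\ref{lem:d1d2}, Lemma~\ref{lem:msoulemma}, Lemma~\ref{lem:separates}, and Lemma~\ref{lem:c1pre}. The starting observation is that, under the hypotheses, on almost every branch where $\ifj$ appears infinitely often the sequence $\ifj(\pi)$ is bounded, so ``not eventually constant'' is the same as ``two distinct values appear infinitely often''. By Lemma~\ref{lem:d1d2}, the event in the conclusion holds if and only if there exist $\ell_1>\ell_2$ and isolated subfamilies $\ifj_1,\ifj_2\subseteq\ifj$ with all intervals of $\ifj_1$ of length $\ell_1$, all intervals of $\ifj_2$ of length $\ell_2$, $\ifj_1$ tail-preceding $\ifj_2$, and $\pr[\ifj_2\ \io]>0$. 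The point of passing to such $\ifj_1,\ifj_2$ is that, since $\ifj_1$ tail-precedes $\ifj_2$, on a branch where $\ifj_2$ occurs infinitely often the vector sequences $\vec{\ifj}_1^{\pre}(\pi)$ (reading the $\ifi$-components wrapped inside the $\ifj_1$-intervals that immediately precede $\ifj_2$-intervals) and $\vec{\ifj}_2(\pi)$ are both well-defined and synchronised, exactly as in Remark~\ref{rem:suc-correspondence}.

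Next I would turn ``$\ell_1>\ell_2$'' into a dimension statement via Lemma~\ref{lem:msoulemma}: a vector sequence of dimension $\ell_1$ is, in general, \emph{not} an asymptotic mix of any vector sequence of dimension $\ell_2<\ell_1$, and conversely a vector sequence of dimension $\ell_2$ always trivially is an asymptotic mix of a dimension-$\ell_1$ sequence in the other direction is not what we use — the key is that the \emph{specific} dimension-$\ell_1$ witness of Lemma~\ref{lem:msoulemma} fails to be mixed. So the formula will \emph{guess} the families $\ifj_1,\ifj_2$ (and the wrapped $\ifi$), assert the structural conditions (tail-precedes, $\ifj_2\ \io$ with positive probability, lengths of $\ifj_1,\ifj_2$ equal to two distinct constants — which can be enforced by $\nabla$-statements about lengths being bounded and by Claim~\ref{lem:ob1}-style reasoning without naming $\ell_1,\ell_2$ explicitly, just asserting the two families have constant but unequal length), and then assert that the encoded vector sequence $\vec{\ifj}_1^{\pre}(\pi)$ is \emph{not} an asymptotic mix of $\vec{\ifj}_2(\pi)$ on a positive-probability set of branches. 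This last non-mix condition is where Lemma~\ref{lem:separates} enters: non-mix is equivalent to the existence of a separating bound $b$, and Remark~\ref{rem:whyseparation} tells us why we want this reformulation — the existential over $b$ ranges over $\nat$, so by $\aleph_0$-additivity of $\pr$ we may pull it outside the probability, and for each fixed $b$ the two clauses of Definition~\ref{def:separation} are statements about boundedness of $\min$/$\max$ of the components restricted to an \mso-quantifiable set of positions $X$, which by the synchronisation of Remark~\ref{rem:suc-correspondence} becomes a set of $\ifj_2$-intervals we can quantify over. Each such boundedness-on-a-subsequence statement is of the type handled by Lemma~\ref{lem:c1pre} (asymptotic inequivalence with positive probability), together with the machinery of Section~4 (Lemma~\ref{lem:char}) for saying ``$\limsup$ of a wrapped family is infinite''.

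The formula is then assembled as: $\exists \ifj_1,\ifj_2,\ifi$ (these being sets of sources/targets) such that $\ifj_1,\ifj_2$ are isolated subfamilies of $\ifj$ of constant unequal lengths, $\ifj_1$ tail-precedes $\ifj_2$, $\ifj$ wraps $\ifi$, $\pr[\ifj_2\ \io]>0$, and the separation condition holds — all of which we have just argued is in \msonab. The equivalence with the target property is: forward, Lemma~\ref{lem:d1d2} produces $\ifj_1,\ifj_2$ and Lemma~\ref{lem:msoulemma} guarantees we may additionally choose the wrapped $\ifi$-components so that $\vec{\ifj}_1^{\pre}(\pi)$ realises the non-mixable witness (hence separation holds), while backward, if separation holds on a positive-probability set then $\vec{\ifj}_1^{\pre}(\pi)\not\sim$-mixes $\vec{\ifj}_2(\pi)$ there, which by Lemma~\ref{lem:msoulemma} forces $\ell_1\neq\ell_2$ and in particular $\ifj$ takes at least two values infinitely often on those branches, so it is not eventually constant. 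I expect the main obstacle to be the bookkeeping in the backward direction and in the formalisation of ``guess $\ifi$ so that the wrapped vector sequence is the Lemma~\ref{lem:msoulemma} witness'': one must be careful that the separating-bound reformulation, after pulling $\exists b$ out of the measure, genuinely reduces to finitely many applications of Lemma~\ref{lem:c1pre} and Lemma~\ref{lem:char}, and that the sets $X$ of positions in Definition~\ref{def:separation} really are expressible via the $\pre$/$\suc$ \mso-definable correspondence rather than requiring independent quantification over two unsynchronised sequences — which is precisely why the tail-precedes condition (not merely precedes) was needed.
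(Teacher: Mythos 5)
Your plan overlaps with the paper's forward direction (Lemma~\ref{lem:d1d2} to get tail\=/preceding subfamilies, Lemma~\ref{lem:separates} plus $\aleph_0$\=/additivity, Lemma~\ref{lem:c1pre} for the innermost inequivalence), but it breaks down in two places. First, your formula is supposed to assert that the guessed families $\ifj_1,\ifj_2\subseteq\ifj$ ``have constant but unequal length''. That assertion is not available: saying that all intervals of a family have one and the same length, and that two such constants differ, is exactly the counting/length comparison that \msonab is not known to express --- it is the very property (eventual constancy together with $\ell_1>\ell_2$) that Proposition~\ref{pro:constant} and Theorem~\ref{thm:const-def} are trying to capture, so this step is circular. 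In the paper's condition $(\ast)$ no length constraint on $\ifj_1,\ifj_2$ appears at all; the constant lengths $\ell_1>\ell_2$ arise only inside the proof of the forward implication via Lemma~\ref{lem:d1d2}, and the equivalence must be established for arbitrary isolated tail\=/preceding subfamilies with $\pr[\ifj_2\ \io]>0$.

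Second, your backward direction is incorrect: from ``$\vec{\ifj}_1^{\pre}$ is not an asymptotic mix of $\vec{\ifj}_2$ on a positive\=/probability set'' you conclude via Lemma~\ref{lem:msoulemma} that $\ell_1\neq\ell_2$, but that lemma only asserts the \emph{existence} of some non\=/mixable sequence of higher dimension; non\=/mixability does not force distinct dimensions (already in dimension $1$, the sequence $1,2,3,\ldots$ is not an asymptotic mix of $1,1,1,\ldots$). Moreover your formula compares the vector sequences actually wrapped by the given $\ifi$, whose components you cannot ``choose'' to realise the witness of Lemma~\ref{lem:msoulemma} --- you may only shorten them, which is why the paper quantifies over trimmings $\ifi_1'\leq\ifi_1$ and invokes the hypothesis $\liminf\ifi=\infty$ to realise $\vsf$ asymptotically. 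As stated, your formula can be true even when $\ifj$ is almost surely eventually constant, e.g.\ when the given components wrapped by $\ifj_1$ and $\ifj_2$ simply differ wildly. The paper avoids this with the alternation $\exists\ifi_1'\leq\ifi_1.\ \forall\ifi_2'\leq\ifi_2.\ \exists\ifk_1.\ \forall\ifk_2$ in $(\ast)$: the universal quantifiers make the converse provable by a copying argument (copy $\ifi_1'$ into $\ifi_2'$ and $\ifk_1$ into $\ifk_2$, using eventual constancy of $\ifj$ to match dimensions and $\liminf\ifi=\infty$ to get $\ifk_1^{\pre}\sim\ifk_2$ almost surely), refuting~\eqref{eq:e-not-sim-e}. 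These universal trimming/extraction quantifiers, and the role of the second hypothesis, are missing from your plan; relatedly, the separating bound $b$ of Lemma~\ref{lem:separates} belongs in the proof of the forward implication (it guides the choice of the extraction $\ifk_1$), not in the formula itself.
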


Let $\ifi$, $\ifj$ be such that $\ifj$ wraps $\ifi$. We say that $\ifi'\subseteq\ifi$ is an~\emph{extraction} of
$(\ifj,\ifi)$ if for all $[x,y]\in\ifj$ there is exactly one element of $\source{\ifi'}$ that belongs to $\inter([x,y])$.

We write $\ifi_1\leq \ifi_2$ if the sources of the two families of intervals coincide and the targets of $\ifi_1$ are ancestors of the targets of $\ifi_2$, \ie~for every interval $[x,y]\in\ifi_1$ there is an~interval $[x,y']\in \ifi_2$ such that $[x,y]\subseteq [x,y']$ (equivalently $y\leq y'$).

We claim that the statement in the proposition is equivalent to the following:
\begin{itemize}
  \item[$(\ast)$] there exist isolated  $\ifj_1,\ifj_2\subseteq\ifj$, where $\ifj_1$ tail\=/precedes $\ifj_2$, $\pr[\ifj_2\ \io]>0$, and if
    $\ifi_1,\ifi_2\subseteq \ifi$ are such that $\ifj_i$ wraps $\ifi_i$, $i\in\set{1,2}$ then:
    \begin{align}
      &\exists\ifi_1'\leq\ifi_1.\ \ \forall \ifi_2'\leq\ifi_2.\nonumber\\
      &\ \ \exists\ifk_1\subseteq\ifi_1'\text{ extraction of }(\ifj_1,\ifi_1').\nonumber\\
      &\ \ \forall\ifk_2\subseteq\ifi_2'\text{ extraction of }(\ifj_2,\ifi_2').\nonumber\\
      &\qquad\pr\big[\ifk_2\ \io\wedge \ifk_1^{\pre}\not\sim\ifk_2\big]>0.
        \label{eq:e-not-sim-e}
    \end{align}
  \end{itemize}

It is not hard to see that $(\ast)$ is \msonab-definable, see Appendix~\ref{app:definability}. For \eqref{eq:e-not-sim-e} use Lemma~\ref{lem:c1pre}. 

Roughly, the intuition behind this proposition is as follows. The statement of the proposition can be equivalently written as: there exist two numbers $\ell_1>\ell_2$ such that with nonzero probability $\ifj$ alternates between them. But this property is hard to express in our logic; it requires counting to make sure that $\ell_1>\ell_2$. To remedy this difficulty we make use of Lemma~\ref{lem:msoulemma}. This lemma provides us with an important equivalence between a property that is hard to express (a) $\ell_1>\ell_2$ and a property that we can express in our logic more easily: (b) there exists a vector sequence of dimension $\ell_1$ that is not an asymptotic mix of any vector sequence of dimension~$\ell_2$. 

We start with an~explanation of $(\ast)$ and then proceed to give a sketch of the proof. The~complete proof can be found in Appendix~\ref{app:proof-of-prop}.

The families of intervals $\ifj_1\subseteq\ifj$ and $\ifj_2\subseteq\ifj$ are meant to represent two families of eventually constant intervals of two distinct lengths $\ell_1>\ell_2$, as in Lemma~\ref{lem:d1d2}. Once $\ifj_1$ and $\ifj_2$ are fixed, the families~$\ifi_1$ and $\ifi_2$ are defined uniquely as the families of those intervals in $\ifi$ that are wrapped by some intervals in $\ifj_1$ and $\ifj_2$ respectively. With $\ifi'_1\leq \ifi_1$  we will imitate the vector sequence $\vsf$ of dimension $\ell_1$ that is not an asymptotic mix of any vector sequence $\vsg$ of dimension $\ell_2$ (it exists because of Lemma~\ref{lem:msoulemma}). The rest of~$(\ast)$ expresses that $\vsf$ is not an asymptotic mix of $\vsg$. Thus, $\ifk_1$ represents a~choice of $f\in\vsf$, while $\ifk_2$ represents a~choice of $g\in\vsg$. Finally, the last line of~$(\ast)$ (see~\eqref{eq:e-not-sim-e}) says that $f\not\sim g$. Note here that, the fact that $\ifj_1$ tail\=/precedes $\ifj_2$ implies that $\ifk_1$ precedes $\ifk_2$, so $\ifk_1^\pre$ is well\=/defined.


\noindent{($\Rightarrow$)} The idea for the forward implication follows the explanation given above. We construct $\ifj_1$, $\ifj_2$ of respective
lengths $\ell_1$ and $\ell_2$ using Lemma~\ref{lem:d1d2}. From Lemma~\ref{lem:msoulemma}, we set $\vsf$ to be a vector
sequence of dimension $\ell_1$ that is not an~asymptotic mix of any vector sequence of dimension $\ell_2$.  The
assumption that $\pr\big[\ifi\ \io\ \Rightarrow\ (\liminf \ifi=\infty)\big]=1$ guarantees that the intervals in $\ifi_1$ and
$\ifi_2$ are \emph{long}, so with $\ifi_1'\leq\ifi_1$ we are able imitate the vector sequence $\vsf$ while the choice of
$\ifi_2'$ represents a~vector sequence $\vsg$.

At this point, to facilitate (see Remark~\ref{rem:whyseparation}) the construction of $\ifk_1$ we use the equivalence
between separation and asymptotic mixes described in Lemma~\ref{lem:separates}. The proof is finalized by doing a case
analysis of the two cases in the definition of separation: Definition~\ref{def:separation}. Depending on the case, we
fix the extraction $\ifk_1$ either by picking intervals of length as small (in the first case) or as big (in the latter
case) as possible from $\ifi_1'$.

\noindent{($\Leftarrow$)} The converse implication is easier, it relies on \emph{copying}.  We assume that almost surely whenever $\ifj$ appears infinitely often then it is eventually constant (the negation of the first statement) and use this to refute the second
statement. This is done by copying in the following sense. When $\ifi_1'\leq \ifi_1$ is fixed, we find a~family
$\ifi_2'\leq \ifi_2$ that copies the choice made in $\ifi_1'$; and the same for restrictions $\ifk_2$ based on
$\ifk_1$. In the end, in almost every branch we will have number sequences that are asymptotically equivalent, refuting
the last line in~\eqref{eq:e-not-sim-e}. This terminates the (sketch of the) proof of Proposition~\ref{pro:constant}.

It is not hard to remove the assumption in Proposition~\ref{pro:constant} so as to get Theorem~\ref{thm:const-def}. It suffices to quantify existentially over the wrapped interval $\ifi$ and make sure that $\ifj$ is \emph{sufficiently spaced}. The details can be found in Appendix~\ref{ap:implicit wrappings}. 


\section{Reducing two-counter machines with zero tests}
\label{sec:two-counter}
A two-counter machine has a finite set of control states and two counters, which can be increased, decreased, and tested for zero. The question of whether such a machine has a halting run, is undecidable. In this section we will demonstrate that given a two-counter machine $\mathcal{M}$, we can effectively construct a formula $\Phi(\mathcal{M})$ such that $\mathcal{M}$ has a halting run if and only if $\Phi(\mathcal{M})$ is true.

The reduction is relatively standard once equipped with Theorem~\ref{thm:const-def}. The reason being that Theorem~\ref{thm:const-def} already allows us to do arithmetic in an asymptotic sense: Suppose that $\ifi_1,\ifi_2$ are two families of intervals that are isolated (that is $\bigcup\ifi_1 \cap \bigcup\ifi_2=\emptyset$) and eventually constant on almost every branch. This means that for almost every branch $\pi$, $\ifi_i(\pi)$ defines some natural number $L_i(\pi)$. With the help of Theorem~\ref{thm:const-def} we can express, for instance, that for almost every branch $\pi$, $L_1(\pi)=L_2(\pi)+1$, or that $L_i(\pi)=0$ as follows.

\begin{lemma}
\label{lem:tsts}
Let $\ifi_1$, $\ifi_2$ be isolated families of intervals that appear infinitely often and are eventually constant almost surely. For almost every branch $\pi$, $\ifi_i(\pi)$ is eventually constant, equal to some number, say $L_i(\pi)$. Then, one can express in \msonab the following:
\begin{align}
\pr\big[L_1=0\big]&=1\label{eq:tst-zero}\\
\pr\big[L_1=L_2+1\big]&=1.\label{eq:tst-inc}
\end{align}
\end{lemma}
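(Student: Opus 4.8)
The statement concerns two isolated families of intervals $\ifi_1,\ifi_2$ which are eventually constant on almost every branch, inducing limit values $L_1(\pi),L_2(\pi)$, and the goal is to express in \msonab the events $\pr[L_1=0]=1$ and $\pr[L_1=L_2+1]=1$. The plan is to reduce both to Theorem~\ref{thm:const-def}, by building a single new family of intervals whose eventual constancy on almost every branch is equivalent to the desired arithmetic relation, and whose sources and targets are \mso-definable (and hence existentially quantifiable) from the sources and targets of $\ifi_1$ and $\ifi_2$.

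First I would handle \eqref{eq:tst-zero}. The equation $\pr[L_1=0]=1$ simply says that almost surely the tail of $\ifi_1(\pi)$ is the constant sequence $0$, \ie~almost every branch eventually sees only intervals of $\ifi_1$ of length $0$ (source and target adjacent, empty interior). Since we already know $\ifi_1$ is eventually constant a.s., this is equivalent to: there is no branch of positive probability on which some interval of $\ifi_1$ of length $\geq 1$ is seen infinitely often. Equivalently, letting $\ifi_1^{\geq 1}$ be the \mso-definable subfamily of $\ifi_1$ consisting of intervals with $y \neq x0$ and $y \neq x1$ (length at least one), we want $\pr[\ifi_1^{\geq 1}\ \io] = 0$, which is $\pr[\neg(\ifi_1^{\geq 1}\ \io)]=1$, and the event ``$\ifi_1^{\geq1}$ appears infinitely often'' is directly \msonab-expressible using $\nabla$ and \mso quantification over branches and sources (it says ``for a random branch, for every source $x$ there is a later source $x'$ of $\ifi_1^{\geq1}$ below it''). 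So \eqref{eq:tst-zero} needs nothing beyond \mso and $\nabla$; Theorem~\ref{thm:const-def} is not even required here.

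For \eqref{eq:tst-inc} the plan is to construct, from $\ifi_1$ and $\ifi_2$, a new family $\ifj$ of intervals such that $L_1(\pi)=L_2(\pi)+1$ for almost every $\pi$ iff $\ifj$ is eventually constant almost surely, and then invoke Theorem~\ref{thm:const-def} (and an existential quantifier over the sources/targets of $\ifj$, since by the footnote in Theorem~\ref{thm:const-def} the family is uniquely determined when it exists, so existential quantification loses nothing). Concretely, since $\ifi_1$ and $\ifi_2$ are isolated, along almost every branch the sources/targets of $\ifi_1$ and $\ifi_2$ interleave in some order; the idea is to glue, for each source $x_1$ of $\ifi_1$ that is immediately followed (among $\source{\ifi_1}\cup\source{\ifi_2}$) by a source $x_2$ of $\ifi_2$, an interval of $\ifj$ that is an \mso-definable ``difference gadget'': its length should be $\ifi_1(x_1)\dot-\ifi_2(x_2)$ if $\ifi_1(x_1)>\ifi_2(x_2)$ and some large/unbounded penalty otherwise (e.g.\ stretch to the next target of $\ifi_2$, which forces non-constancy in the bad case). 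Then $\ifj(\pi)$ is eventually constant iff eventually $\ifi_1(x_1)-\ifi_2(x_2)$ stabilizes to the value $1$; combined with the a.s.\ eventual constancy of $\ifi_1$ and $\ifi_2$, this is exactly $L_1=L_2+1$ a.s. Wrapping this into \msonab: existentially quantify the sources and targets of $\ifj$, assert via \mso that they form the difference gadget described (this is a local, \mso-definable condition relating consecutive sources of $\ifi_1$ and $\ifi_2$ and counting off the lengths, all inside the relevant subpaths), and assert $\varphi(\source{\ifj},\target{\ifj})$ from Theorem~\ref{thm:const-def}.

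\textbf{Main obstacle.} The delicate point is defining the difference gadget purely in \mso so that its length really is $\ifi_1(x_1)-\ifi_2(x_2)$ on a per-branch basis: \mso cannot subtract two independently-given lengths in a tree, because the two intervals $[x_1,y_1]\in\ifi_1$ and $[x_2,y_2]\in\ifi_2$ lie on disjoint parts of the branch with no a priori synchronization. The fix is to route the gadget through an auxiliary synchronization: e.g.\ demand (again via existential set quantification plus \mso) a third family that simultaneously wraps or aligns with both, or simply require as part of the formula that the construction presented in Section~\ref{sec:econst}/Appendix machinery applies — in other words, reuse the ``preceding''/``tail-preceding'' alternation infrastructure of Definitions~\ref{def:precedes}–\ref{def:tail-prec} to make $\ifi_1$ and $\ifi_2$ comparable before taking differences, so that the per-branch bookkeeping becomes \mso-definable. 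Once that alignment is in place the rest is routine. I would therefore present \eqref{eq:tst-zero} in full (it is short) and for \eqref{eq:tst-inc} spell out the gadget and the equivalence, deferring the purely syntactic \mso-definability check of the gadget to the appendix, as is done for the other lemmas in this paper.
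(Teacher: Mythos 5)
Your handling of \eqref{eq:tst-zero} is correct and essentially the paper's own argument: it is a direct \msonab formula saying that almost surely, from some point on, every interval of $\ifi_1$ met by the branch has empty interior; no appeal to Theorem~\ref{thm:const-def} is needed. The gap is in \eqref{eq:tst-inc}. Your ``difference gadget'' requires an \mso-definable family $\ifj$ whose intervals have length $\ifi_1(x_1)-\ifi_2(x_2)$, and you correctly flag that \mso cannot subtract two lengths sitting at unrelated places on a branch --- but the fix you offer does not repair this. The preceding/tail\-/preceding machinery of Definitions~\ref{def:precedes}--\ref{def:tail-prec} only synchronises \emph{which} intervals are compared; it never makes per\-/branch arithmetic on lengths \mso-definable (indeed, if it did, the elaborate asymptotic\-/mix construction behind Theorem~\ref{thm:const-def} would be unnecessary). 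Two further claims in your sketch are also unjustified: eventual constancy of $\ifj$ would only say the difference stabilises to \emph{some} value, not to $1$, and the ``penalty'' of stretching to the next target of $\ifi_2$ does not force non\-/constancy in the bad case, since those stretched lengths can perfectly well stabilise too.

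The paper's proof sidesteps subtraction entirely, and this is the idea your proposal is missing. The only arithmetic operation performed pointwise is ``subtract one'', which \emph{is} \mso-definable: after first expressing (as in the zero test) that almost surely $L_1>0$, one defines $\ifi_3$ with $\source{\ifi_3}=\source{\ifi_1}$ and $\target{\ifi_3}$ equal to the parents of the nodes of $\target{\ifi_1}$, so that $L_3(\pi)=L_1(\pi)-1$ almost surely. The remaining task, testing $L_3=L_2$ almost surely, is then done not by any gadget encoding a difference but by applying Theorem~\ref{thm:const-def} to the single merged family $\ifi_3\cup\ifi_2$ (legitimate because $\ifi_3$ and $\ifi_2$ are disjoint): on a branch where both subfamilies appear infinitely often and each is eventually constant, the interleaved sequence $(\ifi_3\cup\ifi_2)(\pi)$ is eventually constant exactly when the two limits coincide. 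So the key step is ``equality of limits $=$ eventual constancy of the union'', combined with the definable shift by one; without it, your construction cannot be carried out in \msonab.
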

\begin{proof}
  Condition~\eqref{eq:tst-zero} is directly formalisable in \msonab. The formula says that for almost every branch $\pi$, after some threshold, every node in $\source{\ifi_1}\cap\pi$ has a child that is in $\target{\ifi_1}$. 

  As for Condition~\eqref{eq:tst-inc}, first we can easily express that almost surely $L_1(\pi)>0$ (this is a~necessary condition for~\eqref{eq:tst-inc}). If this is the case, then we define another family $\ifi_3$, such that $\source{\ifi_3}=\source{\ifi_1}$  and the targets $\target{\ifi_3}$ are exactly the parents of the nodes in $\target{\ifi_1}$, \ie we move the targets of $\ifi_1$ to their parents thereby decreasing the lengths of intervals by~1. The set $\target{\ifi_3}$ (and therefore~$\ifi_3$) is \mso-definable and moreover for $L_3(\pi)$ defined analogously, we have $L_3(\pi)=L_1(\pi)-1$ for almost every branch~$\pi$.

  Thus, to verify that $L_3(\pi)=L_2(\pi)$, \ie Condition~\eqref{eq:tst-inc}, it is enough to check that $\ifi_3\cup\ifi_2$ is eventually constant, almost surely by applying Theorem~\ref{thm:const-def}. This is possible because $\ifi_3$ and $\ifi_2$ are disjoint and therefore $\ifi_3\cup\ifi_2$ is a valid family of intervals.
\end{proof}

We now illustrate how a run of $\mathcal{M}$ is encoded. It is of the form
\[
  (q_1,c^1_1,c^2_1), (q_2,c^1_2,c^2_2), \ldots , (q_\ell,c^1_\ell,c^2_\ell),
\]
where $q_k$ are control states and $c^1_k,c^2_k$ is the value of the first and second counter on the $k$th step respectively.
We will encode such a run using three families of intervals: $\ifj,\ifi_1,\ifi_2$ and a labelling by states. $\ifj$ will be eventually constant and equal to $\ell$ (the length of the run), and the nodes in $\inter(\ifj)$ will be labeled by the control states of $\mathcal{M}$. In other words, intervals $[x,y]\in\ifj$ will be such that $\inter([x,y])=\set{x_1 < \cdots < x_\ell}$ and $x_k$ is labeled by $q_k$. Further, $\ifj$ will wrap both $\ifi_1$ and $\ifi_2$, and $x_k$ will be the source of an interval in $\ifi_1$ of length $c_k^1$ (\ie $\ifi_1(x_k)=c_k^1$), as well as the source of an interval in $\ifi_2$ of length $c_k^2$ (\ie $\ifi_2(x_k)=c_k^2$). 
\begin{example}
  Consider the run:
  \[
    (q_0,0,0), (q_1,0,1), (q_2,1,2). 
  \]
  Its encoding with intervals is depicted in Figure~\ref{fig:exampleencoding}. 
  \begin{figure}
    \includegraphics[width=0.5\textwidth]{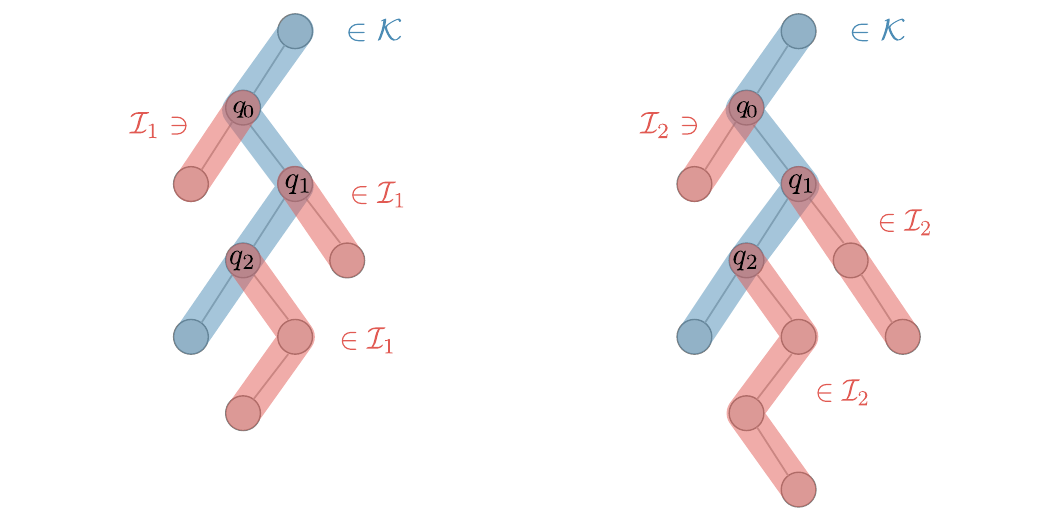}
    \caption{$\inter(\ifj)$ is labeled by the states $q_i$. Note the lengths of the intervals in $\ifi_i$. On the left
      we have the respective lengths 0,0 and 1 corresponding to the history of counter 1 in the run. On the right: 0,1
      and 2.}
    \label{fig:exampleencoding}
  \end{figure}
\end{example}
In order to ascertain that a run is valid, we need to check whether the counters are being increased and decreased correctly using Lemma~\ref{lem:tsts}. For this purpose it is necessary to be able to speak about, for instance, the value of counter 1 at step $k$ by choosing the correct subset of $\ifi_1$. This can be done as follows.

If $\ifi'_1\subseteq\ifi_1$ is a~family of intervals that is an~extraction of $(\ifj,\ifi_1)$, we say that $\ifj'\leq\ifj$ is \emph{induced} by $\ifi'_1$ if $\target{\ifj'}=\source{\ifi'_1}$. We say that $\ifi'_1$ is a~\emph{component selector of $\ifi_1$} if $\ifj'$ induced by $\ifi'_1$ is eventually constant with probability~$1$. In such a~case, the lengths of the intervals in $\ifi'_1$ (from some moment on, along almost every branch of the tree) correspond to the values of the counter 1 at a certain step. In other words, $\ifi'_1$ is a~component selector if on almost every branch $\pi$, there exists a~number $k\in\nat$ such that $\ifi'_1$ is eventually choosing exactly the $k$th component.

The following proposition follows directly from the ability to express Conditions~\eqref{eq:tst-zero} and~\eqref{eq:tst-inc}.

\begin{proposition}
  \label{prop:two-counter}
  For every two\=/counter machine with zero tests $\mathcal{M}$, we can effectively compute a~formula $\phi(\mathcal{M})$ of
  \msonab{}, such that $\phi(\mathcal{M})$ is true if and only if $\mathcal{M}$ halts.
\end{proposition}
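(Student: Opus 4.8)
The plan is to carry out the standard encoding of a halting computation of $\mathcal M$ into the full binary tree, and to verify its validity using only Theorem~\ref{thm:const-def} and Lemma~\ref{lem:tsts}. The sentence $\phi(\mathcal M)$ quantifies existentially over the sources and targets of three families $\ifj,\ifi_1,\ifi_2$ and over a tuple of sets encoding a labelling of $\inter(\ifj)$ by the control states (and the instruction executed at each step) of $\mathcal M$, and then conjoins: a block of \mso-definable \emph{well\=/formedness} conditions --- that $\ifj,\ifi_1,\ifi_2$ are (disjoint) families of intervals, that $\ifj$ wraps both $\ifi_1$ and $\ifi_2$ and that $\ifi_1,\ifi_2$ are isolated, that the labelling sets partition $\inter(\ifj)=\source{\ifi_1}=\source{\ifi_2}$ and are locally consistent with the transition graph of $\mathcal M$, and that $\source\ifj$ meets every branch infinitely often; the statement that $\ifj$ is eventually constant with probability $1$ (Theorem~\ref{thm:const-def}), so that on almost every branch the visited $\ifj$-intervals eventually all have the same length $\ell$ and hence eventually encode a fixed block of positions labelled by states; the statement that every component selector of $(\ifj,\ifi_1)$ and of $(\ifj,\ifi_2)$ is eventually constant with probability $1$, where ``$\ifi'$ is a component selector'' unfolds to the \mso condition of being an extraction plus Theorem~\ref{thm:const-def} applied to the induced subfamily of $\ifj$; and, using the \mso-definable selectors that pick the first, respectively last, source inside each $\ifj$-interval, the initialization and finalization conditions --- the first position is almost surely eventually labelled by the initial state and carries counter values almost surely $0$ (Lemma~\ref{lem:tsts}, \eqref{eq:tst-zero}), while the last position is almost surely eventually labelled by a halting state.

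The core of $\phi(\mathcal M)$ is the transition check. For every component selector $\ifi_1'$ of $(\ifj,\ifi_1)$, the \emph{successor} selector $\ifi_1''$ --- obtained by moving each selected source to the next source of $\ifi_1$ inside the same $\ifj$-interval (well\=/defined whenever the selected position is not the last one, which is all that is needed here) --- and the selectors $\ifi_2',\ifi_2''$ of $(\ifj,\ifi_2)$ with the same sources are \mso-definable from $\ifi_1'$. For each instruction $\iota$ of $\mathcal M$ we require that, restricted to the branches on which $\ifi_1'$ eventually selects a position carrying $\iota$, the successor position is eventually labelled by a target state of $\iota$ (an \mso property of the branch, hence expressible), and that the counter values are updated as prescribed by $\iota$: an increment or a decrement of a counter is verified by \eqref{eq:tst-inc} applied to the corresponding pair among $\ifi_1',\ifi_1'',\ifi_2',\ifi_2''$; ``the other counter is unchanged'' is verified by Theorem~\ref{thm:const-def} applied to the union of the two matching selectors, which are disjoint and hence form a valid family of intervals; and a zero test is verified via the \mso-expressible branch event ``all selected intervals eventually have length $0$''. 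Since the quantifier alternation introduced this way is bounded and $\mathcal M$ has finitely many states and instructions, $\phi(\mathcal M)$ is a genuine \msonab-sentence, effectively computable from $\mathcal M$.

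For correctness: if $\mathcal M$ halts, take the tree in which every branch carries the unique halting run, repeated infinitely often and tiled so that $\source\ifj$ meets every branch infinitely often; every conjunct then holds, the probabilistic ones trivially because all branches are identical. Conversely, if $\phi(\mathcal M)$ holds in some tree, restrict to the measure\=/one set of branches on which $\ifj$ and all the component selectors involved are eventually constant (a countable intersection of measure\=/one events), pick any branch $\pi$ in it, and read off the eventual block length $\ell$, the eventual sequence of state labels $q_1,\dots,q_\ell$ and the eventual counter values along $\pi$; the initialization, transition, and finalization conjuncts force this sequence of configurations to be a halting run of $\mathcal M$. I expect the one genuinely delicate point to be the transition check: the only arithmetic primitives available are the probability\=/one statements of Lemma~\ref{lem:tsts}, and ``eventually constant'' is never a property of an individual branch but always a probability\=/one assertion, so the comparisons of consecutive counter values must be set up as applications of Theorem~\ref{thm:const-def} and Lemma~\ref{lem:tsts} to carefully chosen derived families, and one must check that ranging over component selectors really pins down every step $k\le\ell$ of the encoded run. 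This bookkeeping --- deferred to the appendix --- is routine but not entirely automatic.
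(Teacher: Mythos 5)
Your proposal follows essentially the same route as the paper's own (sketchy) proof: existentially guess $\ifj,\ifi_1,\ifi_2$ and a state labelling of $\inter(\ifj)$, impose eventual constancy of $\ifj$ and of all component selectors via Theorem~\ref{thm:const-def}, check counter updates between consecutive selectors with Lemma~\ref{lem:tsts}, and argue both directions exactly as the paper does (unique run tiled on every branch; reading the run off a measure\=/one set of branches). The transition\=/check bookkeeping you flag (relativising the probability\=/one arithmetic statements to the instruction carried by the selected position) is left at the same level of detail in the paper itself, so your write\=/up is a faithful match rather than a divergence.
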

\begin{proof}
  The first part of the formula $\phi(\mathcal{M})$ says: there exist families of intervals $\ifj$, $\ifi_1$, and $\ifi_2$ and a~labelling $\rho$ of $\inter(\ifj)$ by states of $\mathcal{M}$ such that:
  \begin{itemize}
  \item $\ifj$ appears infinitely often and is eventually constant almost surely,
  \item $\ifj$ wraps both $\ifi_1$ and $\ifi_2$, and
  \item every component selector $\ifi'$ of either $\ifi_1$ or $\ifi_2$ on almost every branch is eventually constant and the labels of $\rho$ in the nodes $\source{\ifi'}$ stabilise almost surely.
  \end{itemize}

  This implies that for $i=1,2$ and almost every branch $\pi$, $\vec{\ifj}(\ifi_i, \pi)$ is a~vector sequence that is eventually constant
  equal to some vector $(c^{i}_1,c^{i}_2,\ldots,c^{i}_\ell)(\pi)$. Moreover, on almost every branch $\pi$ the labels of the nodes in $\inter(\ifj)$ must also stabilise to some sequence~$(q_1,\ldots,q_\ell)(\pi)$.
    
  The second part of the formula uses component selectors as well as the conditions from Lemma~\ref{lem:tsts} to test the relationship between the values $(c^1_n, c^2_n, q_n, c^1_{n+1}, c^2_{n+1}, q_{n+1})(\pi)$ to verify that on almost every branch $(c^1_1,\ldots,c^1_\ell)(\pi)$, $(c^2_1,\ldots,c^2_\ell)(\pi)$, and $(q_1,\ldots,q_\ell)(\pi)$ is a~valid run of~$\mathcal{M}$. This is done by requiring that the values of counters and the labeling in any two consecutive component selectors respect the transition relation of~$\mathcal{M}$. 
  
  If the formula is true then the witnessing families $\ifj$, $\ifi_1$, $\ifi_2$, and a~labelling $\rho$ must almost surely encode (the unique) accepting run of $\mathcal{M}$. Conversely, if $\mathcal{M}$ has an~accepting run then one can easily choose families as above such that each interval $[x,y]\in\ifj$ encodes in fact this single run. This implies that the above \msonab formula must be true in that case.
\end{proof}
Theorem~\ref{thm:main} is a corollary of Proposition~\ref{prop:two-counter}. 


\section{Conclusions}
\label{sec:conclusions}

The undecidability result from this paper, together with the undecidability results about \msou  from~\cite{bojanczyk_msou_final,u1u2}, lead to the following fundamental  question: is there \emph{any} quantifier that can be added to {\sc mso} on infinite words (or trees), while retaining decidability?  Of course a negative answer would require formalising what ``quantifier'' means. A natural direction is to use  the abstract approach from~\cite{lohrey2014boolean}, which precludes positive answers that involve adding unary predicates as discussed in~\cite{rabinovich2006decidable}. 

\section*{Acknowledgments}
\addcontentsline{toc}{section}{Acknowledgments}

The first two authors have been supported by ERC Consolidator grant LIPA 683080. The last author has been supported by Polish National Science Centre grant 2016/22/E/ST6/00041.

\bibliographystyle{plain}
\bibliography{bibliography}
\nocite{fullpaper}
\clearpage 
\appendix

\subsection{Proofs in Section~\ref{sec:econst}}
\subsubsection{Proof of Lemma~\ref{lem:c1pre}}
\label{app:lemc1pre}
\begin{replemma}{lem:c1pre}
  \lemcpre{}
\end{replemma}
    \begin{proof}
      We claim that the property in the statement of the lemma is equivalent to
      \begin{itemize}
      \item[$(\ast)$] either
        \begin{itemize}
        \item[$(\dagger_1)$] there exists $\ifi\subseteq\pre(\ifi_2)$ such that
          \begin{align*}
            \pr\Bigg[\wedge\begin{cases}
              \suc(\ifi)\ \io\\
              \limsup\ifi < \infty\\
              \limsup\suc(\ifi) = \infty
            \end{cases}\Bigg]>0, \text{ or}
          \end{align*}
          \item[$(\dagger_2)$] there exists $\ifi\subseteq\ifi_2$ such that
          \begin{align*}
            \pr\Bigg[\wedge\begin{cases}
              \ifi\ \io\\
              \limsup\ifi < \infty\\
              \limsup\pre(\ifi) = \infty
            \end{cases}\Bigg]>0.
          \end{align*}
        \end{itemize}
      \end{itemize}
      The property $(\ast)$ is \msonab-definable since $\pre$ and $\suc$ are \mso-definable and for checking the boundedness we can use Lemma~\ref{lem:char}. See Appendix~\ref{app:definability}.
      
\noindent ($\Rightarrow$) Assume that there exists a~set of branches $R\subseteq \two^\omega$ that has a~non\=/zero probability,
  such that for each branch $\pi\in R$, $\ifi_2$ appears infinitely often in $\pi$ and there exists a~set of positions $X_\pi\subseteq \nat$ on which the sequence $\ifi_1^{\pre}(\pi)$ is bounded but
  $\ifi_2(\pi)$ is not (the dual case is analogues, see below). By $\aleph_0$\=/additivity of the measure, this implies that there exists $b\in\nat$ such that:
  \begin{align}
    \pr\Bigg[\begin{cases}
      \ifi_2\ \io, \text{ and}\\
      \exists X\subseteq \nat. \begin{cases}
        \ifi_1^\pre\restriction_X\equiv b, \text{ and}\\
        \limsup\ifi_2\restriction_X=\infty
      \end{cases}
    \end{cases}\Bigg]>0.
    \label{eq:existc-one}
  \end{align}


  Take $\ifi\subseteq\pre(\ifi_2)$ as the family of intervals that have length equal to $b$. Take any branch $\pi$ in
  the set from~\eqref{eq:existc-one} and let $X_\pi\subseteq\nat$ be a~witness. Clearly, $X_\pi$ must be infinite and therefore $\ifi$ appears infinitely often in $\pi$  and $\limsup\ifi(\pi)= b<\infty$. On the other hand, $\limsup\suc(\ifi)(\pi)=\infty$ because $\suc(\ifi)(\pi)$ contains as a~subsequence the lengths of intervals in~$\ifi_2$ that are measured in~$\ifi_2\restriction_{X_\pi}$, see Remark~\ref{rem:suc-correspondence}. It means in particular that $\suc(\ifi)$ appears infinitely often in~$\pi$. Therefore, $(\dagger_1)$ holds for $\ifi$ and such~$\pi$, which means that the probability there is positive.
  
  In the dual case, when for each $\pi\in R$ there is~$X_\pi$ such that sequence $\ifi_1^{\pre}(\pi)\restriction_{X_\pi}$ is unbounded but $\ifi_2(\pi)\restriction_{X_\pi}$ is bounded, we know that there exists $b\in\nat$ such that:
  \begin{align}
    \pr\Bigg[\begin{cases}
      \ifi_2\ \io, \text{ and}\\
      \exists X\subseteq \nat. \begin{cases}
        \limsup\ifi_1^\pre\restriction_X=\infty, \text{ and}\\
        \ifi_2\restriction_X\equiv b
      \end{cases}
    \end{cases}\Bigg]>0.
    \label{eq:existc-two}
  \end{align}
  In that case we take $\ifi\subseteq\ifi_2$ as the family of intervals of length equal to $b$. For each branch $\pi$ in the set from~\eqref{eq:existc-two} and its witness $X_\pi$ we have: $\ifi\ \io$ in $\pi$; $\limsup\ifi(\pi)=b <\infty$; and $\limsup\pre(\ifi)(\pi)=\infty$ --- notice that the sequence $\pre(\ifi)(\pi)$ contains the sequence $\ifi_1^\pre(\pi)\restriction_{X_\pi}$ as, possibly strict, subsequence. However, as the latter is unbounded, also the former must be unbounded. Therefore, $(\dagger_2)$ holds.
  
  \noindent ($\Leftarrow$) Assume that~$(\dagger_1)$ is true and fix $\ifi\subseteq\pre(\ifi_2)$.  Take any
  branch $\pi$ in the set measured in~$(\dagger_1)$. Since $\suc(\ifi)$ appears infinitely often in $\pi$, by the definition of
  $\suc$ we have $\ifi_2$ also appears infinitely often in $\pi$. We will show that $\ifi_1^\pre(\pi)\not\sim\ifi_2(\pi)$.
  
  For $k\in\nat$, denote by $\mathrm{source}_k(\ifi_2)$ the set of sources of $\ifi_2$ that have exactly $k-1$ strict ancestors that are also sources of $\ifi_2$.  Let $X_\pi$ be the set of numbers~$k$ such that $\pi\cap\mathrm{source}_k(\ifi_2)\cap \source{\suc(\ifi)}\neq\emptyset$. Then $\ifi_2(\pi)\restriction_{X_\pi}=\suc(\ifi)(\pi)$ is unbounded by the assumption. On the other hand, \added{$\ifi\subseteq\pre(\ifi_2)$ and by the definition of $X_\pi$ we know that} $\ifi_1^\pre(\pi)\restriction_{X_\pi}$ is a~subsequence of $\ifi(\pi)$ and is therefore bounded. This concludes the proof that $\ifi_1^\pre(\pi)\not\sim\ifi_2(\pi)$.
  
  Finally, consider the last case that~$(\dagger_2)$ holds and fix $\ifi\subseteq\ifi_2$ witnessing that. Take a~branch $\pi$ from the set measured in~$(\dagger_2)$. The fact that $\ifi$ appears infinitely often in $\pi$ implies directly that $\ifi_2$ also appears infinitely often in in $\pi$. Take $X_\pi$ as the set of numbers $k$ such that $\pi\cap\mathrm{source}_k(\ifi_2)\cap\source{\ifi}\neq\emptyset$. Then $\ifi_2(\pi)\restriction_{X_\pi}=\ifi(\pi)$ is bounded. However, $\ifi_1^\pre(\pi)\restriction_{X_\pi}$ contains $\pre(\ifi)(\pi)$ as a~subsequence and therefore is unbounded. Thus, $\ifi_1^\pre(\pi)\not\sim\ifi_2(\pi)$.
\end{proof}

\subsubsection{Proof of Proposition~\ref{pro:constant}}
\label{app:proof-of-prop}

This section of the appendix is devoted to the~proof of Proposition~\ref{pro:constant}.

\begin{repproposition}{pro:constant}
\proconstant{}
\end{repproposition}

The claim is that the property in the statement of the proposition is equivalent to:
\begin{itemize}
  \item[$(\ast)$] there exist isolated  $\ifj_1,\ifj_2\subseteq\ifj$, where $\ifj_1$ tail\=/precedes $\ifj_2$, $\pr[\ifj_2\ \io]>0$, and if
    $\ifi_1,\ifi_2\subseteq \ifi$ are such that $\ifj_i$ wraps $\ifi_i$, $i\in\set{1,2}$ then:
    \begin{align}
      &\exists\ifi_1'\leq\ifi_1.\ \ \forall \ifi_2'\leq\ifi_2.\nonumber\\
      &\ \ \exists\ifk_1\subseteq\ifi_1'\text{ extraction of }(\ifj_1,\ifi_1').\nonumber\\
      &\ \ \forall\ifk_2\subseteq\ifi_2'\text{ extraction of }(\ifj_2,\ifi_2').\nonumber\\
      &\qquad\pr\big[\ifk_2\ \io\wedge \ifk_1^{\pre}\not\sim\ifk_2\big]>0.
        \tag{\ref{eq:e-not-sim-e}}
    \end{align}
  \end{itemize}

\noindent
{\em Proof of the forward implication:} 

Let $\ifj_1$, $\ifj_2$ be as in Lemma~\ref{lem:d1d2}, so that every interval in $\ifj_1$ (respectively $\ifj_2$) has length~$\ell_1$
  (respectively~$\ell_2$), $\ell_1>\ell_2$, $\ifj_1$ tail-precedes $\ifj_2$, and $\pr[\ifj_2\ \io]>0$. Let
  $\ifi_1,\ifi_2\subseteq \ifi$ be such that $\ifj_i$ wraps $\ifi_i$ for $i=1,2$---notice that such $\ifi_1$, $\ifi_2$ are defined uniquely by these conditions.

  Let $\vsf$ be a~vector sequence of dimension $\ell_1$ that is not an~asymptotic mix of any vector sequence of dimension
  $\ell_2$. It exists thanks to Lemma~\ref{lem:msoulemma}.

  We construct $\ifi_1'\leq\ifi_1$ as follows. If $k\in\nat$ and $x_k\in\source{\ifj_1}$ has exactly $k$ strict ancestors in $\source{\ifj_2}$ then:
 \begin{align}
   \vec{\ifj}_1(\ifi_1,x_k)&=(v_1,v_2,\ldots,v_{\ell_1}),\nonumber \\
   \vsf(k) &= (w_1,w_2,\ldots,w_{\ell_1}),\nonumber\\
   \vec{\ifj}_1(\ifi_1',x_k)&=(v_1',v_2',\ldots,v_{\ell_1}'),\label{eq:def-C-1-p}\\
                           &\text{where $v_i'=\min(v_i,w_i)$ for $i=1,2,\ldots,\ell_1$.}\nonumber
 \end{align}
 
 \begin{lemma}
 Assume that $\pi$ is a~branch such that~$\ifj_2$ appears infinitely often in $\pi$ and $\liminf\ifi_1(\pi)=\infty$. Then for every $f\in\vsf$ there exists $f'\in\vec{\ifj}_1^\pre(\ifi_1',\pi)$ such that $f'\sim f$.
 In particular, $\vec{\ifj}_1^\pre(\ifi_1',\pi)$ is not an~asymptotic mix of any vector sequence of dimension strictly smaller than~$\ell_1$.
 \end{lemma}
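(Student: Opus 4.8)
The plan is to unwind the definitions and exhibit, for each $f \in \vsf$, an explicit extraction $f' \in \vec{\ifj}_1^\pre(\ifi_1',\pi)$ that is asymptotically equivalent to it. Fix a branch $\pi$ with $\ifj_2$ appearing infinitely often and $\liminf \ifi_1(\pi) = \infty$; since $\ifj_1$ tail\=/precedes $\ifj_2$ and $\ifj_2$ is visited infinitely often, $\ifj_1$ is visited infinitely often as well, so the sequences $\vec{\ifj}_1^\pre(\ifi_1,\pi)$ and $\vec{\ifj}_1^\pre(\ifi_1',\pi)$ are both genuine vector sequences of dimension $\ell_1$. Enumerate the sources of $\ifj_1$ that are selected by the $\pre$\=/construction along $\pi$ as $x_{k_1}, x_{k_2}, \ldots$ (those lying immediately before a source of $\ifj_2$ on $\pi$), and note that for the $m$-th such source the $i$-th component of $\vec{\ifj}_1^\pre(\ifi_1',\pi)(m)$ is $\min(v_i, w_i)$ where $(v_1,\dots,v_{\ell_1}) = \vec{\ifj}_1(\ifi_1, x_{k_m})$ and $(w_1,\dots,w_{\ell_1}) = \vsf(k_m)$, by the defining equation~\eqref{eq:def-C-1-p}.

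Given $f \in \vsf$, say $f(n) = w_{i(n)}$ picking component $i(n)$ of $\vsf(n)$, I would set $f'$ to be the extraction of $\vec{\ifj}_1^\pre(\ifi_1',\pi)$ that picks the \emph{same} component index at each position, i.e.\ $f'(m) = \min\big(\vec{\ifj}_1(\ifi_1,x_{k_m})_{i(k_m)},\, f(k_m)\big)$. The key point is then: for any set of positions $X$, the subsequence $f'\restriction_X$ is bounded if and only if $f\restriction_X$ is bounded. The ``if'' direction is immediate since $f' \le f \circ (\text{reindexing})$ pointwise. For the ``only if'' direction, one uses the hypothesis $\liminf \ifi_1(\pi) = \infty$: the raw lengths $\vec{\ifj}_1(\ifi_1,x_{k_m})_{i(k_m)}$ are components of $\ifi_1(\pi)$ and hence tend to infinity, so $\min(v,w) = w$ for all but finitely many positions; thus $f'$ and the corresponding reindexing of $f$ differ only finitely often, giving asymptotic equivalence. (One must be slightly careful that the reindexing $m \mapsto k_m$ is monotone and that restricting to $X$ commutes with it, but this is routine book\=/keeping.)

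The ``in particular'' clause then follows purely formally from Definition~\ref{def:assmix}: if $\vec{\ifj}_1^\pre(\ifi_1',\pi)$ \emph{were} an asymptotic mix of some vector sequence $\vsg$ of dimension $< \ell_1$, then composing with what we just proved would make $\vsf$ an asymptotic mix of $\vsg$, contradicting the choice of $\vsf$ via Lemma~\ref{lem:msoulemma}. The main obstacle I anticipate is not conceptual but notational: keeping the three layers of indexing straight — positions $n$ of the abstract vector sequence $\vsf$, the sources $x_k$ of $\ifj_1$ indexed by their number of $\ifj_2$\=/ancestors, and the subsequence of those sources actually hit by the $\pre$\=/map along $\pi$ — and verifying that ``$\min$ with something tending to infinity is eventually the identity'' transfers boundedness on \emph{arbitrary} position sets $X$, not just cofinite ones. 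Everything else is a direct application of the hypothesis $\liminf \ifi_1(\pi) = \infty$ together with the definitions of extraction and $\vec{\ifj}_1^\pre$.
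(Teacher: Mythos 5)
Your construction of $f'$ is exactly the paper's: copy the component index chosen by $f$, so that $f'(k)=\min\big(v_k,f(k)\big)$ where $v_k$ is the raw length in $\ifi_1$, and the easy direction (pointwise $f'\le f$, so boundedness of $f\restriction_X$ gives boundedness of $f'\restriction_X$) is fine, as is the purely formal transitivity argument for the ``in particular'' clause. The gap is in the other direction. You justify it by asserting that, since the raw lengths tend to infinity, $\min(v,w)=w$ at all but finitely many positions, so $f'$ and $f$ differ only finitely often. That claim is false precisely in the case that matters: an extraction $f\in\vsf$ is typically unbounded (if every extraction of $\vsf$ were bounded, $\vsf$ would be an asymptotic mix of a constant one\=/dimensional sequence, contradicting its choice via Lemma~\ref{lem:msoulemma}), and an unbounded $f$ may grow faster than the interval lengths along $\pi$, so that $\min(v_k,f(k))=v_k\neq f(k)$ infinitely often. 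What you actually need is weaker and does hold: if $f\restriction_X$ is unbounded, then for every bound $B$ there are infinitely many $k\in X$ with $f(k)>B$, and since $\liminf\ifi_1(\pi)=\infty$ some such $k$ also has $v_k>B$, whence $f'(k)=\min(v_k,f(k))>B$ and $f'\restriction_X$ is unbounded. This one\=/line argument (which is what the paper's phrase ``as a consequence of the fact that $\lim\ifi_1(\pi)=\infty$'' abbreviates) must replace your cofinite\=/equality claim; with that repair your proof coincides with the paper's.

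A smaller point: you treat the correspondence $m\mapsto k_m$ between positions of $\vec{\ifj}_1^\pre(\ifi_1',\pi)$ and positions of $\vsf$ as a possibly nontrivial monotone reindexing to be handled by book\=/keeping. Monotonicity alone would not suffice, because $\sim$ compares the two number sequences on the \emph{same} sets of positions $X$, and a genuine shift can destroy the equivalence. What makes the argument work is that the clipping in~\eqref{eq:def-C-1-p} is indexed by the number of strict ancestors in $\source{\ifj_2}$, so along $\pi$ the $k$th vector of $\vec{\ifj}_1^\pre(\ifi_1',\pi)$ is clipped by $\vsf(k)$ at the same index $k$ (this is the synchronisation of Remark~\ref{rem:suc-correspondence}); it is worth stating this identity explicitly rather than deferring it.
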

 
 \begin{proof}
  Fix some $f\in\vsf$. Notice that for $k\in\nat$ the vector $\vec{\ifj}_1^\pre(\ifi_1',\pi)(k)$ is given by the formula~\eqref{eq:def-C-1-p}. Thus, we can construct $f'\in\vec{\ifj}_1^\pre(\ifi_1',\pi)$ by copying~$f$. More formally, for all $k\in\nat$, if $f(k)$ is the $i$th component of $\vsf$ then also $f'(k)$ is the $i$th component of~ $\vec{\ifj}_1^\pre(\ifi_1',\pi)(k)$.
  
  We prove that $f'\sim f$. Let $X\subseteq\nat$, and suppose that $f\restriction_X$ is
  bounded. Then $f'\restriction_X$ is bounded as well, since construction, we have that for all $n\in\nat$,
  $f'(n)\leq f(n)$ (see~\eqref{eq:def-C-1-p}). If on the other hand $f\restriction_X$
  is unbounded, then so is $f'\restriction_X$, as a~consequence of the fact that $\lim\ifi_1(\pi)=\infty$.
  
  Now assume that $\vec{\ifj}_1^\pre(\ifi_1',\pi)$ is an~asymptotic mix of a~vector sequence $\vsg$ of dimension strictly smaller than $\ell_1$. In that case $\vsf$ must be an~asymptotic mix of~$\vsg$: for each $f\in\vsf$ there exists $f'\in\vec{\ifj}_1^\pre(\ifi_1',\pi)$ given by the above construction such that $f\sim f'$; moreover by assumption there exists $g\in\vsg$ such that $f'\sim g$; and thus $f\sim g$; a~contradiction.
 \end{proof}

 Fix some $\ifi_2'\leq\ifi_2$ and take a~branch $\pi$ on which $\ifj_2$ appears infinitely often and $\liminf\ifi_1(\pi)=\infty$
 (the assumptions on $\ifj_2$ and $\ifi$ guarantee that with a~positive probability a~random branch has these properties).
 By the lemma above $\vec{\ifj}_1^\pre(\ifi_1',\pi)$ is not an~asymptotic mix of $\vec{\ifj}_2(\ifi_2',\pi)$. This means that we have:
 \begin{align*}
   \pr\Bigg[\begin{cases}\ifj_2\ \io, \text{ and}\\
     \vec{\ifj}_1^\pre(\ifi_1')\text{ is not an~asymp. mix of }\vec{\ifj}_2(\ifi_2')
     \end{cases}\hspace{-12pt}\Bigg]>0.
 \end{align*}
 Lemma~\ref{lem:separates} implies that
 \begin{align*}
   \pr\Bigg[\begin{cases}
     \ifj_2\ \io, \text{ and}\\
     \exists b\in\nat.\ \text{$b$ separates $\vec{\ifj}_1^\pre(\ifi_1')$ from $\vec{\ifj}_2(\ifi_2')$}
   \end{cases}\hspace{-12pt}\Bigg]>0.
 \end{align*}
 And thus, by countable additivity of measures, there must exist $b\in\nat$ such that:
 \begin{align*}
   \pr\big[\ifj_2\ \io\ \wedge \text{$b$ separates $\vec{\ifj}_1^\pre(\ifi_1')$ from $\vec{\ifj}_2(\ifi_2')$}\big]>0. 
 \end{align*}
 From the definition of separation we now have the following two cases:
 \begin{align}
   \label{eq:case1}
   \pr\Bigg[\begin{cases}
     \ifj_2\ \io, \text{ and}\\
     \exists X.\begin{cases}
       \min\big(\vec{\ifj}_1^\pre(\ifi_1')\restriction_X\big)\leq b,\text{ and}\\
       \text{$\min\big(\vec{\ifj}_2(\ifi_2')\restriction_X\big)$ is unbnd.}
     \end{cases}
   \end{cases}
   \hspace{-25pt}\Bigg]>0,
 \end{align}
 and
 \begin{align}
   \label{eq:case2}
   \pr\Bigg[\begin{cases}
     \ifj_2\ \io, \text{ and}\\
     \exists X.\begin{cases}
       \max\big(\vec{\ifj}_2(\ifi_2')\restriction_X\big)\leq b, \text{ and}\\
       \text{$\max\big(\vec{\ifj}_1^\pre(\ifi_1')\restriction_X\big)$ is unbnd.}
     \end{cases}
   \end{cases}
   \hspace{-23pt}\Bigg]>0.
 \end{align}
 
 \noindent
 {\em The first case:}
 
 Construct an~extraction $\ifk_1\subseteq\ifi_1'$ of $(\ifj_1,\ifi_1')$ by picking any interval whose length is smaller
 than $b$ (if there is none, we pick arbitrarily). We fix an~extraction $\ifk_2\subseteq\ifi_2'$ of $(\ifj_2,\ifi_2')$,
 and prove that
 \begin{align*}
   \pr\big[\ifk_2\ \io\ \wedge\ifk_1^\pre\not\sim\ifk_2\big]>0.
 \end{align*}
 Since $\ifj_1$ precedes $\ifj_2$ (tail-preceding is a~stronger property), we know that for $x\in\source{\ifj_2}$, $\pre(x)$ is well-defined, it is the first ancestor of $x$ in $\source{\ifj_1}$. Let $\ifj_2'\subseteq\ifj_2$ be the family on which we keep only those intervals $[x,y]\in\ifj_2$ such that $\vec{\ifj}_1(\ifi_1',\pre(x))$ has a~component that is smaller than $b$. Then~\eqref{eq:case1} implies that $\pr[\ifj_2'\ \io]>0$.

 For $x\in\source{\ifj_2'}$ define $M(x)$ to be the minimal component in the vector $\vec{\ifj}_2'(\ifi_2',x)$. On a~branch~$\pi$ where $\ifj_2'$ appears infinitely often, there are infinitely many nodes $x_0<x_1<\ldots$ belonging to~$\source{\ifj_2'}$;
 define:
 \[
   M(\ifj_2')(\pi)=M(x_0),M(x_1),\ldots\in\nat^\omega.
 \]
 Then~\eqref{eq:case1} implies that:
 \[
   \pr\big[\ifj_2'\ \io\ \wedge\  \limsup M(\ifj_2')=\infty\big]>0.
 \]
 Finally define $\ifj_2''\subseteq\ifj_2'$ to be the record breakers with respect to the function $M$, \ie~for all
 $x,x'\in\source{\ifj_2''}$, if $x<x'$ then $M(x)<M(x')$. From the inequality above it follows that:
 \begin{equation}
   \label{eq:iid2toinfty}
   \pr\big[\ifj_2''\ \io\ \wedge\  (\liminf M(\ifj_2'') =\infty)\big]>0,
 \end{equation}
 where $M(\ifj_2'')(\pi)$ is the number sequence resulting from applying $M$ only to the sources of the intervals in~$\ifj_2''$.
 Let $\ifk_2'\subseteq\ifk_2$ be such that every element of $\source{\ifk_2'}$ belongs to some interval in
 $\ifj_2''$. Since the intervals in $\ifj_2''$ have length $\ell_2$, the sources of $\ifk_2'$ are always at a~distance
 smaller than $\ell_2$ than the respective source of $\ifj_2''$: if $x'\in\source{\ifk_2'}$ and $x'\in\inter([x,y])\in\ifj_2''$ then $|x'|-|x|\leq\ell_2$. Therefore, as a~consequence of Lemma~\ref{lem:ob1} and~\eqref{eq:iid2toinfty} we have
 \begin{align*}
   \pr\big[\ifk_2'\ \io\ \wedge\ (\liminf\ifk_2'=\infty)\big]>0.
 \end{align*}
 But by construction, the intervals in $\ifk_2'$ are preceded by intervals in $\ifk_1$ whose length is smaller than $b$,
 hence we have proved that
\begin{align*}
   \pr\big[\ifk_2\ \io\ \wedge\ \ifk_1^\pre\not\sim\ifk_2\big]>0.
\end{align*}

\noindent
{\em The second case:}

 Construct $\ifk_1\subseteq\ifi_1'$ extraction of $(\ifj_1,\ifi_1')$ by picking intervals with
 the maximal length. We fix an~extraction $\ifk_2\subseteq\ifi_2'$ of $(\ifj_2,\ifi_2')$, and prove that
 \begin{align*}
   \pr[\ifk_2\ \io\ \wedge\ \ifk_1^\pre\not\sim\ifk_2]>0.
 \end{align*}
 Let $\ifj_2'\subseteq\ifj_2$ be
 the family that keeps only those $[x,y]\in\ifj_2$ for which $\vec{\ifj}_2(\ifi_2',x)$ has all components smaller than $b$. Then~\eqref{eq:case2}
 implies that $\pr[\ifj_2'\ \io]>0$. Let $\ifk_2'\subseteq\ifk_2$ be such that every source of an~interval in
 $\ifk_2'$ belongs to an~interval in $\ifj_2'$. Since the intervals in $\ifj_2'$ all have length $\ell_2$, the distance
 between a~node in $\source{\ifj_2'}$ and it's first descendant in $\source{\ifk_2'}$ is at most $\ell_2$, so applying
 Claim~\ref{lem:ob1} we have that $\pr[\ifk_2'\ \io]>0$. While every interval in $\ifk_2'$ has length at most $b$,
 \eqref{eq:case2} implies that there is a~non\=/zero probability that $\ifk_1^\pre$ is unbounded, \ie
 \begin{align*}
   \pr\big[\ifk_2'\ \io\ \wedge\ \ifk_1^\pre\not\sim\ifk_2'\big]>0. 
 \end{align*}
 
This concludes the proof of the forward implication.

\noindent
{\em Proof of the converse implication:}

Assume that
\begin{align*}
  \pr\big[\ifj\ \io\ \Rightarrow\ \text{$\ifj$ is eventually constant}\big]=1.
\end{align*}
Let $\ifj_1,\ifj_2\subseteq\ifj$ be such that $\ifj_1$ tail\=/precedes $\ifj_2$, and $\pr[\ifj_2\ \io]>0$. Consider $\ifi_1$ and $\ifi_2$ as in the statement and fix $\ifi_1'\leq\ifi_1$.

  We let $\ifi_2'\leq\ifi_2$ be such that for all $x'\in\source{\ifj_2}$ the following holds: let $x=\pre(x')$ (it exists
  because $\ifj_1$ tail-precedes $\ifj_2$) then for all $k\in\nat$ if both $\vec{\ifj}_1(\ifi_1',x)$ and
  $\vec{\ifj}_2(\ifi_2,x')$ have $k$th components defined: $(\vec{\ifj}_1(\ifi_1',x))_k$ and $(\vec{\ifj}_2(\ifi_2,x'))_k$
  then:
  \[
    \big(\vec{\ifj}_2(\ifi_2',x')\big)_k=\min\Bigg\{\big(\vec{\ifj}_2(\ifi_2,x')\big)_k, \big(\vec{\ifj}_1(\ifi_1',x)\big)_k\Bigg\}.
  \]
  When the respective components are not defined, take $\big(\vec{\ifj}_2(\ifi_2',x')\big)_k=\big(\vec{\ifj}_2(\ifi_2,x')\big)_k.$

  Fix $\ifk_1\subseteq\ifi_1'$ an~extraction of $(\ifj_1,\ifi_1')$. We say that $\ifk_1$ \emph{chooses $k$th component in $x$} if $[x,y]\in\ifj_1$, $x'\in\source{\ifk_1}\cap\inter([x,y])$, and $|x'|-|x|=k+1$. We construct an~extraction $\ifk_2\subseteq\ifi_2'$ of $(\ifj_2,\ifi_2')$ by copying.
  More formally, consider $x'\in\source{\ifj_2}$ and let $x=\pre(x')$. If~$\ifk_1$ chooses the $k$th component in $x$ then in $x'$ we choose to $\ifk_2$ the $k$th component as well if it exists, otherwise we
  choose some arbitrary component.

  Let $\pi$ be a~branch where $\ifj$ appears infinitely often, is eventually constant, and $\ifi_2$ has infinite $\liminf$. If~$\ifk_2$ appears infinitely often in~$\pi$, we prove that from the construction above, $f\eqdef \ifk_1^\pre(\pi)$ is asymptotically equivalent to
  $g\eqdef \ifk_2(\pi)$. Let $X\subseteq\nat$. Since~$\ifj$ is eventually constant in~$\pi$ after some point, from the
  construction above, the numbers in $f\restriction_X$ are always smaller than the corresponding numbers in
  $g\restriction_X$. Because $\ifi_2$ tends to infinity, we have that either both $f\restriction_X$ and
  $g\restriction_X$ are bounded or both of them are unbounded. As a~consequence $\ifk_1^\pre(\pi)\sim\ifk_2(\pi)$.

  From the assumptions and the argument above we conclude that:
  \begin{align*}
    \pr\big[\ifk_2\ \io\ \Rightarrow\ \ifk_1^\pre\sim\ifk_2\big]=1,
  \end{align*}
  and hence refute the second statement of the lemma and finish the proof of the converse implication. This
  concludes the proof of Proposition~\ref{pro:constant}.
\subsubsection{Implicit wrappings}
\label{ap:implicit wrappings}  
We demonstrate how to avoid speaking explicitly about the family of intervals $\ifi$ in the formulation of Proposition~\ref{pro:constant}. We will give a proof of Theorem~\ref{thm:const-def} using Proposition~\ref{pro:constant}.
\begin{definition}
\label{def:spaced}
  We say that a~family of intervals~$\ifj$ is \emph{sufficiently spaced} if there exists a~family~$\ifi$ such
  that $\ifj$ wraps $\ifi$ and
  \begin{align*}
    \pr\big[\ifi\ \io\ \Rightarrow\ (\liminf \ifi=\infty)\big]=1.
  \end{align*}
\end{definition}

We claim that
\begin{align*}
  \pr\big[\ifj\ \io\ \Rightarrow\ \text{$\ifj$ is eventually constant}\big]=1,
\end{align*}
is equivalent to
\begin{itemize}
\item[$(\ast)$] $\pr\big[\ifj\ \io\ \Rightarrow\ \limsup \ifj<\infty\big]=1$ and either:\\
  $\pr\big[\ifj\ \io\ \Rightarrow (\lim\ifj = 0)\big]=1$ or\\
  $\pr\big[\ifj\ \io\ \Rightarrow (\liminf\ifj > 0)\big]=1$ and 
  for all $\ifj'\subseteq\ifj$ that are sufficiently spaced we have:
  \begin{align*}
    \pr\big[\ifj'\ \io\ \Rightarrow\  \text{$\ifj'$ is eventually constant}\big]=1.
  \end{align*}
\end{itemize}

Notice that the definition of $\ifj$ wrapping $\ifi$ (see Definition~\ref{def:wrapping}) implicitly implies that all the intervals $[x,y]\in\ifj$ have positive length. However, in the following lemma we prefer to allow the family $\ifj$ to contain some intervals of length~$0$. This explains the additional condition in $(\ast)$.

The forward implication is immediate. For the converse, assume $(\ast)$. Clearly if $\pr\big[\ifj\ \io\ \Rightarrow\ (\lim\ifj =0)\big]=1$ then $\ifj$ is almost surely eventually constant whenever it appears infinitely often.
  
Now suppose towards a contradiction that there is~non\=/zero probability that the following properties hold: $\ifj$ appears infinitely often, is bounded, $[\liminf\ifj>0]$, but $\ifj$ is not eventually constant. In that case, without loss of generality we can assume that $\ifj$ contains no intervals of length~$0$. Then, by Lemma~\ref{lem:d1d2}, there exist $\ell_1>\ell_2\in\nat$ and isolated $\ifj_1\subseteq\ifj$, $\ifj_2\subseteq\ifj$ such that $\ifj_1\subseteq \ifj$ contains intervals of length~$\ell_1$, $\ifj_2\subseteq\ifj$ contains intervals of length~$\ell_2$, and there is~non\=/zero probability that both $\ifj_1$ and $\ifj_2$ appear infinitely often. As $\ifj$ contains no intervals of length~$0$, we know that $\ell_2>0$.

  Take $i=1,2$ and $i'=3-i$ (\ie~the other number). For $k\in\nat$ and $x\in\source{\ifj_i}$ define $S_k(x)$ to be the set of all $x'\in\source{\ifj_{i'}}$ such that :
  \begin{align*}
    &x < x',\text{ and }\\
    &\forall u \in\source{\ifj_{i'}},\  x < u < x'.\ \length([x,u])\leq k{+}\ell_1{+}\ell_2
  \end{align*}
  In other words, $S_k(x)$ contains the first descendants of $x$ in $\source{\ifj_{i'}}$ that are at a~distance at least $k$.

  For all $n\in\nat$ we define $X_n\subseteq\source{\ifj_1}$, $Y_n\subseteq\source{\ifj_2}$ as follows: let $X_0$ be the
  subset of nodes in $\source{\ifj_1}$ that do not have any strict ancestors in $\source{\ifj_1}$ and
  \[
    Y_n=\bigcup_{x\in X_n}S_n(x)\qquad X_{n+1}=\bigcup_{y\in Y_n}S_{n+1}(x).
  \]
  Let $\ifj_1'\subseteq\ifj_1$ (resp. $\ifj_2'\subseteq\ifj_2$) contain all the intervals with sources in $\bigcup_{n\in\nat} X_n$ (resp. in $\bigcup_{n\in\nat} Y_n$). Put $\ifj'=\ifj_1'\cup\ifj_2'$.
  
  \begin{claim}
  For $\ifj'$ defined as above we have $\pr\big[\ifj'\ \io\big]>0$.
  \end{claim}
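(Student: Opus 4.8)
The plan is to establish the event inclusion
\[
  \big\{\pi : \ifj'\ \io\big\}\ \supseteq\ \big\{\pi : \ifj_1\ \io\ \wedge\ \ifj_2\ \io\big\},
\]
and then conclude: the right-hand event has positive probability by the choice of $\ell_1>\ell_2$ made just above, hence so does $\{\pi : \ifj'\ \io\}$. This is exactly the tree analogue of Claim~\ref{claim:alternatingxy}; the recursion defining $X_n$ and $Y_n$ is set up precisely so that the ``alternating selection'' process can never get stuck along a branch that visits both $\source{\ifj_1}$ and $\source{\ifj_2}$ infinitely often.

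Concretely, I would fix a branch $\pi$ on which both $\ifj_1$ and $\ifj_2$ appear infinitely often and construct, by induction on $n$, an infinite chain of nodes $x_0 < y_0 < x_1 < y_1 < \cdots$, all lying on $\pi$, with $x_n \in X_n \subseteq \source{\ifj_1}$ and $y_n \in Y_n \subseteq \source{\ifj_2}$. For the base case, $x_0$ is the shallowest source of $\ifj_1$ on $\pi$; it has no strict ancestor in $\source{\ifj_1}$ (any such ancestor would lie on $\pi$ and contradict minimality), so $x_0 \in X_0$. For the step from $x_n$ to $y_n$: since $\ifj_2$ appears infinitely often on $\pi$, there is a source of $\ifj_2$ on $\pi$ strictly below $x_n$, and the first such node belongs to $S_n(x_n)$ vacuously (no source of $\ifj_2$ lies strictly between it and $x_n$); thus $S_n(x_n) \cap \pi \neq \emptyset$ and we may take $y_n$ to be any element of it, which then lies in $Y_n$ and strictly below $x_n$. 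The step from $y_n$ to $x_{n+1} \in S_{n+1}(y_n) \subseteq X_{n+1}$ is symmetric (reading the evident typo $X_{n+1} = \bigcup_{y \in Y_n} S_{n+1}(y)$). Once the chain is built, the $x_n$ are pairwise distinct sources of intervals of $\ifj_1'$ lying on $\pi$ and the $y_n$ are pairwise distinct sources of intervals of $\ifj_2'$ lying on $\pi$, so $\ifj' = \ifj_1' \cup \ifj_2'$ has infinitely many sources on $\pi$, i.e.\ $\ifj'$ appears infinitely often on $\pi$.

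I do not expect a real obstacle. The one point that needs a little care is the non-emptiness $S_k(x) \cap \pi \neq \emptyset$ invoked at each step, and this reduces to the trivial observation that the family $\ifj_{i'}$, appearing infinitely often on $\pi$, has sources arbitrarily deep on $\pi$; the distance threshold $k$ and the slack $\ell_1 + \ell_2$ in the definition of $S_k$ play no role for this claim (they matter only for the spacing argument that follows). It is also worth recalling the convention that ``$\ifj_i$ appears infinitely often in $\pi$'' abbreviates ``$\source{\ifj_i} \cap \pi$ is infinite'', so that the chain nodes, being distinct and of strictly increasing depth, genuinely witness $\ifj'\ \io$.
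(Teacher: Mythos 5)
Your proof is correct and takes essentially the same route as the paper: the paper's entire argument for this claim is the one-line observation that $[\ifj_1\ \io]\cap[\ifj_2\ \io]\subseteq[\ifj'\ \io]$ ``directly from the definition'', and your alternating chain $x_0<y_0<x_1<y_1<\cdots$ along a branch where both $\ifj_1$ and $\ifj_2$ appear infinitely often is precisely the detail behind that inclusion, after which positivity follows from the choice of $\ell_1>\ell_2$ exactly as you say. Nothing further is needed.
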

  
  \begin{proof}
    Directly from the definition, because
    \begin{align*}
      [\ifj_1\ \io]\cap[\ifj_2\ \io]\subseteq [\ifj'\ \io].
    \end{align*}
  \end{proof}

\begin{proposition}
There exists a~family of intervals $\ifi$ such that $\ifj'$ wraps $\ifi$ and for each interval $[x,y]\in\ifj'$ if $x\in X_n\cup Y_n$ then the intervals in $\ifi$ with sources in $\inter([x,y])$ have length exactly $n$. In particular, $\pr\big[\ifi\ \io\ \Rightarrow\ (\lim \ifi=\infty)\big]=1$ and therefore $\ifj'$ is sufficiently spaced.
\end{proposition}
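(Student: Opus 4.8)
The plan is to build $\ifi$ by hanging one short interval off each node of $\inter(\ifj')$, bending it slightly to one side so that the resulting intervals stay pairwise disjoint. First I would fix notation. Since $\ifj_1'\subseteq\ifj_1$ and $\ifj_2'\subseteq\ifj_2$ with $\ifj_1,\ifj_2$ isolated subfamilies of $\ifj$, the family $\ifj'=\ifj_1'\cup\ifj_2'$ is a family of pairwise disjoint intervals, and all of its intervals have positive length (those from $\ifj_1'$ have length $\ell_1$, those from $\ifj_2'$ have length $\ell_2$, with $\ell_1>\ell_2\ge 1$ as established above). For $[x,y]\in\ifj'$ write $\mathrm{lev}([x,y])$ for the index $n$ with $x\in X_n$ (when $[x,y]\in\ifj_1'$) or $x\in Y_n$ (when $[x,y]\in\ifj_2'$); this index is well defined, since the sets $X_n$ (resp. $Y_n$) are pairwise disjoint and $\source{\ifj_1}\cap\source{\ifj_2}=\emptyset$. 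By pairwise disjointness of $\ifj'$, every $v\in\inter(\ifj')$ lies in the interior of a unique interval of $\ifj'$, so it inherits a level $\mathrm{lev}(v)$.

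The key combinatorial point is the following observation: for every node $u\in\two^*$, at most one of $u0,u1$ lies in $\inter(\ifj')$. Indeed, if $ui\in\inter(\ifj')$ then $ui$ lies strictly between the endpoints of some interval $[x_i,y_i]\in\ifj'$, so $x_i\le u<ui<y_i$ and hence $u\in[x_i,y_i]$; if both $u0$ and $u1$ were in $\inter(\ifj')$ the intervals $[x_0,y_0]$ and $[x_1,y_1]$ would share the node $u$, hence be equal by disjointness, which is impossible because a single interval is a path and cannot contain the two incomparable nodes $u0,u1$. Using this, for each $v\in\inter(\ifj')$ I would define $\phi(v)$ to be a descendant of $v$ at distance exactly $\mathrm{lev}(v)+1$, obtained by walking down from $v$ and at every step moving to a child that does not belong to $\inter(\ifj')$; such a child always exists by the observation, so $\phi(v)$ is well defined and $\{z:v<z\le\phi(v)\}$ is disjoint from $\inter(\ifj')$.

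Now set $\ifi=\{[v,\phi(v)]:v\in\inter(\ifj')\}$ and verify the three required properties. (i) $\source\ifi=\inter(\ifj')$ by construction and every interval of $\ifj'$ has positive length, so $\ifj'$ wraps $\ifi$. (ii) If $v\in\inter([x,y])$ with $[x,y]\in\ifj'$ of level $n$, then $\length([v,\phi(v)])=|\phi(v)|-|v|-1=n$, exactly the length prescribed in the statement. (iii) The intervals of $\ifi$ are pairwise disjoint: if $[v,\phi(v)]$ and $[v',\phi(v')]$ shared a node with $v\neq v'$, the shared node would be a common descendant, so $v,v'$ would be comparable, say $v<v'$; then $v'\in\{z:v<z\le\phi(v)\}$, contradicting $v'\in\inter(\ifj')=\source\ifi$. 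Hence $\ifi$ is a genuine family of intervals with the stated lengths, and this $\ifi$ is witnessing Definition~\ref{def:spaced} once we check the boundedness clause.

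Finally I would derive the last assertion. On a branch $\pi$ with $\ifi$ appearing infinitely often, $\pi$ meets infinitely many nodes of $\inter(\ifj')$; since each interval of $\ifj'$ has at most $\ell_1$ interior nodes and a node of $\inter([x,y])\cap\pi$ forces $x\in\pi$, the branch $\pi$ meets infinitely many sources of $\ifj'$. From the definition of $X_n,Y_n$ via the operators $S_k$ one checks that $\pi$ meets only finitely many sources from $\bigcup_{m\le N}(X_m\cup Y_m)$ for every $N$ (because $X_0\cap\pi$ is finite, indeed a singleton, and each $S_k(x)\cap\pi$ is finite, the $\ifj$-intervals on $\pi$ being pairwise disjoint of length $\ge\ell_2\ge 1$, so inductively every $X_n\cap\pi$ and $Y_n\cap\pi$ is finite), so the levels of the $\ifj'$-intervals met by $\pi$, listed by increasing depth, tend to infinity. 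Consequently the $i$-th entry of $\ifi(\pi)$, namely the level of the $\ifj'$-interval whose interior contains the $i$-th $\ifi$-source of $\pi$ — a source whose depth, and hence that of its enclosing $\ifj'$-source, grows with $i$ — tends to infinity, \ie $\lim\ifi(\pi)=\infty$. Thus $\pr[\ifi\ \io\Rightarrow\lim\ifi=\infty]=1$ (the implication even holds on every branch), and by Definition~\ref{def:spaced} the family $\ifj'$ is sufficiently spaced. The only genuinely nontrivial point is guaranteeing disjointness of $\ifi$, which the bending construction secures through the observation that no node has both children in $\inter(\ifj')$; that in turn rests entirely on the intervals of $\ifj'$ being pairwise disjoint paths, and everything else is bookkeeping.
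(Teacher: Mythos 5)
Your proof is correct, and it arrives at the same construction as the paper (hang an interval of length exactly $n$ from every node of $\inter([x,y])$ whenever $x\in X_n\cup Y_n$), but it justifies the crucial disjointness differently. The paper's two-sentence proof leans on the quantitative spacing built into the sets $S_n(x)$: naively hung intervals do not reach the consecutive intervals of $\ifj'$ because consecutive $\ifj'$-sources along a branch are far apart. You instead bend each hung interval away from $\inter(\ifj')$ at every step, using the observation that no node has both children in $\inter(\ifj')$ (a consequence of $\ifj'$ being a family of pairwise disjoint paths), and you correctly reduce pairwise disjointness of $\ifi$ to the fact that two hung intervals can meet only if one contains the source of the other. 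This makes the existence and wrapping part completely independent of the spacing, so your route is slightly longer but more robust (it would go through even without the $+\ell_1+\ell_2$ slack in the definition of $S_k$). Where the spacing is still genuinely needed is the claim $\pr\big[\ifi\ \io\ \Rightarrow\ (\lim \ifi=\infty)\big]=1$, which the paper asserts without argument; your induction showing that a branch meets only finitely many nodes of $\bigcup_{m\le N}(X_m\cup Y_m)$ for each $N$ (each $S_k(x)\cap\pi$ being finite) is exactly the missing reason why the levels, and hence the entries of $\ifi(\pi)$, diverge along every branch on which $\ifi$ appears infinitely often. The one assertion you use without proof is well-definedness of the level, i.e.\ that the sets $X_n$ (resp.\ $Y_n$) are pairwise disjoint; this is implicitly presupposed by the statement of the proposition itself (otherwise ``length exactly $n$'' would be ambiguous), and it holds under the intended reading of $S_k(x)$ as the first descendants at distance at least $k$, since membership of a node in $X_n$ or $Y_n$ is determined by the unique chain of its ancestors---but it merits the one-line remark.
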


\begin{proof}
It is enough to observe that the intervals added to $\ifi$ by a~naive construction will not overlap with consecutive intervals of $\ifj'$. However, this is guaranteed by the choice of the sets~$S_n(x)$ \added{and the fact that $\ifj'$ contains no trivial intervals}.
\end{proof}

  Finally, if $\pi$ is a~branch in which both $\ifj_1$ and $\ifj_2$ appear infinitely often, then in $\pi$, $\ifj_1'$ and $\ifj_2'$ also appear infinitely often. This implies that
  \begin{align*}
    \pr\big[\ifj'\ \io\ \land\ \text{$\ifj'$ is not eventually constant}\big]>0,
  \end{align*}
  contradicting $(\ast)$.

  Due to the analysis from Appendix~\ref{app:definability}, $(\ast)$ as well as being sufficiently spaced are \msonab definable. Therefore, Theorem~\ref{thm:const-def} follows.


\newcommand{\phiint}{\phi_{\mathrm{int}}}
\newcommand{\phiset}{\phi_{\mathrm{set}}}
\newcommand{\phibnd}{\phi_{\mathrm{bnd}}}
\newcommand{\phiwrp}{\phi_{\mathrm{wrap}}}

\newcommand{\phiLfour}{\phi_{\mathrm{\ref{lem:liminf}}}}
\newcommand{\phiunbnd}{\phi_{\mathrm{ubnd}}}
\newcommand{\phichar}{\phi_{\mathrm{char}}}

\subsection{Definability in \texorpdfstring{\msonab}{MSO+nabla}}
\label{app:definability}

In this technical section we argue why all the properties gradually defined throughout the paper are in fact \msonab definable. Therefore, the section consists of a~pass through the successively defined concepts.

First, as explained in Section~\ref{sec:undecidability}, we will represent a~family of intervals $\ifi$ as a~pair of sets $\src_\ifi=\source{\ifi}$ and $\tgt_\ifi=\target{\ifi}$ of nodes of the tree. Consider the following \mso formulae ($\exists!$~stands for ``there exists a~unique''):
\begin{align*}
\phiint(x,y,\src_\ifi,\tgt_\ifi)&=\ x\in\src_\ifi\land y\in\tgt_\ifi \land x<y \land \\
                                 &\forall z.\ (x<z<y)\Rightarrow z\notin \src_\ifi\land z\notin\tgt_\ifi,
\end{align*}
\begin{align*}
\phiset(\src_\ifi, \tgt_\ifi)=&\ \forall x\in\src_\ifi.\ x\notin\tgt_\ifi\land\\
&\ \forall y\in\tgt_\ifi.\ y\notin\src_\ifi\land\\
 &\ \forall x\in\src_\ifi.\ \exists! y\in\tgt_\ifi.\ \phiint(x,y,\src_\ifi,\tgt_\ifi)\land\\
 &\ \forall y\in\tgt_\ifi.\ \exists! x\in\src_\ifi.\ \phiint(x,y,\src_\ifi,\tgt_\ifi)\land.
\end{align*}
The formula $\phiint(x,y,\src_\ifi,\tgt_\ifi)$ expresses that $[x,y]$ is an~interval in~$\ifi$, while $\phiset(\src_\ifi,\tgt_\ifi)$ means that $(\src_\ifi,\tgt_\ifi)$ in fact represent a~valid family of intervals. Notice that $\ifi\subseteq\ifj$ boils down to saying that $\phiset(\src_\ifi, \tgt_\ifi)$, $\phiset(\src_\ifj, \tgt_\ifj)$, and $\src_\ifi\subseteq\src_\ifj$ and $\tgt_\ifi\subseteq\tgt_\ifj$.

\begin{remark}
Consider a~representation $(\src_\ifi,\tgt_\ifi)$ of a~family of intervals $\ifi$. Let $X$ be a~set of nodes and $\pi$ be a~branch (also represented as a~set of nodes). Then the following conditions are \mso definable: $X$\ \fo in $\pi$; $X$\ \io in $\pi$; $\ifi$\ \io in $\pi$.
\end{remark}

Using the above remark, the $(\ast)$ property of Lemma~\ref{lem:liminf} is easily \msonab definable by the following formula
\begin{align*}
\phiLfour(\ifi)\eqdef\ & \exists \ifi'\subseteq \ifi.\\
&\quad \phiset(\src_{\ifi'},\tgt_{\ifi'}) \wedge \\
&\quad \nabla \pi.\ \src_{\ifi'}\ \text{\io in $\pi$}\, \wedge\\
&\quad \forall \ifj\subseteq\ifi'.\  \phiset(\src_{\ifj},\tgt_{\ifj})\Rightarrow \\
&\qquad \nabla \pi.\ \big(\text{$\src_{\ifj}$\ \io in $\pi$}\Leftrightarrow \text{$\tgt_{\ifj}$\ \io in $\pi$}\big).
\end{align*}

To define in \msonab the property \eqref{eq:unbounded formula}, one uses the negation of the condition from Lemma~\ref{lem:liminf}: it is equivalent to saying that $\pr\big[\ifi\ \io\Rightarrow\ (\liminf\ifi = \infty)\big]=1$. This means that the following formula is equivalent to saying that $\ifi$ is unbounded
\begin{align*}
\phiunbnd(\ifi)\eqdef\ & \exists \ifj\subseteq \ifi.\\
&\quad \phiset(\src_{\ifj},\tgt_{\ifj}) \wedge \\
&\quad \nabla \pi.\ \big(\text{$\src_{\ifi}$\ \io in $\pi$}\Leftrightarrow\text{$\src_{\ifj}$\ \io in $\pi$}\big) \wedge\\
&\quad \lnot \phiLfour(\ifj).
\end{align*}

Thus, using Lemma~\ref{lem:char}, a~characteristic of a~family of intervals is \msonab definable by the following formula:
\begin{align*}
\phichar(\src_\ifi, \tgt_\ifi, X)&\eqdef \phiset(\src_\ifi, \tgt_\ifi)\wedge\\
&\exists \ifj\subseteq \ifi.\\
&\quad \phiset(\src_{\ifj},\tgt_{\ifj}) \wedge \\
&\quad \phiunbnd(\ifj) \wedge \\
&\quad \nabla \pi.\ \big(\text{$X$\ \io in $\pi$}\Leftrightarrow\text{$\src_{\ifj}$\ \io in $\pi$}\big) \wedge\\
&\quad \forall \ifj'\subseteq\ifi.\\
&\qquad  (\phiset(\src_{\ifj},\tgt_{\ifj})\land \phiunbnd(\ifj))\Rightarrow \\
&\qquad \nabla \pi.\ \big(\text{$\src_{\ifj'}$\ \io in $\pi$}\Rightarrow\text{$\src_{\ifj}$\ \io in $\pi$}\big).
\end{align*}

\begin{remark}
\label{rem:bnd-def}
From that moment on we will represent families of intervals $\ifi$ as~triples $\src_\ifi, \tgt_\ifi, X_\ifi$, where $\phichar(\src_\ifi, \tgt_\ifi, X_\ifi)$ holds. Thanks to that representation, we have
\begin{align*}
  \pr\big[X_\ifi\ \io \iff \limsup\ifi=\infty\big]=1.
\end{align*}
Therefore, (up to a~set of branches of measure $0$) ``$\limsup\ifi(\pi)<\infty$'' is \msonab definable by the formula:
\begin{align*}
  &\phibnd(\ifi, \pi)\eqdef\\
  &\qquad\exists x\in\pi.\ \forall y\in\pi.\ (x<y)\Rightarrow y\notin X_\ifi.
\end{align*}
\end{remark}

Notice that Definition~\ref{def:precedes} is already stated as an~\mso property. Moreover, the relation between $x$ and $x'$ given by the function $\pre$ is also \mso definable directly from the definition. This leads to the conclusion that one can define in \mso that $\ifi_1$ precedes $\ifi_2$ and $\ifi'$ is the effect of applying the function $\pre$ (resp. $\suc$) to a~family of intervals $\ifi\subseteq\ifi_2$ (resp. $\ifi\subseteq\ifi_1$). Clearly the fact that $\ifi_1$ and $\ifi_2$ are isolated is also \mso definable using our encoding.

Remark~\ref{rem:bnd-def} immediately implies that the $(\ast)$ property in the proof of Lemma~\ref{lem:c1pre} in Appendix~\ref{app:lemc1pre} is \msonab definable---it is enough to replace each occurrence of $[\limsup\ifi<\infty]$ by~$\phibnd$.

Next, we investigate the properties from Section~\ref{sec:wrappings}. It is easy to see that the following formula defines that $\ifj$ wraps $\ifi$:
\begin{align*}
\phiwrp\eqdef&\ \phiset(\ifi)\land \phiset(\ifj)\land \phi_{\geq 1}(\ifj)\land \\
       &\ \forall x'.\ x'\in\src_\ifi \Leftrightarrow \exists x,y.\ \phiint(x,y,\src_\ifj,\tgt_\ifj)\\
       &\qquad\land x < x' < y, 
\end{align*}
where $\phi_{\geq 1}(\ifj)$ states that every interval in $\ifj$ has length at least 1. 

Further, Definition~\ref{def:tail-prec} is itself expressed in \mso. The same holds for the notion of \emph{extraction} and the order $\ifi_1\leq\ifi_2$. These observations give us sufficient background to study the $(\ast)$ property of Proposition~\ref{pro:constant}. The only part of this property that is not directly \msonab formalisable is~\eqref{eq:e-not-sim-e}. However, under the previous assumptions of the formula, $\ifk_1$ and $\ifk_2$ satisfy the conditions of Lemma~\ref{lem:c1pre}, and therefore~\eqref{eq:e-not-sim-e} is in fact \msonab definable.

Following the construction from the main body, observe that being \emph{sufficiently spaced} (see Definition~\ref{def:spaced}) is definable in \msonab. This is because the requirement $\pr\big[\ifi\ \io\ \Rightarrow (\liminf \ifi=\infty)\big]=1$ is just the negation of the first statement of Lemma~\ref{lem:liminf}, \ie~it is expressible by the formula $\lnot\phiLfour(\ifi)$. Thus, the $(\ast)$ property discussed in Section~\ref{ap:implicit wrappings} is also \msonab definable. Therefore, Theorem~\ref{thm:const-def} follows.


\end{document}
